\documentclass[11pt]{article} 
\usepackage{amsmath}
\usepackage{graphicx}
\usepackage{enumerate}
\usepackage{natbib}
\usepackage{url} 

\newcommand{\blind}{1}

\addtolength{\oddsidemargin}{-0.75in}%
\addtolength{\evensidemargin}{-1.25in}%
\addtolength{\textwidth}{1.5in}%
\addtolength{\textheight}{1.7in}%
\addtolength{\topmargin}{-1in}%

\usepackage[toc,page]{appendix}
\usepackage{float}

\usepackage{hyperref}
\hypersetup{
    colorlinks=true,
    linkcolor=blue,
    filecolor=magenta,      
    urlcolor=cyan,
    citecolor=blue
}
 
\urlstyle{same}

\usepackage{amssymb,amsbsy,amsfonts,amsmath,xspace,amsthm}
\usepackage{mathrsfs}
\usepackage{graphicx}

\theoremstyle{theorem}
\newtheorem{theorem}{Theorem}
\newtheorem{definition}{Definition}

\newtheorem{proposition}{Proposition}

\newtheorem{lemma}{Lemma}

\newtheorem{example}{Example}
\newtheorem{assumption}{Assumption}

\usepackage{texmacro_maths}
\usepackage[subrefformat = parens]{subcaption}

\newcommand{\edgedensity}{ \rho }
\newcommand{\similarity}{ \mathrm{sim} }
\newcommand{\lca}{ \mathrm{lca} }

\newcommand{\CH}{ \mathrm{CH} }

\newcommand{\hcC}{ \widehat{\mathcal C} }
\newcommand{\hcT}{ \widehat{\mathcal T} }
\newcommand{\tcC}{ \widetilde{\mathcal C} }

\newcommand{\symdiff}{ \, \Delta \, }

\newcommand{\ace}{ \mathrm{loss} }
\newcommand{\mc}{ \mathrm{sc} }

\newcommand{\Multi}{ \mathrm{Multi} }

\newcommand{\rebuttal}[1]{\textcolor{blue}{#1}}
\newenvironment{newtext}{\par \color{blue} }{\par}

\newcommand{\ba}{ \bar{a} }
\newcommand{\bp}{ \bar{p} }
\newcommand{\hzeta}{ \hat{\zeta} }

\newcommand{\tda}{ \mathrm{td} }
\newcommand{\bua}{ \mathrm{bu} }

\renewcommand{\hC}{ \widehat{C} }

\renewcommand{\hK}{ \widehat{K} }

\newcommand{\adversarial}{ \mathrm{adversarial} }

\newcommand{\uniform}{ \mathrm{uniform} }

\usepackage[bb=dsserif]{mathalpha}
\newcommand{\1}{ \mathbb{1} }

\begin{document}

\def\spacingset#1{\renewcommand{\baselinestretch}%
{#1}\small\normalsize} \spacingset{1}


\if1\blind
{
  \title{\bf When does bottom-up beat top-down in hierarchical community detection? }
  \author{Maximilien Dreveton, Daichi Kuroda, Matthias Grossglauser, Patrick Thiran \hspace{.2cm} \\
  \url{{maximilien.dreveton, daichi.kuroda, matthias.grossglauser, patrick.thiran}@epfl.ch} \\ 
    École Polytechnique Fédérale de Lausanne (EPFL)}
    \date{September 15, 2025}
  \maketitle
} \fi

\if0\blind
{
  \bigskip
  \bigskip
  \bigskip
  \begin{center}
    {\LARGE\bf When does bottom-up beat top-down in hierarchical community detection? }
\end{center}
  \medskip
} \fi

\bigskip
\begin{abstract}
 Hierarchical community detection consists in finding a tree of communities where deeper levels of the hierarchy reveal finer-grained structures. There are two main classes of algorithms for this task. Divisive (\textit{top-down}) algorithms recursively partition nodes into smaller communities until a stopping criterion indicates that no further splits are necessary. In contrast, agglomerative (\textit{bottom-up}) algorithms first identify the smallest community structures and then repeatedly merge the communities by using a \textit{linkage} method. In this work, we prove that a bottom-up algorithm recovers the hierarchy of a hierarchical stochastic block model (HSBM) when the average degree grows unbounded. We also establish the information-theoretic threshold for exact recovery at intermediate depths of the hierarchy and highlight its significance in understanding the limitations of top-down algorithms. Numerical experiments on both synthetic and real datasets demonstrate the superiority of bottom-up methods. In particular, a notable drawback of top-down algorithms is their tendency to produce dendrograms with inversions. These findings contribute to a better understanding of hierarchical clustering techniques and their applications in network analysis.

\end{abstract}

\noindent%
{\it Keywords:} Community detection; hierarchical clustering;  Stochastic Block Model;Agglomerative clustering.

\spacingset{1.3} 


\tableofcontents

\section{Introduction}

A system of pairwise interactions among entities can conveniently be represented by a graph, where the system entities are the nodes and the interactions are the edges. Data collected in such form is increasingly abundant in many disciplines, such as sociology, physics, economics, and biology~\cite{Newman2018_networksbook}. Finding community structures by grouping nodes with similar connection patterns into clusters is one of the most important statistical analysis tasks on networks~\cite{fortunato2010community,avrachenkov2022statistical}. Community structures are often hierarchical. For example, in a co-authorship network, we can partition the researchers, based on their primary discipline (such as mathematics, physics, computer science, etc); each of these fields can be further split into specific sub-disciplines. The sub-division of larger communities into smaller ones provides a finer division of the network. 

Hierarchical communities can naturally be inferred using \new{top-down} approaches, where the process begins by identifying the largest communities at the top of the hierarchy. These communities are then recursively decomposed into smaller sub-communities until a stopping rule indicates no further division is necessary. To identify the communities at the top of the hierarchy, one strategy is to progressively remove edges with the highest edge-betweenness centrality~\cite{girvan2002community} or with the lowest edge-clustering coefficient~\cite{radicchi2004defining}. Another strategy is to bi-partition the network by using spectral clustering ~\cite{dasgupta2006spectral,balakrishnan2011noise,li2022hierarchical}. Top-down algorithms, however, possess several limitations. First, any clustering errors initially made become locked in and propagate to the next rounds of subdivisions, potentially compromising the accuracy of the predicted hierarchy. Second, the recursive splittings can overlook valuable information by ignoring the edges between communities that have already been separated. As a result, these approaches might not fully capture the interconnections between communities across different depths of the hierarchy. 

In contrast, \new{agglomerative} (or \new{bottom-up}) algorithms take a different approach by constructing the hierarchy from the bottom upwards. These algorithms recursively merge smaller communities to form larger ones. Some bottom-up algorithms, such as the one proposed in~\cite{bonald2018hierarchical,pons2005computing,chang2011general}, generate a complete \new{dendrogram}. A dendrogram is a tree whose leaves are individual nodes, whose branches and internal nodes represent merged clusters. The length of every branch measures the similarity of its children. 

Comparing different hierarchical community detection (HCD) algorithms can be done by establishing theoretical guarantees of their performance on random graph models. The \new{hierarchical stochastic block model} (HSBM) is a general model of random graphs containing hierarchical communities. This model defines the hierarchical community structure as a rooted binary tree whose leaves correspond to \new{primitive communities}. Each node belongs to a primitive community, and the interactions between two nodes belonging to communities $a$ and $b$ depend only on the lowest common ancestor between $a$ and $b$ on the tree. 

We first establish that recovering the hierarchy of an HSBM is possible using a bottom-up algorithm when the average degree of the graph is only $\omega(1)$. Earlier studies required stronger degree growth conditions, such as $\Theta(N)$ in~\cite{balakrishnan2011noise}, $ \omega( N^{1/2} \log^{p} N )$ (with $p = 1/2$ in~\cite{cohen2017hierarchical} and $p = 2$ in~\cite{lyzinski2016community}), or $\omega( \log^{1+p} N )$ (with $p = 5$ for~\cite{dasgupta2006spectral} and $p > 1$ fixed for~\cite{li2022hierarchical}). We further discuss these results in Section~\ref{sec:discussion_previous_work}.

In the context of hierarchical communities, it is possible to study community recovery at different depths of the hierarchy. We rigorously establish the information-theoretic threshold for the \textit{exact recovery} of the communities at any intermediate depth of the hierarchy. Exact recovery is the strongest notion of recovery: it is the ability to fully recover the communities when the number of nodes~$N$ tends to~infinity. In particular, we show that exact recovery at a larger depth is more challenging than exact recovery at a smaller depth. We further discuss the implication of this information-theoretic threshold to the success of top-down algorithms in Section~\ref{sec:discussion_top-down_intermediate-depth}.

 In our numerical experiments, we employ a bottom-up algorithm that follows a two-step process. To infer the underlying primitive communities, we first apply a spectral algorithm on the graph's Bethe-Hessian matrix~\cite{saade2014spectral,dall2021unified}. Next, we use a bottom-up approach to build the hierarchy. To provide a comprehensive comparison, we evaluate this bottom-up algorithm against recursive spectral bi-partitioning~\cite{lei2020unified,lei2021consistency}, the most relevant top-down algorithm whose performance has been theoretically established and numerically validated. Our findings on synthetic data sets demonstrate that the bottom-up algorithm achieves exact recovery at intermediate depths, up to the information-theoretic thresholds, whereas recursive spectral bi-partitioning fails to do so. Furthermore, we show that the dendrogram produced by recursive spectral bi-partitioning suffers from \new{inversions}. Inversions occur when the algorithm incorrectly places a lower-depth cluster above a higher-depth cluster in the dendrogram, thus distorting the true hierarchical structure. Such inversions lead to misleading interpretations of the hierarchical relationships within the~data.

The paper is structured as follows. In Section~\ref{sec:HCD_presentation}, we describe top-down and bottom-up approaches for hierarchical community detection. The HSBM is defined in Section~\ref{sec:tree_recovery}, where we also derive conditions for recovering the hierarchy by the {linkage} procedure. Next, we investigate the exact recovery at intermediate depths of the hierarchy in Section~\ref{sec:intermediate_exact_recovery}. We discuss these results in light of the existing literature in Section~\ref{sec:discussion}. Finally, Section~\ref{section:numerical_results} is devoted to the numerical experiments.

\paragraph{Notations}

We denote by $[N]$ the set $\{1, \cdots, N \}$, by $\Ber(p)$ a Bernoulli random variable with parameter $p$. 
The Frobenius norm of a matrix is denoted $\| \cdot \|_F$. 
The \new{\Renyi divergence} of order $t \in (0,1)$ of a Bernoulli distribution $\Ber(p)$ from another distribution $\Ber(q)$ is defined as $\dren_t\left(\Ber(p) \| \Ber(q) \right) = \frac{1}{t-1} \log \left( (1-p)^t (1-q)^{1-t} + p^t q^{1-t} \right) $. 
For $t = 1/2$, we write $ \dren_{1/2}\left(\Ber(p), \Ber(q) \right) = -2\log \left( \sqrt{(1-p)(1-q)} + \sqrt{pq} \right)$. 

We focus on undirected graphs $G = (V,E)$ whose node set is $V = [N]$ and adjacency matrix $A = \left( A_{ij} \right) \in \{0,1\}^{N \times N}$. For a subset~$V_1$ of the node set $V$, we let $G[V_1]$ be the subgraph of $G$ induced by~$V_1$. 
We denote by $\cN_{\cT}$ (resp., $\cL_{\cT}$) the internal nodes (resp., the leaves) of a tree $\cT$. For an internal node $u$ of a rooted $\cT$, denote by $\cT[u]$ the sub-tree rooted at $u$. We denote~$\lca(u,v)$ the lowest common ancestor between two nodes $u,v \in \cT$.

\section{Hierarchical Community Detection}
\label{sec:HCD_presentation}

Many networks present a hierarchical community structure. The primitive communities at the bottom of the hierarchy are a collection of subsets $\cC = (C_1, \cdots, C_K)$ that partition the original node set $V$ into $K$ disjoint sets. These primitive communities are the leaves of a rooted tree $\cT$, which defines the hierarchical relationship between the communities. 
This section reviews two main strategies for \new{hierarchical community detection}~(HCD).

\subsection{Divisive (\textit{top-down}) Algorithms}

\new{Divisive (top-down)} HCD algorithms begin with one single community containing all the nodes. This community is recursively split until a selection rule indicates that no further splits are needed. This can be summarized as follows:
\begin{enumerate}
    \item apply a selection rule to decide if the community contains sub-communities. If no, stop; if yes, split into two communities;
    \item recursively repeat step 1 on each of the two sub-communities found. 
\end{enumerate}
Different choices for the stopping rule or the bi-partitioning algorithms have been explored~\cite{dasgupta2006spectral,balakrishnan2011noise,li2022hierarchical}. 

Each recursive splitting of the graph loses some information. For example, consider the two clusters $C_0$ and $C_1$ obtained after the first split. The next step splits $C_0$ (resp., $C_1$) into two subclusters $C_{00}$ and $C_{01}$ (resp., $C_{10}$ and $C_{11}$), based only on the induced sub-graph $G[C_0]$ (resp., $G[C_1]$). As a result, the clustering of $C_0$ does not take into account the edges from~$C_0$ to~$C_1$. If the edge densities between~$C_{00}$ and~$C_{1}$ are different from the edge densities between~$C_{01}$ and~$C_1$, then this valuable information is not used by a top-down algorithm. 

Furthermore, the resulting tree is unweighted. Although it is possible to compute similarities between pairs of predicted clusters (for example using the edge density between two clusters), these similarities are not guaranteed to increase with the depth of the dendrogram; inversions can occur. These inversions, as we will emphasize in the numerical section, lead to discrepancies in the hierarchical structure conveyed by the dendrogram.

\subsection{Agglomerative (\textit{bottom-up}) Algorithms}

\new{Agglomerative (bottom-up)} HCD algorithms construct a sequence of clusterings in an ascending manner, where the dendrogram is created from its leaves to its root. The initial clustering corresponds to the leaves of the dendrogram, and the hierarchy is progressively built up by iteratively merging the most similar clusters.

A first approach initially assigns each node to a separate community and then merge the clusters that minimize a distance metric~\cite{newman2004fast,pons2005computing,chang2011general,bonald2018hierarchical}. Although these methods generate a complete dendrogram, determining the depth in the hierarchy where the community structure becomes meaningful is an old problem~\cite{mojena1977hierarchical,gao2022selective}.

A second approach directly estimates the bottom clusters by using a flat (non-hierarchical) graph-clustering algorithm. These bottom clusters are then sequentially merged two by two, based on a similarity measure. 
This can be summarized as follows:
\begin{enumerate}
 \item apply a graph-clustering algorithm to find the bottom clusters;
 \item 
 \begin{enumerate}
   \item compute the similarity between all pairs of bottom clusters;
   \item merge the two clusters that are the most similar,\footnote{When measuring the similarity between pairs of clusters using edge density, an implicit assumption is that the clustering structure exhibits assortativity--meaning that nodes within the same community are more likely to be connected. However, if the community structure is disassortative, clusters with the lowest similarity are merged.} and update the similarities between this new cluster and the existing ones;
   \item repeat step (2b) until all clusters have been merged into a single one.
 \end{enumerate}
\end{enumerate}
The abundant literature on graph clustering provides numerous candidate algorithms for the first stage. We note that, in most practical applications, the true number of bottom clusters~$K$ is often unknown and needs to be inferred at this stage as well.  
The second stage is commonly called \new{linkage}. Various linkage variants exist, depending on the chosen similarity measures and update rules.  In this paper, we adopt the linkage variant used in~\cite{cohen2017hierarchical,cohen2019hierarchical}, employing edge density as the similarity measure between pairs of clusters. The edge density between two node sets $V_1, V_2 \subset V$ is defined by 
\begin{align}
\label{eq:def_edgedensity}
 \edgedensity \left( V_1, V_2 \right) \weq \frac{ w\left( V_1, V_2 \right) }{ |V_1| \cdot |V_2| }
 \quad \text{ where } \quad
 w\left( V_1, V_2 \right) \weq \sum_{i \in V_1} \sum_{j \in  V_2} A_{ij}.
\end{align}
Recomputing the similarities in Step 2b is done as follows. Let $\hC_1, \cdots, \hC_{\hK}$ be the clusters initially predicted by the flat graph-clustering algorithm. 
Suppose that at some step of the algorithm, clusters $\hC_{k_1}$ and $\hC_{k_2}$ are the most similar and hence are merged to give a new cluster $\hC_{k_1 \cup k_2 } = \hC_{k_1} \cup \hC_{k_2}$. We have for $\ell \not \in \{k_1,k_2\}$, 
\begin{align}
\label{eq:extension_edgedensity_set_of_clusters}
 \edgedensity \left( \hC_{ k_1 \cup k_2 }, \hC_{ \ell } \right) \weq 
 \frac{ \left| \hC_{k_1} \right| }{ \left| \hC_{k_1 \cup k_2 } \right| } \edgedensity \left( \hC_{k_1}, \hC_{\ell} \right) 
 +
 \frac{ \left| \hC_{k_2} \right| }{ \left| \hC_{k_1 \cup k_2 } \right| } \edgedensity \left( \hC_{k_2}, \hC_{\ell} \right).
\end{align}
Therefore, the edge density naturally defines an \new{average-linkage} procedure for merging the clusters identified through the initial flat clustering. In particular, this procedure guarantees that the resulting dendrogram will be free from inversions, thus ensuring a coherent hierarchical structure~\cite{murtagh2017algorithms}. Algorithm~\ref{algo:average_linkage} provides a concise summary of this~process. 

\begin{algorithm}[!ht]
\KwInput{Graph $G = (V,E)$, partition $\hcC$ of $V$. 
}
 For all $k,\ell \in \left[ \hK \right]$, compute $\edgedensity\left( \hC_k, \hC_\ell \right)$ as in~\eqref{eq:def_edgedensity}; \\
  \textbf{while} $\left| \hcC \right| \ge 2$ \textbf{do:}
 \begin{itemize}
   \item Let $\hcC \leftarrow \hcC \, \backslash \, \left\{ \hC_k, \hC_{\ell} \right\}$ where $\hC_k, \hC_\ell \in \argmax\limits_{C \ne C' \in \hcC } \edgedensity \left( C, C' \right)$;
   \item For any $C \in \hcC$ compute $\edgedensity \left( \hC_{k  \cup \ell}, C \right)$ as defined in~\eqref{eq:extension_edgedensity_set_of_clusters};
   \item Let $\hcC \leftarrow \hcC \cup \left\{ \hC_{k\cup \ell} \right\}$.
 \end{itemize}
\KwReturn{The tree $\hcT$ capturing the sequence of nested merges in the while-loop.}
\caption{Average-linkage.}
\label{algo:average_linkage}
\end{algorithm}

\section{Tree Recovery from the Bottom}
\label{sec:tree_recovery}

We study the asymptotic performance of Algorithm~\ref{algo:average_linkage} on a class of random graphs with hierarchical community structures. We define the model in Section~\ref{subsection:HSBM}. We prove that Algorithm~\ref{algo:average_linkage} recovers the hierarchical tree when the average degree of the graph grows unbounded in Section~\ref{section:tree_recovery_bottom}. Finally, we study the bounded degree regime in Section~\ref{section:tree_recovery_bounded_degree}.

\subsection{Hierarchical Stochastic Block Model}
\label{subsection:HSBM}

The hierarchical stochastic block model (HSBM) is a class of random graphs whose nodes are partitioned into latent hierarchical communities. Before defining this model formally, let us introduce some notations. Each node $u$ of a rooted binary tree $\cT$ is represented by a binary string as follows. The root is indexed by the empty string $\emptyset$. Each non-root node $u$ of the tree is labeled by the unique binary string $u = u_1u_2\cdots u_q\cdots$ that records the path from the root (\textit{i.e.,} $u_q = 1$ if step $q$ of the path is along the right branch of the split and $u_q = 0$ otherwise). The depth of node~$u$ is denoted by $|u|$ and coincides with its distance from the root. Finally, with this parametrization, the \new{lowest common ancestor} $\lca(u,v)$ of two nodes $u,v \in \cT$ is the \textit{longest common prefix} of the binary strings $u$ and $v$. This is also the common ancestor of $u$ and $v$ with the largest depth.

Here, we denote by $\cL_\cT$ the leaves of the tree~$\cT$. We will assign each node of the graph $G$ to one leaf of~$\cT$ and denote by $C_a$ the set of nodes assigned to the leaf $a$. This forms the \new{primitive communities}
$\cC = \left( C_{a} \right)_{a \in \cL_\cT}$. 
Any internal node~$u$ of the tree is associated with a \new{super-community} $C_u$ such that $C_u = \cup_{a \in \cL_{\cT[u]}} C_{a}$ where $\cL_{\cT[u]}$ denotes the leaves of the sub-tree of~$\cT$ rooted at $u$. In particular, we have $C_{\emptyset} = V$ and $C_v \subset C_{u}$ if $v$ is a child of $u$. 

We suppose that the probability $p_{ab}$ of having a link between two nodes belonging to the primitive communities $C_{a}$ and $C_{b}$ depends only on the lowest common ancestor of $a$ and~$b$. Hence, we denote the probability $p_{ab}$ by $p(\lca(a,b))$. In an assortative setting, we have $p(u) < p(v)$ if $v$ is a child of $u$. 

\begin{definition}
Let $N$ be a positive integer, $\cT$ a rooted binary tree with $|\cL_{\cT}| = K$ leaves, and $\pi = (\pi_a)_{a \in \cL_{\cT}}$ a probability vector. Let $p \colon \cT \to [0,1]$ be an increasing function, meaning that for any $u, v \in \cT$, if $v$ is a child of $u$ we have $p(u) < p(v)$. A \new{hierarchical stochastic block model} (HSBM) is a graph $G = (V,E)$ such that $ V = [N]$ and
\begin{enumerate}
 \item each node $i \in [N]$ is independently assigned to a community $C_{a}$ where $a$ is sampled from $[K]$ according to $\pi$; 
 \item two nodes $i \in C_{a}$ and $j \in C_{b}$ are connected with probability $p( \lca( a ,b ) )$. 
\end{enumerate}
\end{definition}

An important particular case of HSBM is the \new{binary tree SBM} (BTSBM), in which the tree~$\cT$ is full and balanced, and the probability of a link between two nodes in clusters $C_a$ and~$C_b$ depends only on the depth of $\lca(a,b)$, \textit{i.e.,} $p(\lca(a,b)) = p_{|\lca(a,b)|}$ for all $a,b \in \cL_{\cT}$. 
In particular, the number of communities of a BTSBM is $K = 2^{d_{\cT}}$, and assortativity implies that $p_0 < p_1 < \cdots < p_{d_\cT}$. We illustrate an HSBM and a BTSBM in Figure~\ref{fig:illustration_HBM_BTSBM}. 
Observe that the BTSBM defines the same model as in~\cite{li2022hierarchical}, while the HSBM is a natural extension where the tree is still binary but not necessarily full and complete.

\begin{figure}[!ht]
 \centering
 \begin{subfigure}[b]{0.4\textwidth}
  \includegraphics[width=\textwidth]{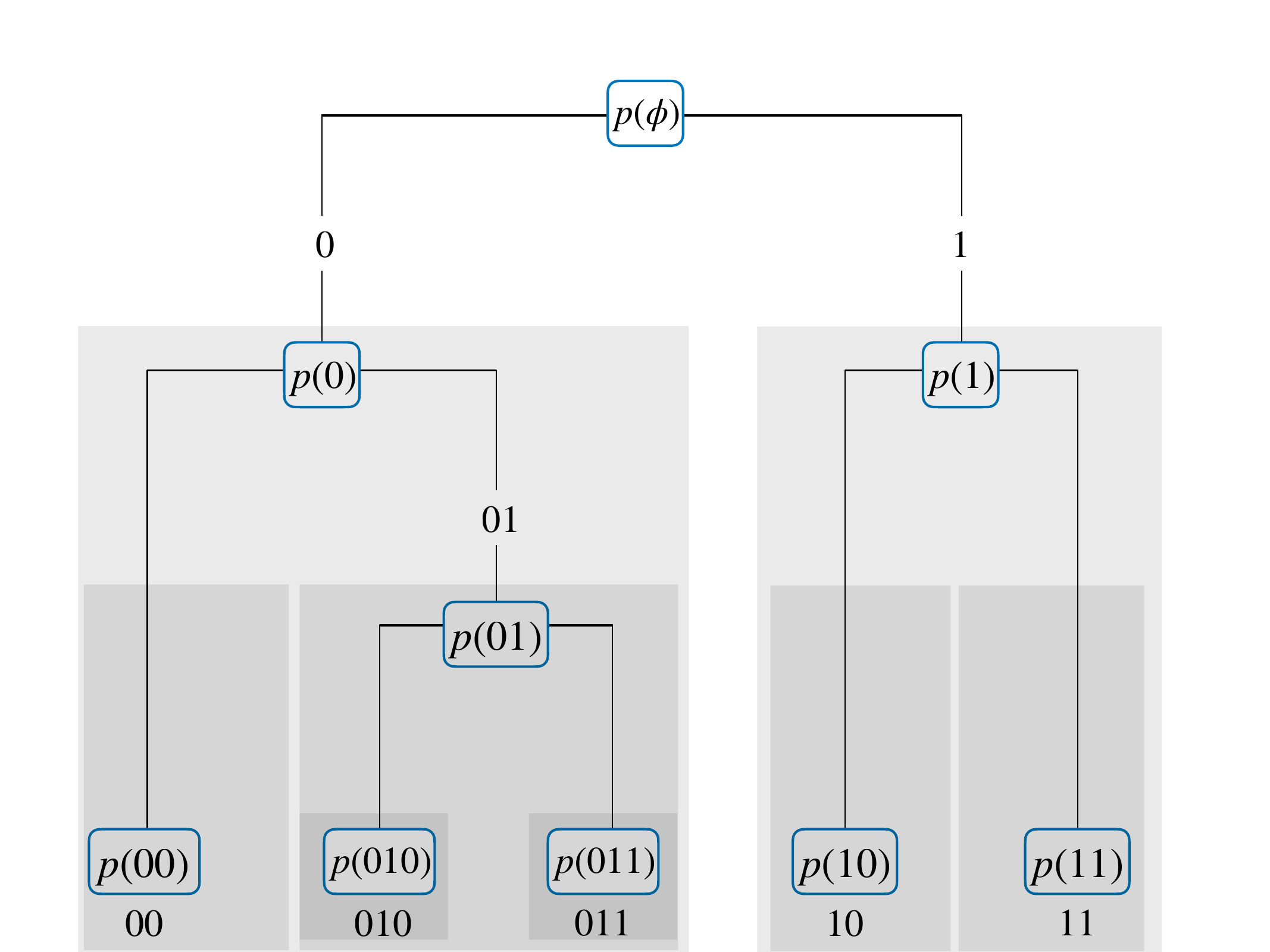}
  \caption{Hierarchical stochastic block model}
  \label{fig:illustration_HBM}
 \end{subfigure}
 \hfil
 \begin{subfigure}[b]{0.4\textwidth}
  \includegraphics[width=\textwidth]{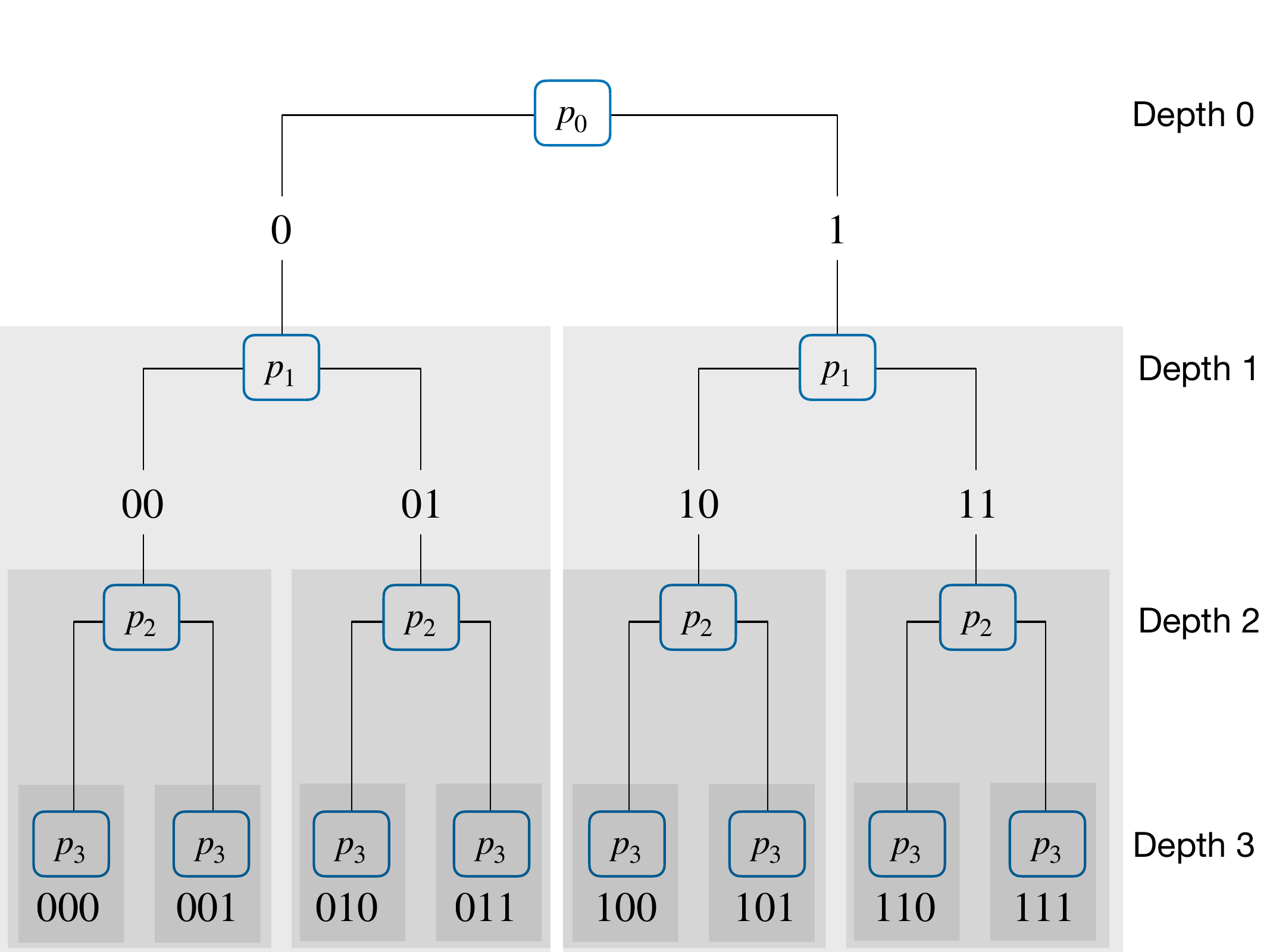}
  \caption{Binary tree SBM of depth 3}
  \label{fig:illustration_BTSBM}
 \end{subfigure}
 \caption{Examples of (a) an HSBM and (b) a BTSBM, with the binary string representation of each node. The link probabilities are $p(u)$ for the HSBM and $p_{|u|}$ for the BTSBM. The grey-colored rectangles represent the super-communities.} 
 \label{fig:illustration_HBM_BTSBM}
\end{figure}

In our results, we use the following assumptions.
\begin{assumption}[Fixed hierarchy]\label{assumption:fixed_quantites}$\cT$ is a rooted binary tree with $K$ leaves and is independent of $N$. The probability vector $\pi \in (0,1)^K$ is also independent of $N$ and satisfies $\min_{a \in K} \pi_a > 0$. 
\end{assumption}
\begin{assumption}[Asymptotic scaling]\label{assumption:scalings} The edge connection probabilities $p$ can be written as $p(t) = a(t) \delta_N$, where $(a(t))_{t \in \cT}$ are positive constants (independent of $N$).
\end{assumption}
Under Assumption~\ref{assumption:scalings}, the average degree is $\Theta(N \delta_N)$, and we call $\delta_N$ the sparsity factor.

\subsection{Tree Recovery with Growing Average Degree}
\label{section:tree_recovery_bottom}

In this section, we study the recovery of the hierarchical tree $\cT$ and the bottom communities $\cC = \left( C_{\ell_1}, \cdots, C_{\ell_K} \right)$ associated with the $K$ leaves of $\cT$. For an estimator $\hcC = \left( \hC_{1}, \cdots, \hC_K \right)$ of~$\cC$ verifying $|\cC| = |\hcC| = K$, the \new{number of mis-clustered nodes} is
\begin{align}
\label{eq:def_ace}
 \ace \left( \cC, \hcC \right) \weq \min_{\tau \in \cS_{[K]} } \sum_{ k=1 }^K \left| C_{\ell_k} \symdiff \hC_{\tau(k)} \right|, 
\end{align}
where $B \symdiff C = \left( B \cup C \right) \backslash \left( B \cap C \right)$ denotes the symmetric difference between two sets $B, C \subset V$. The minimum is taken over the symmetric group $\cS_{[K]}$ of all permutations of~$[K]$, because we recover the bottom communities only up to a global permutation of the community labels. We study sequences of networks indexed by the number of nodes $N$ and for which the interaction probabilities might depend on $N$. 
 An estimator $\hcC$ of $\cC$ achieves  \new{exact recovery} if $ \ace \left( \cC, \hcC \right) \asto 0 $, \new{almost exact recovery} if $N^{-1} \ace \left( \cC,\hcC \right) \pto 0$ and \new{weak recovery} if $\pr ( N^{-1} \ace \left( \cC,\hcC \right) < \|\pi\|^2_2 - \varepsilon)\to 0$ for all $\varepsilon > 0$.\footnote{The interpretation is as follows. An exact estimator makes no mistakes, while an almost exact estimator makes $o(N)$ mistakes. A weak estimator makes $O(N)$ mistakes but still outperforms a naive random guessing, which ignores the graph structure and classifies a node $i$ into community $a$ with probability $\pi_a$, independently of other nodes.}


The number of edges between two communities $C_a$ and $C_b$ is binomially distributed with mean $|C_a| \cdot |C_b| \cdot p_{ab}$. Therefore, if $ p_{ab} = \omega( N^{-2})$, this binomial random variable is concentrated around its mean and we have $\edgedensity( C_a, C_b ) = (1+o(1)) p_{ab}$. 
This suggests that, once the primitive communities are almost exactly recovered, the average-linkage procedure successfully recovers the tree from its leaves if the edge probability between different communities is $\omega \left( N^{-2} \right)$. The proof is more involved because (i) we compute $\edgedensity( \hC_a,\hC_b)$ and not $\edgedensity(C_a,C_b)$, and (ii) the estimator $\hcC$ is correlated with the graph structure. In particular, to establish that $\edgedensity( \hC_a,\hC_b) = (1+o(1)) p_{ab}$, we impose $p_{ab} = \omega(N^{-1})$.  

\begin{theorem}
\label{thm:performance_bottomUp}
 Consider an assortative HSBM. 
 \rebuttal{Suppose that Assumptions~\ref{assumption:fixed_quantites} and~\ref{assumption:scalings} hold,} with $N \delta_N = \omega( 1 )$. 
 Let $\hcC$ be an estimator of $\cC$, possibly correlated with the graph edges and such that $|\hcC| = |\cC|$. If $\hcC$ is almost exact, then Algorithm~\ref{algo:average_linkage} recovers~$\cT$ (starting from~$\hcC$). 
\end{theorem}

We prove Theorem~\ref{thm:performance_bottomUp} in Appendix~\ref{section:proof_thm:performance_bottomUp}. The assumption of the existence of an almost exact estimator $\hcC$ of $\cC$ is not limiting here. Indeed, under the assumptions of Theorem~\ref{thm:performance_bottomUp}, we can obtain such an estimator by the spectral algorithm of~\cite{yun2016optimal} or by the agnostic-degree-profiling algorithm of~\cite{abbe2015recovering}. 
Because the average degree is of the order $\Theta(N \delta_N)$, Theorem~\ref{thm:performance_bottomUp} ensures that Algorithm~\ref{algo:average_linkage} recovers the hierarchy when the average degree grows unbounded.\footnote{When $N \rho_N = \omega(1)$, the graph may be disconnected; however, only $o(N)$ nodes are excluded from the largest connected component, making almost exact recovery possible (but not exact recovery).} 

\subsection{Tree Recovery with Bounded Average Degree}
\label{section:tree_recovery_bounded_degree}

For simplicity, throughout this section, we assume that $\cT$ is a full and balanced tree, and the edge probabilities satisfy $p(t) = a(t) \delta_N$ where $(a(t))_{t \in \cT}$ are positive constants (independent of $N$),  with $N \delta_N = \Theta(1)$. This bounded average degree regime is more challenging, as no algorithm can achieve almost exact recovery.\footnote{
When $\delta_N = \Theta(1/N)$ and the average degree exceeds 1, there exists a single giant component containing $\Theta(N)$ vertices, while all other components are of size $o(N)$. Importantly, this giant component includes vertices from all $K$ communities; thus, there is no scenario in which two communities are entirely disconnected. Therefore, even in this extremely sparse regime, it remains possible to infer the community structure and estimate inter-community edge densities. We have clarified this point in the revised text to aid reader understanding.
}
However, when $N \delta_N \ge C$ where $C$ is a quantity depending only on $a(t)$ but independent of $N$, some algorithms achieve weak recovery~\cite{abbe2017community}. This leads to the following question: \textit{can average-linkage recover the tree, but starting with a weak estimator $\hcC$ of $\cC$ instead of an almost exact estimator?}

To answer this question, we must handle the fact that a weak estimator misclassifies $\Theta(N)$ nodes, as opposed to $o(N)$ for an almost exact estimator. Recall from the discussion before Theorem~\ref{thm:performance_bottomUp} that the correlation between $\hcC$ and the graph $G$ makes it challenging to study the concentration of $\edgedensity( \hC_a,\hC_b)$. As a result, we can no longer guarantee that $\edgedensity( \hC_a,\hC_b) = (1+o(1)) p_{ab}$ when $\hcC$ is only a weak estimator. To mitigate this issue, we split $G$ into two graphs $G^1$ and $G^2$ and proceed in two steps. We first apply the flat-clustering algorithm on $G^1$ to obtain~$\hcC^1$, and next estimate the edge density by 
 \begin{align}
 \label{eq:def_edge_density_graph_splitting}
 \edgedensity_2( \hC_a^1,\hC_b^1) \weq \frac{ \sum_{ \substack{ i \in \hC_a^1, j \in \hC_b^1 } } A_{ij}^2 }{ \left| \hC_a^1 \right| \cdot \left| \hC_b^1 \right| },
\end{align}
where $A^2$ is the adjacency matrix of $G^2$. We call this two-step technique \new{graph-splitting}.

 \begin{definition}[Graph-splitting] The graph-splitting of a graph $G = (V,E)$ with split-probability $\gamma$ generates two random graphs $G_1 = (V,E_1)$ and $G_2 = (V,E_2)$ having the same node set as $G$. The graph $G_1$ is formed by independently sampling each edge of $G$ with probability $\gamma$. The graph $G_2$ consists of the remaining edges that were not includesd in $G_1$,\textit{i.e.,} $E_2 = E \backslash E_1$. 
 \end{definition}

When $G$ is an HSBM with edge connection probabilities $p_{ab}$, the subgraphs $G_1$ and $G_2$ are also HSBMs with the same community structure as $G$, but with respective edge connection probabilities $\gamma p_{ab}$ and $(1-\gamma)p_{ab}$. As a result, graph-splitting provides an estimator $\hcC^1$ that is now independent of $G_2$, which solves the issue of correlation between $\hcC$ and $G$. However, this estimator still misclassifies $\Theta(N)$ nodes, and we make the following simplifying assumption regarding the misclassified nodes. 

 \begin{assumption}
 \label{assumption:estimator}
Assume that there exists a weakly consistent estimator $\hcC^1 = (\hC_1, \cdots, \hC_K)$, that is an estimate of $\cC = (C_1, \cdots, C_K)$, and let $O_{ab} = C_a \cap \hcC^1_b$ be the nodes in cluster $C_a$ but assigned to cluster~$\hcC^1_b$. We suppose that 
  \[ 
   \frac{ O_{ab} }{ N/K } \weq (1+o(1)) \, \zeta(|\lca(a,b)|),
  \]
  where $\zeta$ is a non-decreasing function of the depth satisfying $\zeta(d-1)+\zeta(d-2) < 2 \zeta(d)$. 
  
 \end{assumption}
 Because all communities of a BTSBM have expected size $N/K$, the quantity $\frac{ O_{ab} }{ N/K }$ represents the fraction of vertices in cluster $a$ that are clustered in cluster $b$. Hence, the assumption that $\zeta$ is non-decreasing implies that a node is more likely to be clustered into a cluster closer to its true cluster rather than one further away in the tree. Because $a=b$ if and only if $|\lca(a,b)|= d$, the quantity $\zeta(d)$ is the probability that a node is correctly clustered. In contrast, $\zeta(h)$ for $0 \le h \le d-1$ is the probability that a node from cluster $a$ is misclustered into cluster $b$ whose least common ancestor with $a$ is at depth $h$. The assumption $\zeta(d-1)+\zeta(d-2) < 2 \zeta(d)$ ensures that the proportion of misclustered nodes is not too large, and is automatically verified if $\zeta$ is strictly increasing. The following theorem establishes that Algorithm~\ref{algo:average_linkage} successfully recovers the hierarchy, even when starting with an estimator that makes $\Theta(N)$ mistakes.

 \begin{theorem}
\label{prop:bottomup_robustness}
 Let $G$ be a BTSBM. Suppose that Assumptions~\ref{assumption:fixed_quantites} and~\ref{assumption:scalings} hold and $N \delta_N = \Theta(1)$. Let $G_1$ and $G_2$ be the two graphs obtained from $G$ by graph-splitting with any $\gamma$ satisfying $\gamma = 1 - o(1)$ and $1 - \gamma = \omega(1/N)$. Let $\hcC^1$ be a clustering obtained from $G^1$ and satisfying Assumption~\ref{assumption:estimator}. Then Algorithm~\ref{algo:average_linkage} correctly recovers $\cT$ (using the edge density defined in Equation~\eqref{eq:def_edge_density_graph_splitting}). 
\end{theorem}
The assumption $1 - \gamma = \omega(1/N)$ ensures that the edge density of $G_2$ is $\omega(N^{-2})$, and hence that $G_2$ contains enough edges so that $\edgedensity_2( \hC_a^1,\hC_b^1)$ defined in Equation~\eqref{eq:def_edge_density_graph_splitting} concentrates around its expectation. The condition $\gamma = 1 - o(1)$ ensures that if weak recovery is possible from $G$, then it also remains possible from $G_1$: in other words, graph-splitting does not result in a loss of information.   

\begin{example}
\label{exa:uniform_mistakes}
 Denote by $\eta \ge 0$ the proportion of misclustered nodes and suppose that $\zeta(0) = \cdots = \zeta(d-1)$. This imposes that $\zeta(d) = 1-\eta$ and $\zeta(h) = \eta / (K-1)$ for $0 \le h \le d-1$. The condition $\zeta(d-1)+\zeta(d-2) < 2 \zeta(d)$ in Assumption~\ref{assumption:estimator} is equivalent to $\eta < 1 - 1 / K$.  
\end{example}

\section{Exact Recovery at Intermediate Depths}
\label{sec:intermediate_exact_recovery}

We now focus on the exact recovery of communities at intermediate depths within the hierarchy and establish tight information-theoretic conditions determining the feasibility of achieving exact recovery at a specific depth. It is intuitively expected that recovering super-communities of smaller depths should be comparatively easier than the recovery at larger depths. Therefore, we expect scenarios where the exact recovery of the primitive communities might be unattainable, but where the exact recovery of super-communities at intermediate depths remains achievable. To provide context, we initially recapitulate key findings on exact recovery in non-hierarchical stochastic block models (SBMs) in Section~\ref{subsec:CH}. We then present our main results in Section~\ref{subsec:exact_intermediate}, which gives the precise conditions for exact recovery at intermediate depths. Although the results of this section do not specifically involve hierarchical algorithms, we discuss in Section~\ref{sec:discussion_top-down_intermediate-depth} implications of exact recovery at intermediate depths for the theoretical guarantees of top-down algorithms.

\subsection{Chernoff-Hellinger Divergence}
\label{subsec:CH}
The hardness of separating nodes that belong to a primitive community $a \in \cL_{\cT}$ from nodes in community $b \in \cL_{\cT}$ is quantified by the \new{Chernoff-Hellinger divergence},\footnote{\cite{abbe2015community} originally defined the Chernoff-Hellinger divergence as 
$$\CH_{AS}( a, b ) \weq \sup_{ t \in (0,1) } (1-t) \sum_{ c \in \cL_{\cT} } \pi_c \left( tp_{ac} + (1-t) p_{bc} - p_{ac}^t p_{bc}^{1-t} \right).$$ When $p_{ab} =o(1)$, we have 
$(1-t) \dren_t\left( \Ber\left( p_{ac} \right) \| \Ber\left( p_{bc} \right) \right) = (1+o(1)) \left( tp_{ac} + (1-t) p_{bc} - p_{ac}^t p_{bc}^{1-t}  \right)$. Hence, our definition of $\CH(a,b)$ in~\eqref{eq:CH_definition} coincides with the original definition $\CH_{AS}(a,b)$ (up to second-order terms). 
} denoted by $\CH (a,b) = \CH( a, b, \pi, p, \cT )$ and defined by
\begin{align}
\label{eq:CH_definition}
   \CH( a, b ) \weq \sup_{ t \in (0,1) } (1-t) \sum_{ c \in \cL_{\cT} } \pi_c \dren_t\left( \Ber\left( p_{ac} \right) \| \Ber\left( p_{bc} \right) \right),
\end{align}
where $\dren_t$ denotes the \Renyi divergence of order $t$. 
The key quantity assessing the possibility or impossibility of exact recovery in an HSBM is the minimal {Chernoff-Hellinger divergence} between all pairs of clusters. We denote it by $I = I( \pi, p, \cT)$, and it is defined by
\begin{align}
\label{eq:divergence_HSBM}
 I \weq \min_{ \substack{ a,b \in [K] \\ a \ne b } } \ \CH( a, b ).
\end{align}
The condition for achieving exact recovery of the communities displays a phase transition. More precisely, when $p_{ab} = \Theta( \log N / N)$, exact recovery of the primitive communities is possible if $\lim \frac{N I}{\log N} > 1$ and is impossible when $\lim \frac{N I}{\log N} < 1$~\cite{abbe2015community,abbe2015recovering}. 

\begin{example}
\label{exa:full_recovery_BTSBM}
  Consider a BTSBM with $K=2^d$ communities and $\pi_a = 1/K$ for all $a \in [K]$. Suppose that for all~$t \in [d] \colon p_{t} = a_t \log N / N$, with $a_t$ positive constants. Simple computations (see for example~\cite[Lemma~6.6]{lei2020unified}) yield that 
  \begin{align*}
    I \weq \frac1K \drenh \left( \Ber(p_{d}) , \Ber(p_{d-1}) \right) 
    \weq \frac{1+o(1)}{K} \left( \sqrt{ a_d } - \sqrt{ a_{d-1} } \right)^2 \, \frac{\log N}{N}.
  \end{align*}
  Exact recovery of $\cC$ is possible if $ \left( \sqrt{ a_d } - \sqrt{ a_{d-1} } \right)^2 > K $. This condition involves only~$a_d$ and $a_{d-1}$, but not $a_{d-2}, a_{d-3}, \cdots, a_0$.
\end{example}

\subsection{Exact recovery at Intermediate Depths}
\label{subsec:exact_intermediate}

Although the exact recovery of the primitive communities is feasible when the quantity~$I$, as defined in Equation~\eqref{eq:divergence_HSBM}, exceeds the threshold of $\log N / N$, this insight alone does not provide the requirements for achieving exact recovery of the super-communities at a smaller depth. In this section, we answer this question by determining the information-theoretic threshold for achieving the exact recovery of super-communities at a specific depth. 

We say that a node $u$ of $\cT$ is at depth $q$ if its distance from the root is $q$, or if $u$ is a leaf whose distance from the root is less than $q$. The second condition  is only needed if the binary tree is not perfect, as we highlight in Example~\ref{exa:mega_communities}. We denote by $\cS_q$ (resp., $\cS_{\le q}$) the set of nodes at depth $q$ (resp., at a depth less than or equal to $q$), \textit{i.e.,}
\begin{align*}
 \cS_q \weq \Big\{ u \in \cT \colon \big( |u| = q \big) \text{ or } \big( |u| < q \text{ and } u \text{ is a leaf} \big) \Big\} \quad \text{ and } \quad \cS_{\le q} \weq \bigcup\limits_{r \le q} \cS_r.
\end{align*}
The set of \new{super-communities at depth $q$} is hence
\begin{align}
\label{eq:def_supercommunities}
 \mc\left( q, \cC, \cT \right) \weq \left\{ C_u \right\}_{ u \in \cS_q } \quad \text{ with } \quad C_u \weq \bigcup_{ a \in \cL_{\cT} \colon a_{1 : q } = u } C_{a}. 
\end{align}
\begin{example}
 \label{exa:mega_communities} For the tree of Figure~\ref{fig:illustration_HBM} we have
 \begin{align}
 \label{exa:mc_HBM}
 \mc\left( q, \cC, \cT \right) \weq 
 \begin{cases}
  \{ C_0, C_1 \} & \text{ if } q = 1, \\
  \{ C_{00}, C_{01}, C_{10}, C_{11} \} & \text{ if } q = 2, \\
  \{ C_{00}, C_{010}, C_{011}, C_{10}, C_{11} \} & \text{ if } q = 3.
 \end{cases}
 \end{align}
 Because the tree of Figure~\ref{fig:illustration_HBM} is not full and complete, the community $C_{10}$ is present at depth $2$ and at depth $3$. In contrast, for the tree of Figure~\ref{fig:illustration_BTSBM}, we have
 \begin{align*}
 \mc\left( q, \cC, \cT \right) \weq 
 \begin{cases}
  \{ C_0, C_1 \} & \text{ if } q = 1, \\
  \{ C_{00}, C_{01}, C_{10}, C_{11} \} & \text{ if } q = 2, \\
  \{ C_{000}, C_{001}, C_{010}, C_{011}, C_{100}, C_{101}, C_{110}, C_{111} \} & \text{ if } q = 3.
 \end{cases}
 \end{align*}
\end{example}

The recovery of super-communities at depth $q$ can be affected by mistakes made at depth $q' > q$. For example, if we consider the super-communities given in~\eqref{exa:mc_HBM}, then the recovery at depth $2$ of $C_{01}$ can be exact, even if nodes belonging to~$C_{010}$ are mistakenly classified in $C_{011}$. But, this is not the case if nodes in~$C_{010}$ are mistakenly classified in~$C_{00}$. 
As a result, the exact recovery of the super-communities at depth~$2$ might still be achievable even if the communities at depth~$3$ cannot be recovered, as long as the errors occur within a super-community. This can occur, for instance, if $C_{010}$ and $C_{011}$ are very hard to separate, whereas $C_{010}$, $C_{00}$, $C_{10}$ and $C_{11}$ are easy to separate. We recall from Section~\ref{section:tree_recovery_bottom} that the hardness to distinguish two communities is expressed in terms of a Chernoff-Hellinger divergence~\eqref{eq:CH_definition}. 
The difficulty in separating the primitive communities that do not belong to the same super-community at depth $q$ is quantified by the minimum Chernoff-Hellinger divergence taken across all pairs of primitive communities that do not belong to the same super-community at depth $q$. This is the quantity $I_q = I(q, \pi, p, \cT)$ defined as
\begin{align}
\label{eq:divergence_BTSBM_intermediateLevels}
 I_q \weq \min_{ \substack{ a \ne b \in \cL_{\cT} \\ \lca(a, b) \in \cS_{\le q-1} } } \CH\left( a, b \right),
\end{align}
where the condition $\lca(a, b) \in \cS_{\le q-1}$ ensures that the lowest common ancestor of $a$ and $b$ has a depth less than or equal to $q-1$. When $q = d$, the minimum in Equation~\eqref{eq:divergence_BTSBM_intermediateLevels} is taken over all pairs of primitive communities, and the divergence $I_d$ is equal to the divergence~$I$ defined in~\eqref{eq:divergence_HSBM}. 
Theorem~\ref{thm:exact_recovery_intermediate_levels} states the condition for recovering the communities at depth~$q$.

\begin{theorem}
\label{thm:exact_recovery_intermediate_levels}
 Let $G$ be an HSBM  
 \rebuttal{and suppose Assumption~\ref{assumption:fixed_quantites} and~\ref{assumption:scalings} hold.} 
 Let $q \in \{1, \cdots, d_{\cT} \}$ and denote by $I_q$ the quantity defined in~\eqref{eq:divergence_BTSBM_intermediateLevels}. 
The following holds:
\begin{enumerate}[(i)]
 \item exact recovery of the super-communities at depth~$q$ is impossible if $\limsup \frac{N I_q}{\log N} < 1 $; 
 \item \label{thm:exact_recovery_intermediate_levels_positive_statement} if $\liminf \frac{N I_q}{\log N} > 1$, 
 then exact recovery of the super-communities at depth~$q$ is possible.
\end{enumerate}
\end{theorem}

Because $I_q$ is the minimum taken over all $\cS_{\le q-1}$, the quantity $I_q$ is non-increasing in~$q$. This is the reason why exact recovery at a lower intermediate depth $q'$ is easier than recovery at a higher depth $q > q'$. 
We prove Theorem~\ref{thm:exact_recovery_intermediate_levels} in Appendix~\ref{appendix:proof_thm:exact_recovery_intermediate_levels}. Although the quantity~$I_q$ defined in~\eqref{eq:divergence_BTSBM_intermediateLevels} is generally hard to simplify, the following lemma provides a simple expression for~BTSBMs.

\begin{lemma}
\label{lemma:expression_Iq_BTSBM}
 For a BTSBM with $K=2^d$ balanced communities ($\pi_a = 1/K$ for all $a \in \cL_\cT$), the minimum Chernoff-Hellinger divergence defined in Equation~\eqref{eq:divergence_BTSBM_intermediateLevels} is
 \begin{align*}
  I_q \weq \frac1K  \left( \dren_{1/2}( \Ber(p_{q-1}), \Ber( p_d ) ) + \sum_{ k=1 }^{d-q} 2^{k-1} \dren_{1/2} \left( \Ber(p_{q-1}), \Ber( p_{d-k} ) \right) \right).
 \end{align*}
\end{lemma}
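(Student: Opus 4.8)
The plan is to (i) show that for a balanced BTSBM the Chernoff--Hellinger divergence $\CH(a,b)$ depends on the pair of leaves $a,b$ only through the depth $h := |\lca(a,b)|$ of their lowest common ancestor, and to evaluate it in closed form; (ii) rewrite $I_q$ as the minimum of this closed form over $h \in \{0,\dots,q-1\}$; and (iii) show that this minimum is attained at $h = q-1$.

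For step (i), I would fix $a \ne b$ with $u := \lca(a,b)$ at depth $h$ and write $u0, u1$ for the children of $u$, with $a$ a descendant of $u0$ and $b$ a descendant of $u1$. For any leaf $c$ outside $\cT[u]$ one has $\lca(a,c) = \lca(u,c) = \lca(b,c)$, hence $p_{ac} = p_{bc}$ and the corresponding summand in~\eqref{eq:CH_definition} vanishes, so only the $2^{d-h}$ leaves of $\cT[u]$ contribute. For a leaf $c$ below $u0$ we have $p_{bc} = p_h$ while $p_{ac} = p_{|\lca(a,c)|}$, and since $a$ lies in $\cT[u0]$, a perfect binary tree of depth $d-h-1$, exactly one such $c$ (namely $c=a$) has $|\lca(a,c)| = d$ and exactly $2^{k-1}$ of them have $|\lca(a,c)| = d-k$ for each $1 \le k \le d-h-1$; by the symmetry of the tree and of $\pi$, the leaves below $u1$ contribute the mirror terms with $p_{ac}$ and $p_{bc}$ interchanged. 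Writing $F_t(p,q) := (1-t)\,\dren_t(\Ber(p)\,\|\,\Ber(q)) = -\log\!\big((1-p)^t(1-q)^{1-t} + p^t q^{1-t}\big)$, the objective of the supremum in~\eqref{eq:CH_definition} for this pair therefore equals $\tfrac1K\sum_{m=h+1}^{d} N_m\big(F_t(p_m,p_h) + F_t(p_h,p_m)\big)$, with $N_d = 1$ and $N_m = 2^{d-m-1}$ for $h+1 \le m \le d-1$. Since $F_t(p,q) = F_{1-t}(q,p)$, this is invariant under $t \mapsto 1-t$, and since $t \mapsto F_t(p,q)$ is concave (log-convexity of $t \mapsto (1-p)^t(1-q)^{1-t} + p^t q^{1-t}$), a concave function symmetric about $\tfrac12$ is maximized at $t=\tfrac12$. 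Evaluating there and using $F_{1/2}(p,q) = F_{1/2}(q,p) = \tfrac12 \dren_{1/2}(\Ber(p),\Ber(q))$ gives $\CH(a,b) = \varphi(h)$, where
\[
\varphi(h) \ := \ \frac1K\left(\dren_{1/2}(\Ber(p_h),\Ber(p_d)) + \sum_{k=1}^{d-h-1} 2^{k-1}\, \dren_{1/2}(\Ber(p_h),\Ber(p_{d-k}))\right),
\]
which depends on $(a,b)$ only through $h$; for $h = d-1$ this reproduces the value of $I = I_d$ recorded in Example~\ref{exa:full_recovery_BTSBM}, and the computation mirrors~\cite[Lemma~6.6]{lei2020unified}.

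For step (ii), since the condition $\lca(a,b) \in \cS_{\le q-1}$ is exactly $|\lca(a,b)| \le q-1$ and every depth $0 \le h \le q-1$ ($\le d-1$) is realized by some pair of leaves, we get $I_q = \min_{0 \le h \le q-1} \varphi(h)$. For step (iii), it suffices to show that $\varphi$ is non-increasing on $\{0,\dots,d-1\}$; then $I_q = \varphi(q-1)$, and substituting $h = q-1$ (so that $d-h-1 = d-q$) in the display above, together with the symmetry of $\dren_{1/2}$, is precisely the claimed formula. Comparing $\varphi(h)$ with $\varphi(h+1)$ term by term, the numerical coefficient of the divergence at level $m$ is the same in both for every $m \in \{h+2,\dots,d\}$, while $\varphi(h)$ carries one extra term at level $m = h+1$ with coefficient $2^{d-h-2}$; hence $K\big(\varphi(h) - \varphi(h+1)\big) = 2^{d-h-2}\dren_{1/2}(\Ber(p_h),\Ber(p_{h+1})) + \sum_{m=h+2}^{d} c_m \big(\dren_{1/2}(\Ber(p_h),\Ber(p_m)) - \dren_{1/2}(\Ber(p_{h+1}),\Ber(p_m))\big)$ with $c_m > 0$. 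The first summand is nonnegative, and each bracketed difference is nonnegative because, for fixed $p_m$, the map $y \mapsto \dren_{1/2}(\Ber(y),\Ber(p_m)) = -2\log\!\big(\sqrt{(1-y)(1-p_m)} + \sqrt{y p_m}\big)$ is non-increasing on $[0,p_m]$ (the Hellinger affinity inside it is increasing in $y$ there), while assortativity gives $p_h < p_{h+1} < p_m$ for every $m \ge h+2$. Thus $\varphi(h) \ge \varphi(h+1)$, and $I_q = \varphi(q-1)$.

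The main obstacle is step (i): carrying out the leaf-count carefully and then recognizing that, although the R\'enyi divergence is not symmetric, the balanced tree renders the $t$-objective symmetric about $\tfrac12$, so that the supremum collapses to the Bhattacharyya/Hellinger value at $t=\tfrac12$. Steps (ii)--(iii) are then routine, relying only on the elementary monotonicity of the Hellinger affinity in its arguments.
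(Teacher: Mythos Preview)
Your proposal is correct and follows essentially the same route as the paper's own proof: restrict the sum to leaves under $u=\lca(a,b)$, count them by the depth of their common ancestor with $a$ (respectively $b$), use the identity $(1-t)\dren_t(P\|Q)=t\dren_{1-t}(Q\|P)$ together with concavity to pin the supremum at $t=\tfrac12$, and then argue monotonicity in $h$ via assortativity. Your step~(iii) is in fact spelled out more carefully than in the paper, which simply asserts $\dren_{1/2}(p_s,p_\ell)>\dren_{1/2}(p_{s+1},p_\ell)$ without separately noting the extra positive term at level $h+1$.
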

Lemma~\ref{lemma:expression_Iq_BTSBM} shows that when $q = d$, we have $I_{ d } = K^{-1} \dren_{1/2} \left( \Ber(p_{d-1}), \Ber(p_d) \right) $, and hence we recover the divergence $I$ defined in~\eqref{eq:divergence_HSBM} (see also Example~\ref{exa:full_recovery_BTSBM}).

\section{Discussion}
\label{sec:discussion}

\subsection{Previous Work on Exact Recovery in HSBM}
\label{sec:discussion_previous_work}

Early works by Dasgupta et al.~\cite{dasgupta2006spectral} establish conditions for the exact recovery by {top-down} algorithms in an HSBM. Nonetheless, the recovery is ensured only in relatively dense regimes (specifically, average degree at least $\log^6 N$), and the algorithm requires an unspecified choice of hyper-parameters. Balakrishnan et al.~\cite{balakrishnan2011noise} show that a simple recursive spectral bi-partitioning algorithm recovers the hierarchical communities in a class of hierarchically structured weighted networks, where the weighted interactions are perturbed by sub-Gaussian noise. Nonetheless, the conditions require again a dense regime (\cite[Assumption 1]{balakrishnan2011noise} states that all weights are strictly positive, hence the average degree scales as $\Theta(N)$), and the proposed algorithm has no stopping criterion. 
Recent work by Li et al.~\cite{li2022hierarchical} demonstrates that the same recursive spectral bi-partitioning algorithm exactly recovers the communities, when the average degree grows as $\Omega( \log^{2+\epsilon} N )$ for some $\epsilon >0$ (this condition can be further relaxed to $\Omega( \log N )$ by a refined analysis~\cite{lei2020unified}). The analysis of~\cite{li2022hierarchical} also allows for an unbounded number of communities and provides a consistent stopping criterion. 

The \textit{linkage++} algorithm proposed by Cohen-Addad et al.~\cite{cohen2017hierarchical,cohen2019hierarchical} is a {bottom-up} algorithm that first estimates $K$ primitive communities by SVD and then successively merges them by using a {linkage} procedure. As SVD requires the number of clusters $K$ as an input, the authors propose to run the algorithms for every $K = 2, \cdots, O(\log N)$ and to choose the optimal $K$ that leads to the hierarchical tree with the smallest Dasgupta-cost~\cite{dasgupta2016cost}. Moreover, their analysis requires a dense setting in which the average degree grows as $\Omega( \sqrt{N \log N} )$. 

Finally, in a recent paper, Fang and Rohe introduce the T-Stochastic Graph model \cite{fang2023t}. This model captures hierarchical structures in networks by defining connection probabilities based on distances within an unrooted and potentially non-binary tree $T$. To reconstruct the latent hierarchy, the authors propose a bottom-up algorithm that integrates spectral clustering with the Neighbor-Joining algorithm. Unlike average linkage, which greedily merges the closest clusters, Neighbor-Joining iteratively selects pairs of clusters that minimize a criterion incorporating both pairwise distances and their average distances to all other nodes. Their key theoretical result, Theorem 4.3, establishes asymptotic recovery of the latent hierarchy when the average degree of the graph is $\Omega(\log^{11.1} n)$. 


\subsection{Top-Down HCD and Exact Recovery at Intermediate Depth}
\label{sec:discussion_top-down_intermediate-depth}

 Exact recovery of communities at intermediate hierarchical depths is essential to the success of top-down algorithms, as current theoretical analyses \cite{li2022hierarchical,lei2020unified} assume perfect bi-partitioning at each of the first $\ell$ recursive steps. This requires exact recovery of communities from depth $0$ to depth $\ell$ in the hierarchy, which is only achievable above the information-theoretic threshold for exact recovery at depth $\ell$. Because exact recovery requires graph connectivity, the average degree must be $\Omega(\log N)$. In the following section, we show that the information-theoretic threshold is less stringent than the conditions in \cite{li2022hierarchical,lei2020unified}. Although there may be room to improve on the conditions of \cite{li2022hierarchical,lei2020unified}, current proof techniques cannot guarantee hierarchical recovery when the average degree is $o(\log N)$.

 This raises the following question: \textit{Can we establish theoretical guarantees for top-down algorithms when the average degree grows slower than $\log N$?} Current proof techniques are constrained by their requirement that all the $\ell$ first recursive steps must be error-free. This requirement ensures that the divisive algorithm recurse with HSBMs at each step. Indeed, consider an HSBM $G=(V,E)$ where a bi-partitioning algorithm produces partition $\hcC = (\hC_1, \hC_2)$ of $V$, yielding the subgraphs $\hG_1 = G[\hC_1]$ and $\hG_2 = G[\hC_2]$. When $\hcC$ exactly recovers depth-1 communities, we have either $\hC_1 = C_1$ and $\hC_2 = C_1$, or $\hC_1 = C_1$ and $\hC_2 = C_2$, ensuring that $\hG_1$ and $\hG_2$ are HSBMs. 
 
 This property no longer holds when $\hcC$ contains errors. Misclustered nodes create dependencies between edges in $\hG_1$ (resp., in $\hG_2$), as their assignments are determined by the bi-partitioning algorithm applied to $G$, so that $\hG_1$ and $\hG_2$ might bot be HSBMs any more. These correlations are complex to analyze, but we can bypass this difficulty and reframe the problem by viewing $\hG_1$ and $\hG_2$ as two perturbed HSBMs, where the perturbations result from connectivity modifications made by an adversary over a subset of nodes, whose size is upper-bounded by the number of misclustering errors. This formulation aligns with recent work on clustering robustness under adversarial perturbations~\cite{cai2015robust,liu2022minimax,ding2023reaching}, and could strengthen the theoretical guarantees for top-down algorithms. Furthermore, this adversarial framework explains why top-down algorithms using spectral clustering—a method known to be sensitive to arbitrary perturbations—are suboptimal.

\subsection{Previous Work on Exact Recovery at Intermediate Depth}

Sufficient conditions for the exact recovery of super-communities in BTSBMs by recursive spectral bi-partitioning algorithms were established in~\cite{li2022hierarchical}, under the assumptions that all connection probabilities have the same scaling factor $\delta_N$ (\textit{i.e.,} $p(t) = \Theta ( \delta_N )$ for all $t \in \cT$) verifying $\delta_N = \Omega(\log^{2+\epsilon} N / N)$. The result was later refined in~\cite{lei2020unified} to assume only that $\delta_N = \Omega(\log N / N)$
and in~\cite{lei2021consistency} to enable different scaling factors at different depths of the hierarchy. 
To compare these results with Theorem~\ref{thm:exact_recovery_intermediate_levels}, let us consider a BTSBM with $p(t) = a_{|t|} \log N / N$ where $a_{|t|}$ is constant independent of $N$. The average connection probability in a super-community of depth $q$ is then $\ba_q \log N / N$ where $\ba_q = \frac{1}{2^{d-q} }\left( a_d + \sum_{k=1}^{d-q} 2^{k-1} a_{d-k} \right)$. Then \cite[Theorem~6.5]{lei2020unified} states that the {recursive spectral bi-partitioning} algorithm exactly recovers all the super-communities up to depth $\ell$ if
\begin{align}
\label{eq:condition_intermediate_levels_top_down}
\left( \sqrt{ \ba_q} - \sqrt{a_{q-1}} \right)^2 > 2^q \quad \text{ for all } q \in [ \ell ].
\end{align}
The condition $\left( \sqrt{ \ba_q} - \sqrt{a_{q-1}} \right)^2 > 2^q $ is the exact recovery threshold of a symmetric SBM with $2^q$ communities and intra-community (resp., inter-community) link probabilities $\ba_q \log N / N$ (resp., $a_{q-1} \log N / N$). Although the depth~$q$ of the hierarchy is composed of $2^q$ super-communities with average intra-connection probability $\ba_q \log N / N$ and inter-connection probability $a_{q-1} \log N / N$, Condition~\eqref{eq:condition_intermediate_levels_top_down} does not match the information-theoretic threshold derived in Theorem~\ref{thm:exact_recovery_intermediate_levels}. 
Indeed, the structure of a super-community at depth $q$ is not an \Erdos-\Renyi random graph with connection probability $\ba_q \log N / N$, but is instead an SBM with $2^{d-q}$ primitive communities.

For all $q \in [d]$, let us define $J_q^{ \tda }$ and $J_q^{ \bua }$ by
\begin{align}
\label{eq:def_Jqtd}
 J_q^{ \tda } & \weq \frac{1}{2^d} \left( \sqrt{ a_d + \sum_{k=1}^{ d-q } 2^{k-1} a_{d-k}  } - \sqrt{ 2^{d-q} a_{q-1} } \right)^2,  \\
 \label{eq:def_Jqbua}
 J_q^{ \bua } & \weq \frac{1}{2^d} \left( \left( \sqrt{ a_{q-1} } - \sqrt{ a_d } \right)^2 + \sum_{ k=1 }^{d-q} 2^{k-1} \left( \sqrt{ a_{q-1} } - \sqrt{ a_{d-k} } \right)^2 \right).
\end{align}
Condition~\eqref{eq:condition_intermediate_levels_top_down} for exact recovery of the super-communities up to depth $\ell$ is equivalent~to
\begin{align}
\label{eq:conditions_intermediate_exact_recovery_top-down}
 \min_{ q \in [\ell] } J_q^{ \tda } > 1, 
\end{align}
whereas the corresponding condition for a bottom-up algorithm given by Theorem~\ref{thm:exact_recovery_intermediate_levels}\eqref{thm:exact_recovery_intermediate_levels_positive_statement} and Lemma~\ref{lemma:expression_Iq_BTSBM} can be recast~as
\begin{align}
\label{eq:conditions_intermediate_exact_recovery_bottom-up}
J_\ell^{ \bua } > 1.
\end{align}

Let us compare~\eqref{eq:conditions_intermediate_exact_recovery_top-down} and~\eqref{eq:conditions_intermediate_exact_recovery_bottom-up}. First, notice that for some choices of $a_0, \cdots, a_d$, the function $q \mapsto J_q^{ \tda }$ given in~\eqref{eq:def_Jqtd} is not monotonically non-increasing,\footnote{Let $d=3$. For $(a_0,a_1,a_2,a_3) = (2.2, 2.5, 3, 25)$ we have $(J_1^{\tda}, J_2^{\tda}, J_3^{\tda}) = (0.96, 1.17, 1.33)$, but for $(a_0,a_1,a_2,a_3) = (3,9,15,21)$ we have $(J_1^{\tda}, J_2^{\tda}, J_3^{\tda}) = (1.89, 0.39, 0.06)$. Finally, for $(a_0,a_1,a_2,a_3) = (2.2,2.4,4,22)$ we have $(J_1^{\tda}, J_2^{\tda}, J_3^{\tda}) = (0.85,1.02,0.90)$. Thus, the quantities $J_1^{\tda}, J_2^{\tda}, J_3^{\tda}$ can be increasing, decreasing or interlacing.} which contradicts the intuitive fact that the recovery at larger depths is harder than the recovery at smaller depths. Moreover, although $J_d^\bua = J_d^{\tda}$, Lemma~\ref{lemma:bottom_up_vs_top_down} (deferred in Appendix~\ref{subsection:proof_bottom_up_vs_top_down}) shows that $J_q^{ \bua } > J_q^{ \tda }$ for all $q \in [d-1]$, hence Condition~\eqref{eq:conditions_intermediate_exact_recovery_top-down} is strictly more restrictive than Condition~\eqref{eq:conditions_intermediate_exact_recovery_bottom-up}. 
It is not clear whether the results of~\cite{lei2020unified} can be further improved as exact recovery by {spectral bi-partitioning} requires entry-wise concentration of the eigenvectors, which is established using sophisticated $\ell_{2 \to \infty}$ perturbation bounds.

\section{Numerical Results}
\label{section:numerical_results}

In this section, we investigate the empirical performance of different HCD strategies on synthetic and real networks. More precisely, we compare Algorithm~\ref{algo:average_linkage} (bottom-up) to the recursive bi-partitioning algorithm of~\cite{li2022hierarchical} (top-down). To obtain the flat clustering input for Algorithm~\ref{algo:average_linkage}, we use spectral clustering with the Bethe-Hessian matrix, as proposed in~\cite{dall2021unified}. This spectral algorithm does not require prior knowledge of the number of communities $K$ and has been shown to perform well on both synthetic and real datasets.\footnote{The code for this algorithm is available online at \url{https://lorenzodallamico.github.io/codes/}.} Moreover, recent theoretical work has established that the Bethe-Hessian matrix can be used to recover both the number of communities and the communities~\citep{stephan2024community}. 
Our code is available on GitHub at \url{https://github.com/daichikuroda/bottom_up_HCD}.

\subsection{Synthetic Data Sets}

We compare bottom-up and top-down approaches on synthetic data sets by computing the \new{accuracy at different depths}. We define accuracy at depth $q$ as $1 - \ace \left(\mc(q,\cC,\cT), \mc(q,\hcC,\hcT) \right) / N$, where $\ace$ is given by~\eqref{eq:def_ace} and $\mc(q,\cC,\cT)$ (resp., $\mc(q,\hcC,\hcT)$) are the ground truth (resp., predicted) super-communities defined in~\eqref{eq:def_supercommunities}.

\subsubsection{Binary Tree SBMs}
\label{subsection:experiments_BTSBMs}
We first generate synthetic BTSBMs of depth $d$, where the probability of an edge between two nodes whose lowest common ancestor has depth $k$ is $p_k = a_k \log N / N$. We compare the accuracy of {top-down} and {bottom-up} at each depth. We show in Figure~\ref{fig:performance_algos_balanced_BTSBM} the results obtained on BTSBMs of depth $3$, with $400$ nodes in each primitive community (thus $N = 2^3 \times 400 = 3200$ nodes in total). We let $a_0 = 40$, $a_3 = 100$, and the values of $a_1$ and~$a_2$ vary in the range ($a_0, a_3$), with the condition $a_1 < a_2$. Solid lines in each panel show the exact recovery threshold of the given method at different depths. We observe the strong alignment between the theoretical guarantees and the results of the numerical~simulations.

\begin{figure}[!ht]
 \centering
 \includegraphics[width=0.9\textwidth]{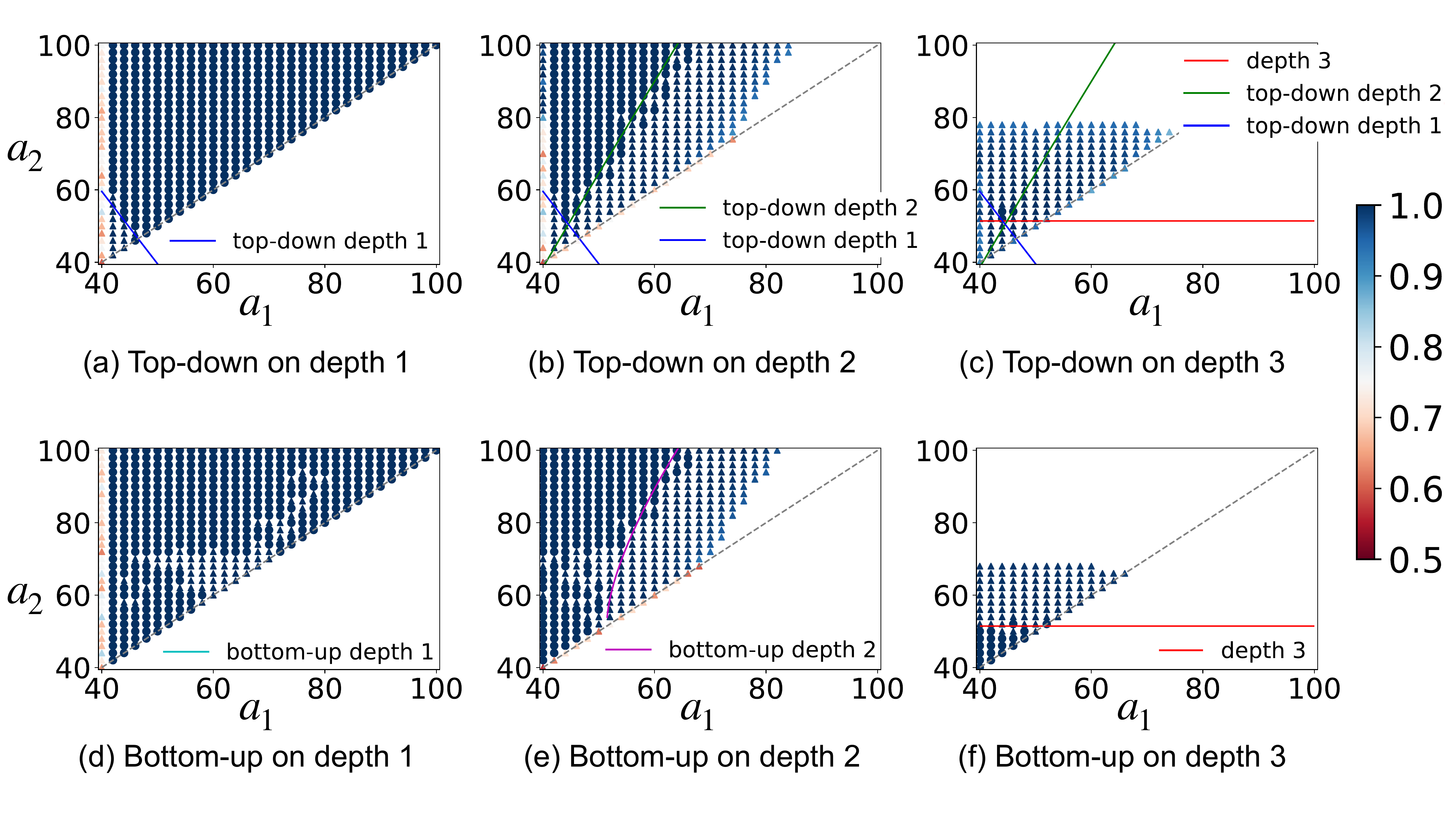}
 \caption{Performance of {bottom-up} and {top-down} algorithms on BTSBMs of depth~3, $ N = 3200$ nodes, and interaction probabilities $p_k = a_k \log N / N$, where $a_0 = 40$ and $a_3 = 100$, as a function of $a_1$ and $a_2$. We vary $a_1 \le a_2$ from~$a_0$ to~$a_3$.
 The empirical performance of the algorithms is measured by the accuracy at each depth, given by the color scale (results are averaged over 10 realizations). 
 Large circles represent exact recovery (\textit{i.e.,} perfect accuracy on each of the 10 runs), and small crosses represent a non-exact recovery. 
 The colored solid lines delimit the theoretical exact recovery thresholds for each algorithm on the various depths (given by Equations~\eqref{eq:conditions_intermediate_exact_recovery_top-down}-\eqref{eq:conditions_intermediate_exact_recovery_bottom-up}); for a given depth $q$, these equations provide a single condition for bottom-up, but $q$ conditions for top-down. At depths~1 and~2, the regimes where exact recovery can be achieved are the areas above the solid line(s). At depth~3, the area lies below the threshold drawn by the red line (and above the blue and green lines for top-down; this area forms a small triangle).  
 }
 \label{fig:performance_algos_balanced_BTSBM}
\end{figure}

\subsubsection{Robustness of Linkage to Misclustering Errors}
\label{subsec:robustness_numerical}
Finally, we evaluate the performance of Algorithm~\ref{algo:average_linkage} when the number of misclassified nodes increases. We generate BTSBMs with full and balanced trees of depth $3$ with 200 nodes in each community ($N=1600$ in total), and  interaction probabilities given by $p_k = 0.08 \beta^{3-k}$, where $\beta$ is a parameter varying between $0$ and $1$. At the two extremes, $\beta = 0$ leads to a graph with $K$ communities disconnected from each other, while $\beta=1$ leads to an \Erdos-\Renyi graph without community structure. 

Figures~\ref{fig:bounded_confusion1} and~\ref{fig:bounded_confusion2} show two confusion matrices, which are $K$-by-$K$ matrices where each element $(i,j)$ is $|C_i \cap \hC_{\tau^*(j)}|$, for two different average degree. We observe that mislabeled nodes are typically assigned to a community close to their true one. For instance, a misclassified node from community 0 may be incorrectly clustered into community 1 or 2, but it is highly unlikely to be placed in community 7. To reinforce this observation and further evaluate the validity of Assumption~\ref{assumption:estimator}, we estimate the probability of misclassification at level~$\ell \in [K]$, corresponding to the quantity $\zeta(\ell)$ in Assumption~\ref{assumption:estimator}, by
\begin{align*}
  \hzeta(\ell) \weq \frac{1}{N} \sum_{k=1}^K \sum_{m=k}^K \1\{|\lca(C_k, C_m)| = \ell \} \cdot \left|C_{k} \cap \hC_{\tau^*(m)}\right|,
\end{align*}
where $\tau^*= \argmin_{\tau \in \cS_{[K]} } \sum_{ k=1 }^K \left| C_{k} \symdiff \hC_{\tau(k)} \right|$ is the optimal permutation of cluster labels. Figure~\ref{fig:bounded_hzeta} shows that, for any given $\beta$, we have $\hzeta(0) < \hzeta(1) < \hzeta(2) < \hzeta(3)$, showing that~$\hzeta$ is non-decreasing with $\ell$. This observation aligns with Assumption~\ref{assumption:estimator}. 
Finally, Figure~\ref{fig:bounded_treeRecovery} shows the tree recovery success rates with and without graph-splitting. A successful tree recovery occurs when the following condition is satisfied:
\begin{align*}
  \prod_{k=1}^K \prod_{m=k+1}^K \1\left\{|\lca(C_k, C_m)| \weq |\lca(\cC_{\tau^*(k)}, \cC_{\tau^*(m)})| \right\}.
\end{align*}
The legend indicates the value of $\gamma$ used for graph splitting, which represents the proportion of edges allocated to estimating bottom community labels via the flat clustering algorithm, while the remaining edges are used to construct the hierarchy with Algorithm~\ref{algo:average_linkage}. We observe that Algorithm~\ref{algo:average_linkage} maintains a high tree recovery rate even when the estimated bottom clusters contain many misclustered nodes. 
 When the expected degree is~$5$, Figure~\ref{fig:bounded_hzeta} shows that $\hzeta(d) = \hzeta(3) \approx 0.6$, meaning that approximately 40\% of the nodes are misclustered. Nevertheless, Figure~\ref{fig:bounded_treeRecovery} indicates that Algorithm~\ref{algo:average_linkage} achieves a tree recovery success rate of around 50\%. This demonstrates the robustness of Algorithm~\ref{algo:average_linkage} to a high proportion of misclustered nodes. Furthermore, while we introduced graph splitting for the theoretical analysis, our empirical results show that it may not be needed in practice.

\begin{newtext}
\begin{figure}[!ht]
\centering
  \begin{subfigure}[b]{0.4\textwidth}
  \centering
  \includegraphics[width=\textwidth]{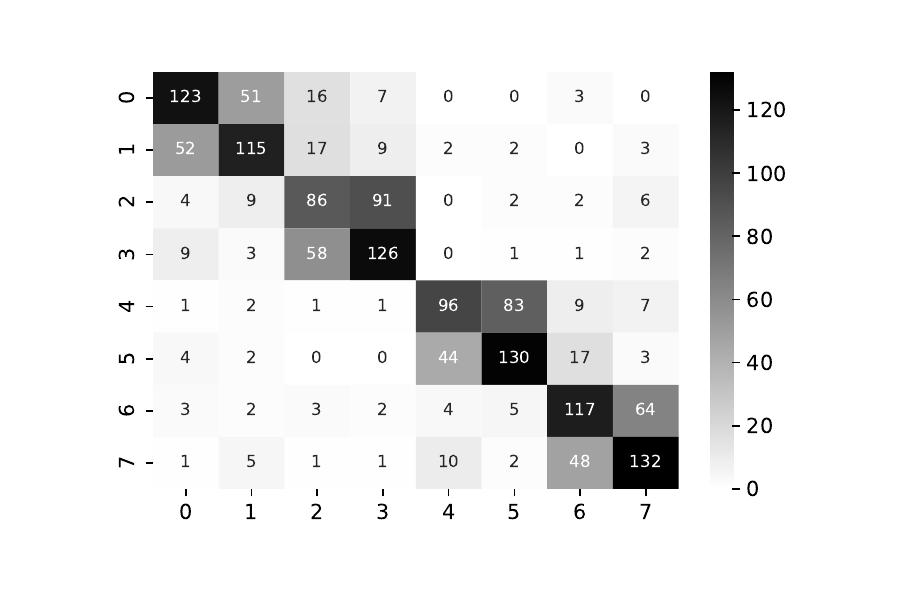}
  \caption{Average degree $5$}
  \label{fig:bounded_confusion1}
 \end{subfigure}
 \hfil
\begin{subfigure}[b]{0.4\textwidth}
  \centering
  \includegraphics[width=1\textwidth]{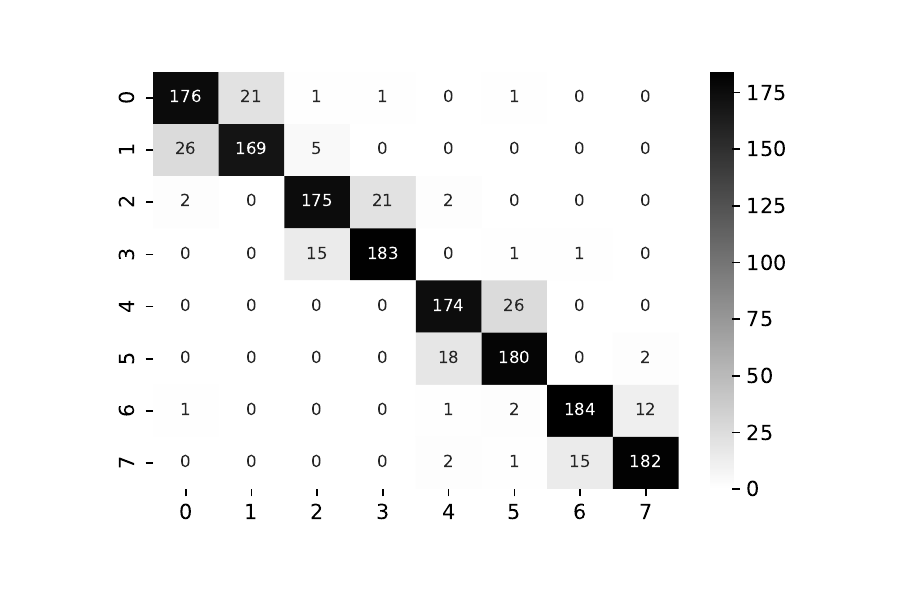}
  \caption{Average degree 10}
  \label{fig:bounded_confusion2}
 \end{subfigure}
\hfil
\begin{subfigure}[b]{0.4\textwidth}
\centering
     \includegraphics[width=\textwidth]{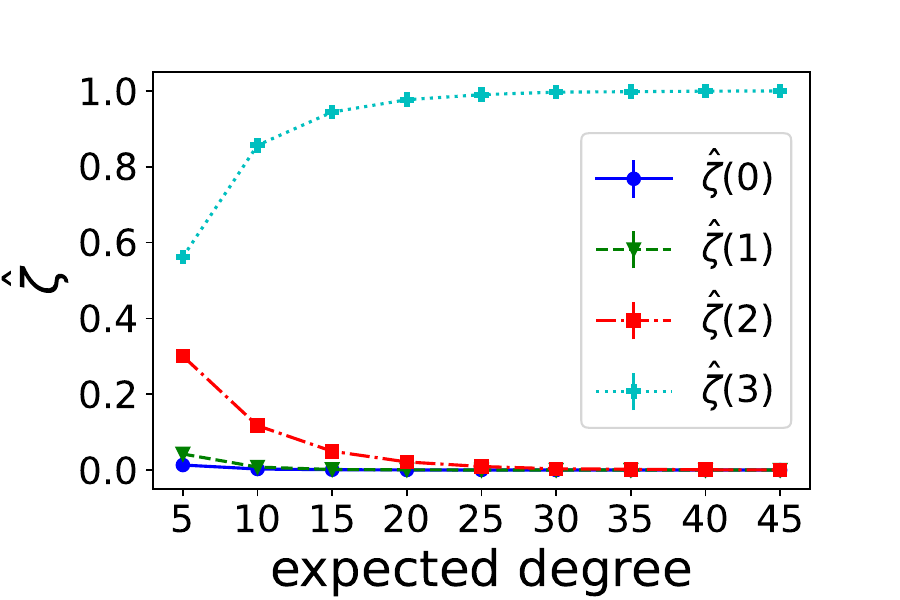}
     \caption{Types of errors}
       \label{fig:bounded_hzeta}
 \end{subfigure} \hfil 
\begin{subfigure}[b]{0.4\textwidth}
\centering
  \includegraphics[width=\textwidth]{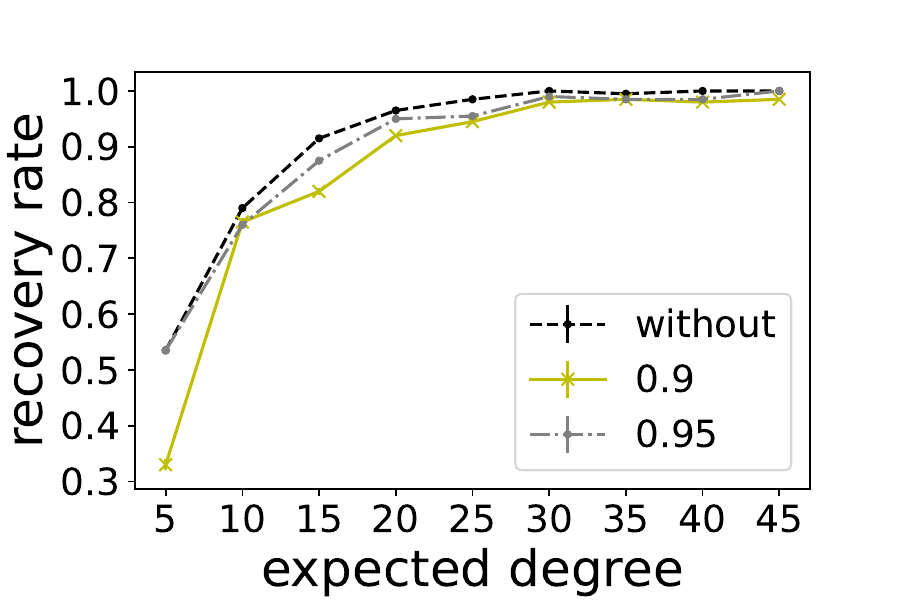}
  \caption{Tree recovery rate
  }
  \label{fig:bounded_treeRecovery}
 \end{subfigure}
 \caption{BTSBMs with depth 3, $N = 1600$, and $\beta = 0.3$. Figures~\ref{fig:bounded_confusion1} and~\ref{fig:bounded_confusion2} show two confusion matrices when the expected degree equals 5 and 10. 
 Figure~\ref{fig:bounded_hzeta} shows the evolution of $\zeta$ as a function of the expected degree.
 Figure~\ref{fig:bounded_treeRecovery} shows the tree recovery success rate with and without graph splitting. 
 Results of Figures~\ref{fig:bounded_hzeta} and~\ref{fig:bounded_treeRecovery} are averaged over 200~realizations.}
\label{fig:bounded_degree}
\end{figure}
\end{newtext}

\subsection{Real Datasets}

\subsubsection{High-School Contact Dataset}

 We evaluate HCD algorithms on a dataset of face-to-face interactions among 327 high school students from Lycée Thiers in Marseilles, France~\cite{Mastrandrea_Fournet_Barrat_2015} (\url{http://www.sociopatterns.org/}). The network consists of 9 class-based communities, with weighted edges representing proximity encounters over five days. These classes also correspond to four academic specializations: mathematics \& physics (MP), physics \& chemistry (PC), engineering (PSI), and biology (BIO). The MP and BIO groups contain three sub-classes each, while PC has two and PSI one. The hierarchical structure should reflect specialties at higher depths and individual classes at lower depths.

The results of both HCD algorithms are shown in Figure~\ref{fig:results_highschool}. The bottom-up algorithm consistently detects 31 communities, while the {top-down} algorithm predicts an average of 8.93 (8 communities in 7 runs, 9 in 93 runs). Both align well with the ground truth, with adjusted mutual information (AMI) scores of 0.938 for bottom-up\footnote{AMI is computed after merging the 31 bottom-up communities into 9 super-communities.} and 0.945 for top-down. Notably, bottom-up recovers both class and specialization structures while also detecting smaller subgroups, likely representing groups of friend inside a same class, revealing a richer hierarchical structure than the ground truth itself.

\begin{figure}[!ht]
 \centering
 \includegraphics[width=0.8\textwidth]{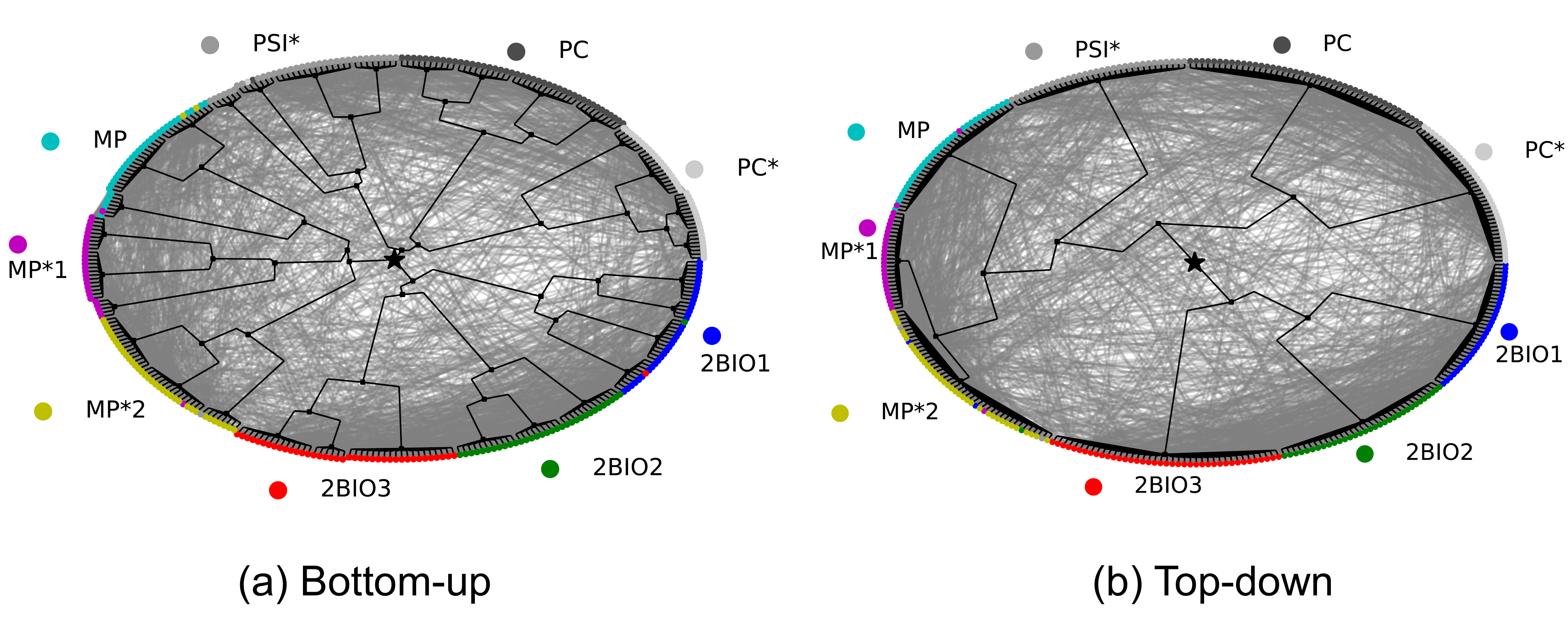}
 \caption{{Bottom-up} and {top-down} algorithms on the {high school} data set. Nodes correspond to the students, colors to the true classes, and edges of the graph are in grey. The hierarchical tree is drawn in black, and its root is marked by a star symbol.}
 \label{fig:results_highschool}
\end{figure}

\subsubsection{Power Grid}
We next consider the power grid of continental Europe from the map of the Union for the Coordination of Transmission of Electricity (UCTE). We use the same data set that was previously used for HCD~\cite{schaub2012markov}. Figures~\ref{fig:bottom_up_power_grid} and~\ref{fig:top_dwon_power_grid} show the outputs of the bottom-up and top-down, respectively. We observe a higher correlation to geographical positions in the output of bottom-up. Moreover, the dendrogram obtained by top-down shows a significant amount of inversions, contrary to the one obtained by bottom-up. 

\begin{figure}[!ht]
\centering
\begin{subfigure}{0.32\textwidth}
    \includegraphics[width=\textwidth]{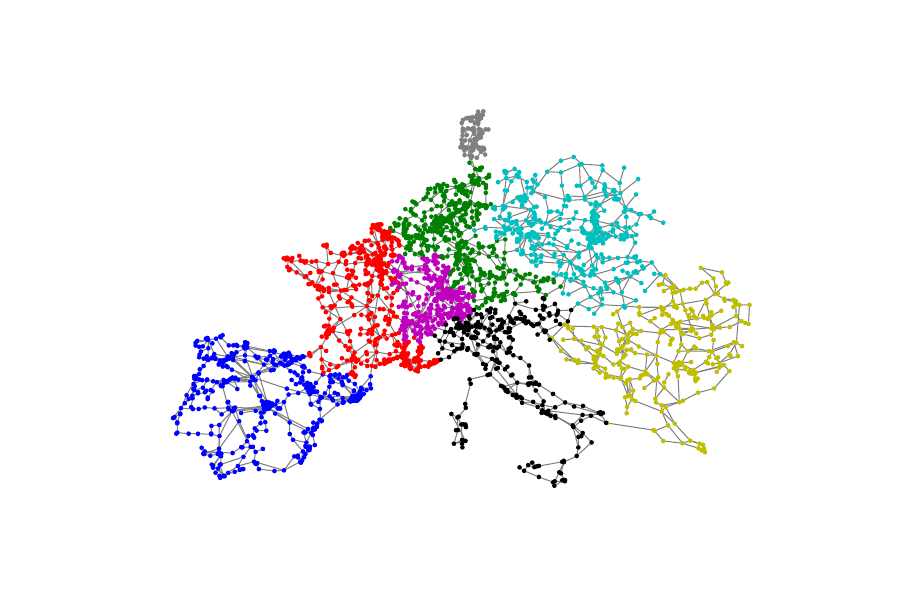}
    \caption{8 communities.}
\end{subfigure}    
\hfil
\begin{subfigure}{0.32\textwidth}
    \includegraphics[width=\textwidth]{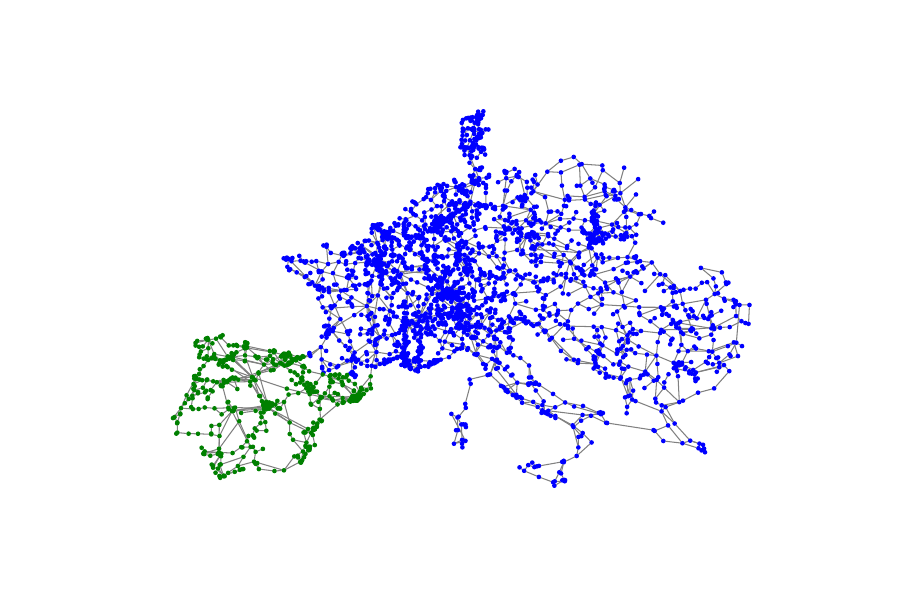}
    \caption{2 communities.}
\end{subfigure}
\hfil
\begin{subfigure}{0.32\textwidth}
    \includegraphics[width=\textwidth]{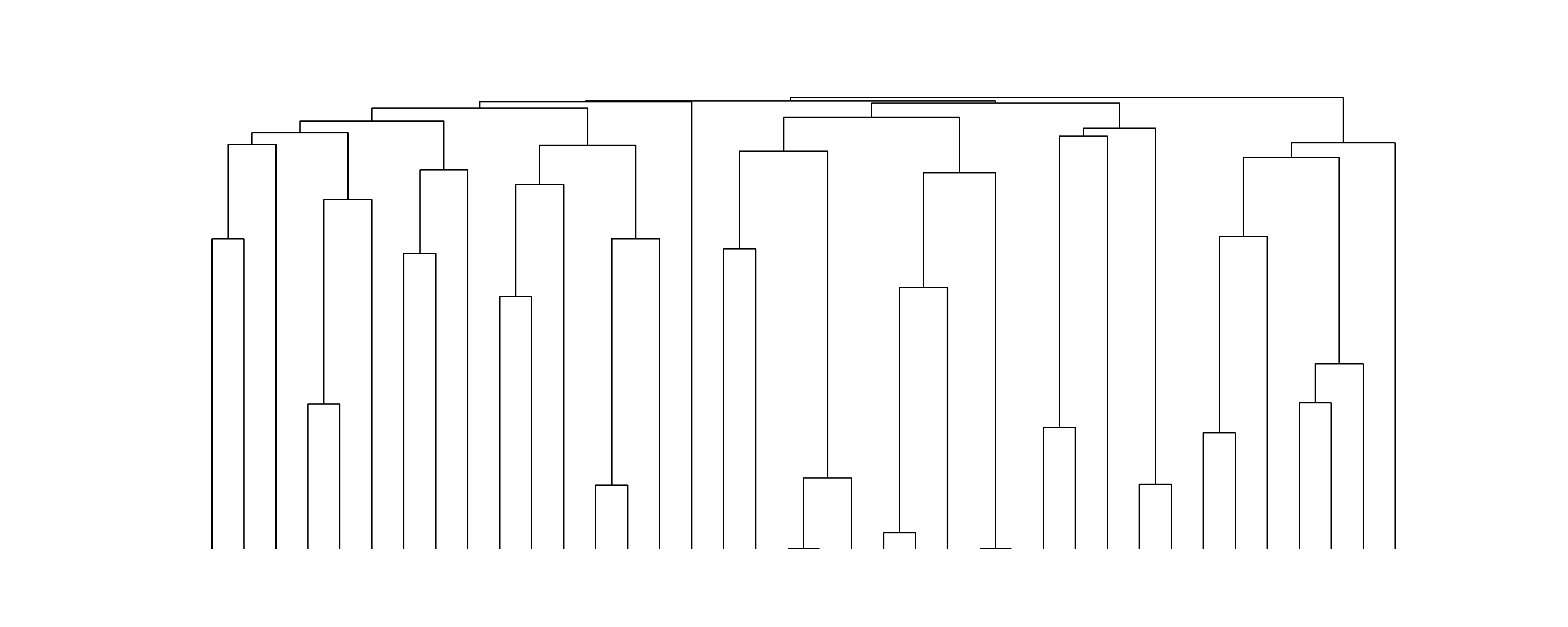}
    \caption{Dendrogram.}
\end{subfigure}    
\caption{{Bottom-up} algorithm on the power-grid network.}
\label{fig:bottom_up_power_grid}
\end{figure}

\begin{figure}[!ht]
\centering
\begin{subfigure}{0.32\textwidth}
    \includegraphics[width=\textwidth]{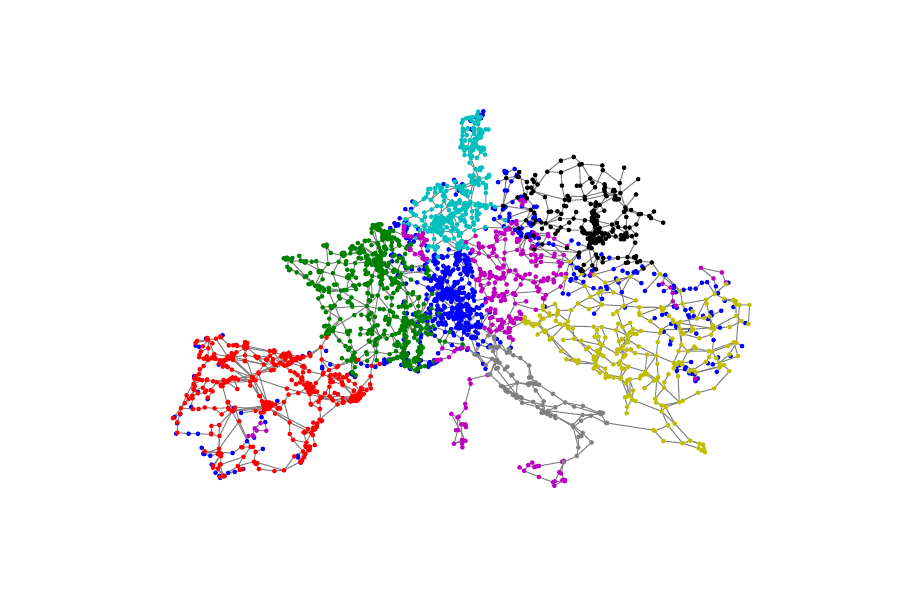}
    \caption{8 communities.}
\end{subfigure}    
\hfill
\begin{subfigure}{0.32\textwidth}
    \includegraphics[width=\textwidth]{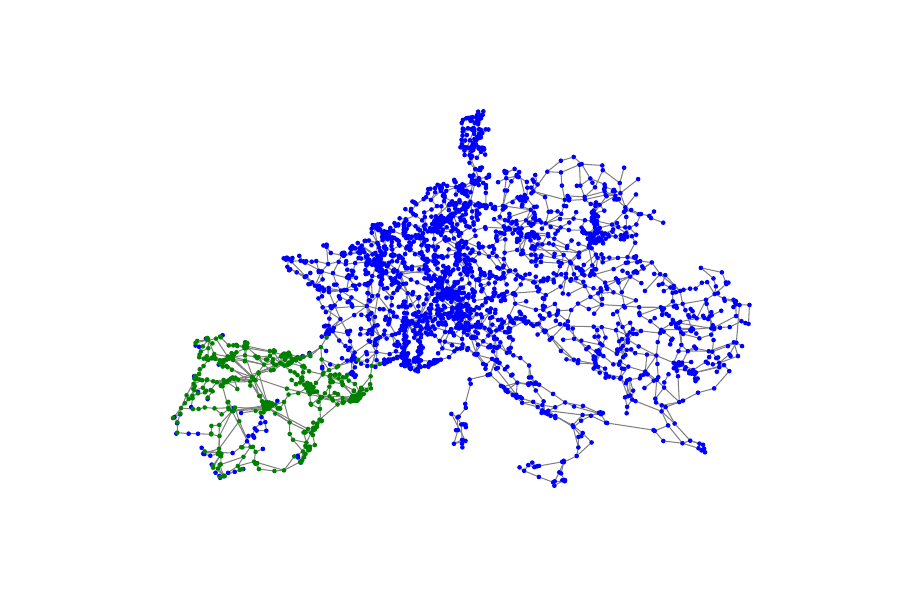}
    \caption{2 communities.}
\end{subfigure}
\hfill
\begin{subfigure}{0.32\textwidth}
    \includegraphics[width=\textwidth]{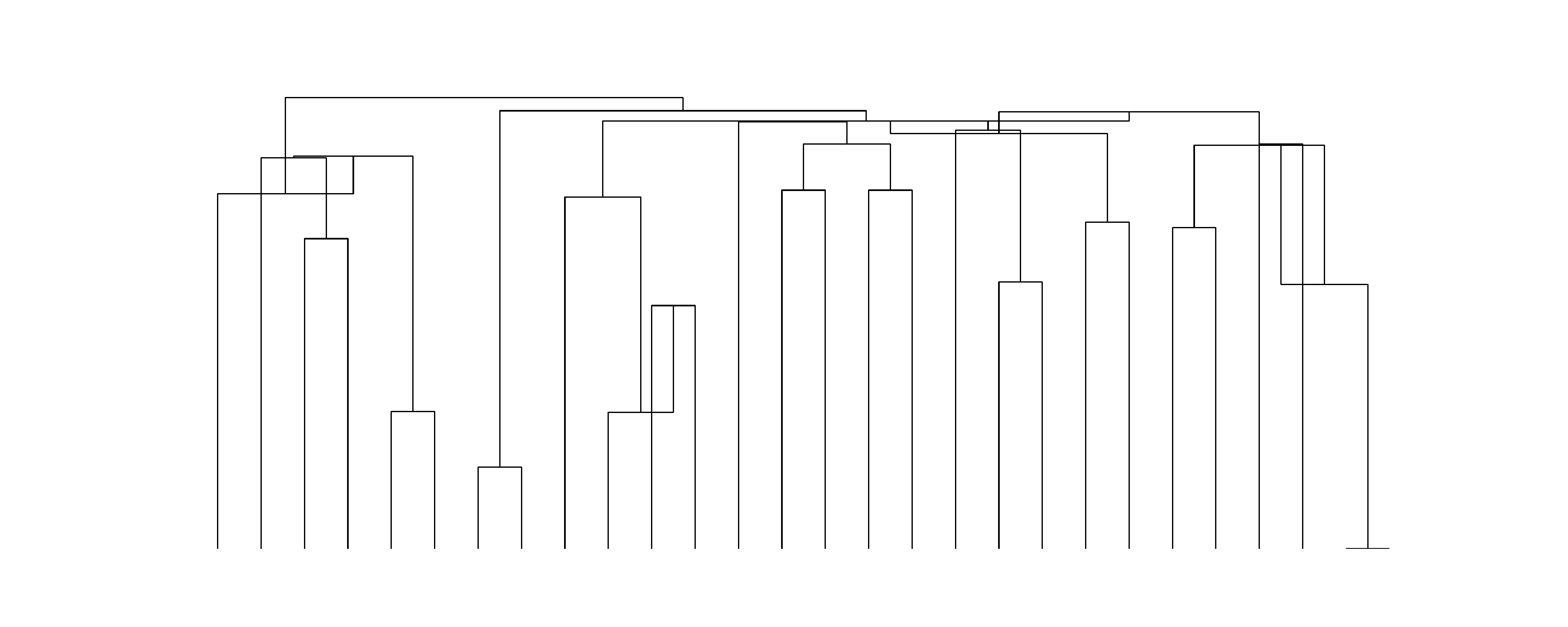}
    \caption{Dendrogram.}
\end{subfigure}    
\caption{{Recursive spectral bi-partitioning} algorithm on the power grid network.}
\label{fig:top_dwon_power_grid}
\end{figure}

\section{Conclusion}

Top-down approaches need to make partition decisions for large communities, without exploiting side information about the internal structure of these communities. 
In a sparse regime, finding a bi-partition is as difficult as finding a flat community structure. As a consequence, some nodes are misclassified, and these misclassifications are increasingly locked in as the algorithm progresses down to smaller communities. 
In contrast, a bottom-up approach inherently exploits lower-level structure, if present. At each step, a bottom-up algorithm needs to classify only the communities at the subsequent higher level, rather than classifying all the nodes individually. This is an easier problem (even if some errors are carried up), because (a) the number of classification decisions is much smaller, and (b) the number of edges available for each decision is much larger (number of edges between two lower-depth communities versus number of incident edges on an individual node). 

In this paper, we have quantified this fundamental advantage within a class of random graph models (HSBM). We have proven that the latent tree of an HSBM can be recovered under weaker conditions than in the literature (average degree scaling as $\omega(1)$). Moreover, we have established that super-communities of intermediate depths could be exactly recovered up to the information-theoretic threshold, thus improving upon the previously known conditions for top-down algorithms. Finally, we have shown that the theoretic advantage of bottom-up carries over to relevant scales and real-world data. Both on synthetic and on real data sets, bottom-up HCD achieves better performance than {top-down} HCD. 

\section*{Acknowledgements}
This work was supported in part by the Swiss National Science Foundation (SNSF) under grant IZBRZ2\_186313.

\bibliographystyle{alpha}
\bibliography{main.bib}

\clearpage

\appendix
\addtocontents{toc}{\protect\setcounter{tocdepth}{1}}

{
  \bigskip
  \bigskip
  \bigskip
  \begin{center}
    {\LARGE\bf Supplementary material \\ When does bottom-up beat top-down in hierarchical community detection? }
\end{center}
  \medskip
}

\section{Proofs for Section~\ref{sec:tree_recovery}}

\subsection{Proof of Theorem~\ref{thm:performance_bottomUp}}
\label{section:proof_thm:performance_bottomUp}

\begin{proof}[Proof of Theorem~\ref{thm:performance_bottomUp}]

With the assumption of this theorem, $\hcC$ is an almost exact estimator of the true clusters $\cC$. 
Let $a \ne b \in \cL_{\cT}$, and recall the definition of the edge density $\edgedensity(\cdot, \cdot)$  in~\eqref{eq:def_edgedensity}.  Lemma~\ref{lemma:convergence_empiricalDistance_betweenClusters} implies that with high probability the edge density between the estimated clusters $\hC_a$ and $\hC_b$ is $\edgedensity( \hC_a, \hC_b ) = (1+o(1)) p_{ab}$. Since the network is assortative and the hierarchy is non-flat, this implies that the first step of the linkage finds the two closest clusters, say $\hC_{a_1}$ and~$\hC_{a_2}$. Then we see that for $b \in \cL_{\cT} \backslash \{a_1, a_2\}$ we have $\lca(a_1,b) = \lca(a_2,b)$ and hence $p_{a_1b} = p_{a_2 b}$. 
Therefore, from~\eqref{eq:extension_edgedensity_set_of_clusters},
\begin{align*}
 \rho\left( \hC_{a_1 \cup a_2}, \hC_{b} \right) \weq \frac{\left| \hC_{a_1} \right|}{ \left| \hC_{a_1} \right| + \left| \hC_{a_2} \right|} \rho\left( \hC_{a_1}, \hC_{b} \right) + 
 \frac{\left| \hC_{a_2} \right|}{ \left| \hC_{a_1} \right| + \left| \hC_{a_2} \right|} \rho\left( \hC_{a_2}, \hC_{b} \right) \weq (1+o(1)) p_{a_1 b}.
\end{align*}
So $\rho \left( \hC_{a_1 \cup a_2}, \hC_{b} \right) = (1+o(1)) p_{a_1 b}$, and repeating this argument, it follows by induction that the \textit{average-linkage} procedure correctly recovers the tree. 
\end{proof}

Let us now state and prove the following auxiliary lemma.

\begin{lemma}
\label{lemma:convergence_empiricalDistance_betweenClusters}
Consider an HSBM with the same assumptions as in Theorem~\ref{thm:performance_bottomUp}, and let $\hcC$ be an almost exact estimator of $\cC$ ($\hcC$ is possibly correlated with the graph edges). Then, for any $a, b \in \cL_{\cT}$ we have $\edgedensity \left( \hC_a, \hC_b \right) = (1+o(1))p_{ab}$.
\end{lemma}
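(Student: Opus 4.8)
The plan is to decompose $w(\hC_a,\hC_b)$ according to the true community memberships of the nodes and then show that only the ``correct'' term survives asymptotically. Write $\hC_a = \bigcup_{c \in \cL_{\cT}} (\hC_a \cap C_c)$ and similarly for $\hC_b$, so that
\begin{align}
\label{eq:decomp_plan}
 w\left( \hC_a, \hC_b \right) \weq \sum_{c, c' \in \cL_{\cT}} w\left( \hC_a \cap C_c, \, \hC_b \cap C_{c'} \right).
\end{align}
The dominant term is $c = a$, $c' = b$, i.e. $w(\hC_a \cap C_a, \hC_b \cap C_b)$, which is a sum of independent $\Ber(p_{ab})$ variables (here we use that within a block pair the edges are i.i.d., and after conditioning on the community assignment the relevant sets are fixed up to the correlation discussed below). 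Since $\hcC$ is almost exact, $|\hC_a \cap C_a| = (1+o(1))|C_a| = \Theta(N)$ and likewise for $\hC_b \cap C_b$; and since $|C_a \cap C_b| = \emptyset$, these two index sets are disjoint, so the sum has mean $(1+o(1)) |C_a|\,|C_b|\, p_{ab}$. The off-diagonal terms in~\eqref{eq:decomp_plan} involve a set $\hC_a \cap C_c$ or $\hC_b \cap C_{c'}$ with $c \ne a$ or $c' \ne b$; by almost-exactness each such set has size $o(N)$, so the expected contribution is $o(N) \cdot O(N) \cdot O(\delta_N) = o(N^2 \delta_N)$, which is negligible next to $\Theta(N^2 \delta_N)$ once we invoke concentration.

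The key step is concentration, and the obstacle is exactly the one flagged in the text: $\hcC$ is correlated with the edges, so $w(\hC_a \cap C_a, \hC_b \cap C_b)$ is not literally a sum of independent Bernoullis with a deterministic index set. I would handle this by a union bound over all candidate index sets: for \emph{fixed} disjoint $S \subseteq C_a$ and $T \subseteq C_b$ with $|S| = \Theta(N)$, $|T| = \Theta(N)$, the quantity $w(S,T) \sim \mathrm{Binom}(|S||T|, p_{ab})$ concentrates, and since $N|S||T|p_{ab} = \omega(N)\cdot \Theta(N) \gg \log\binom{N}{|S|}\binom{N}{|T|} = O(N)$ — here is where the hypothesis $N\delta_N = \omega(1)$, strengthened in the proof of Theorem~\ref{thm:performance_bottomUp} to $p_{ab} = \omega(N^{-1})$, is used — a multiplicative Chernoff bound together with a union bound over all $2^{2N}$ choices of $(S,T)$ gives that \emph{simultaneously} for all such sets $w(S,T) = (1+o(1))|S||T|p_{ab}$ with probability $1-o(1)$. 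The same union-bound argument controls the off-diagonal terms of~\eqref{eq:decomp_plan}: for all sets $S'$ of size $o(N)$ and $T'$ of size $O(N)$, $w(S',T') = O(|S'| |T'| p_{ab}) + O(\log N)$, hence $o(N^2\delta_N)$ in aggregate (using $K = O(1)$ so there are only $O(1)$ off-diagonal pairs).

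Putting these together: on the high-probability event, $w(\hC_a,\hC_b) = (1+o(1))|C_a||C_b|p_{ab} + o(N^2\delta_N) = (1+o(1))|C_a||C_b|p_{ab}$, while $|\hC_a| = (1+o(1))|C_a|$ and $|\hC_b| = (1+o(1))|C_b|$ by almost-exactness, so
\begin{align*}
 \edgedensity\left( \hC_a, \hC_b \right) \weq \frac{w(\hC_a,\hC_b)}{|\hC_a|\,|\hC_b|} \weq (1+o(1)) p_{ab},
\end{align*}
as claimed. I expect the main obstacle to be stating the simultaneous-concentration step cleanly — in particular making sure the exponential failure probabilities from Chernoff genuinely beat the $2^{2N}$-sized union bound, which forces the quantitative requirement $p_{ab} = \omega(N^{-1})$ rather than merely $N\delta_N = \omega(1)$; one should also be slightly careful that $w(S,T)$ for $S \subseteq C_a$, $T \subseteq C_b$ is a sum over genuinely independent edge variables (true since $a \ne b$ and $S, T$ lie in different blocks), and that conditioning on the node-to-block assignment does not disturb the edge law.
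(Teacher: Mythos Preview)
Your approach is correct in spirit and genuinely different from the paper's. Both proofs face the same difficulty---that $\hcC$ may be arbitrarily correlated with the edges---but resolve it with different tools. You take the elementary route: decompose $w(\hC_a,\hC_b)$ according to the true block memberships, then beat the correlation by a brute-force union bound over all $2^{O(N)}$ candidate index sets, using Chernoff for each one. The paper instead performs a bias--variance split
\[
 |\hp_{ab}-p_{ab}| \ \le \ \underbrace{\left|\frac{\sum_{i,j}\hZ_{ia}\hZ_{jb}(A_{ij}-\E A_{ij})}{\sum_{i,j}\hZ_{ia}\hZ_{jb}}\right|}_{E_1} \ + \ \underbrace{\left|\frac{\sum_{i,j}\hZ_{ia}\hZ_{jb}\E A_{ij}}{\sum_{i,j}\hZ_{ia}\hZ_{jb}}-p_{ab}\right|}_{E_2},
\]
handles the bias $E_2$ deterministically via almost-exactness (much like your off-diagonal terms), and controls the fluctuation $E_1$ in one stroke by the Gu\'edon--Vershynin cut-norm bound together with Grothendieck's inequality, obtaining $\sup_{x,y\in\{0,1\}^N}\bigl|\sum_{i,j}(A_{ij}-\E A_{ij})x_iy_j\bigr|=O(N\sqrt{N\delta_N})$ uniformly. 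This is slicker---a single black-box inequality replaces your whole union-bound machinery---but yours is more self-contained and arrives at the same conclusion under the same hypothesis $N\delta_N=\omega(1)$.

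One small slip to fix: in your off-diagonal estimate you write $w(S',T')=O(|S'||T'|\delta_N)+O(\log N)$, but a union bound over $2^{O(N)}$ sets cannot deliver an additive $O(\log N)$; the correct additive slack coming out of Bernstein/Chernoff at that union-bound scale is $O(N)$. This does not hurt you, since $N=o(N^2\delta_N)$ under $N\delta_N=\omega(1)$, but you should state it correctly. Relatedly, when you set up the uniform concentration you should quantify the range of $|S|,|T|$ (e.g.\ $|S|\ge \tfrac12\pi_a N$) rather than the informal ``$|S|=\Theta(N)$'', and verify explicitly that the Chernoff exponent $\Theta(\epsilon^2 N^2\delta_N)$ dominates the $O(N)$ entropy of the union bound for every fixed $\epsilon>0$.
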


\begin{proof} 

We denote by $Z \in \{0,1\}^{N \times K}$ (resp., by $\hZ$) the one-hot representation of the true communities $\cC$ (resp., of the predicted communities $\hcC$), that is $Z_{ic} = \1(i \in C_c)$ and $\hZ_{ic} = \1(i\in \hC_c)$ for all $i\in [N]$ and $c \in \cL_{\cT}$ (where $\1(\cdot)$ denotes the indicator function). 
Let $a \ne b \in \cL_{\cT}$. 
We shorten the edge density $\edgedensity\left( \hC_a, \hC_b \right)$ by~$\hp_{ab}$. From the definition of the edge density in~\eqref{eq:def_edgedensity}, we have 
\begin{align*}
 \hp_{ab} \weq \frac{ w \left( \hC_a, \hC_b \right) }{ \left| \hC_a \right| \cdot \left| \hC_b \right| } 
 \weq \frac{ \sum_{  i \in \hC_a, j \in \hC_b } A_{ij} }{ \left| \hC_a \right| \cdot \left| \hC_b \right| }
 \weq 
  \frac{ \sum_{i,j} \hZ_{ia} \hZ_{jb} A_{ij} }{ \sum_{i,j} \hZ_{ia} \hZ_{jb} }.
\end{align*}
Therefore, a variance-bias decomposition leads to 
\begin{align}
\label{eq:in_proof:concentration_hpab}
 \left| \hp_{ab} - p_{ab} \right| & \wle 
 \underbrace{\left| \frac{ \sum_{i,j} \hZ_{ia} \hZ_{jb} \left( A_{ij} - \E A_{ij} \right) }{ \sum_{i,j} \hZ_{ia} \hZ_{jb} }  \right|}_{E_1}
 + 
 \underbrace{ \left| \frac{ \sum_{i,j} \hZ_{ia} \hZ_{jb} \E A_{ij} }{ \sum_{i,j} \hZ_{ia} \hZ_{jb} } - p_{ab} \right|}_{E_2}.
\end{align}
Moreover, because $\hcC$ is almost exact, we have $\sum_{i,j} \hZ_{ia} \hZ_{jb} = (1+o(1)) |C_a| \cdot |C_b|$. 
Thus, 
\begin{align*}
 \sum_{i,j} \hZ_{ia} \hZ_{jb}  \weq (1+o(1)) \pi_a \pi_b N^2, 
\end{align*}
using the concentration of multinomial random variables.

Let us bound the two terms $E_1$ and $E_2$ on the right-hand side of~\eqref{eq:in_proof:concentration_hpab} separately. 
To handle the first term, we will use~\cite[Lemma~4.1]{guedon2016community}. Let 
$\bp = 2 N^{-1} (N-1)^{-1}\sum_{i < j} \Var( A_{ij}) $. Because $\Var(A_{ij}) = \Theta(\delta_N)$, we have $\bp = \Theta(\delta_N)$. Moreover, because $\delta_N = \omega( N^{-1} )$, we have $\bp \ge 9 N^{-1}$ for $N$ large enough. 
Therefore \cite[Lemma~4.1]{guedon2016community} ensures that with probability at least $1-e^3 5^{-N}$ we have
\begin{align*}
 \sup_{s,t \in \{-1,1\}^n } \left| \sum_{i,j} \left( A_{ij} - \E A_{ij} \right) s_i t_j \right| \wle 3 N \sqrt{ N \bp}.
\end{align*}
Applying Grothendieck's inequality~\cite[Theorem 3.1]{guedon2016community}, we obtain
\begin{align}
\label{eq:bound_cut_Grothendieck}
 \sup_{ \substack{X_1, \cdots, X_N \\ Y_1, \cdots, Y_N \\ \forall i \in [N] \colon \|X_i \|_2 \le 1 \\ \forall j \in [N] \colon \|Y_j \|_2 \le 1 } } \left| \sum_{i,j} \left( A_{ij} - \E A_{ij} \right) X_i^T Y_j \right| \wle 3 c N \sqrt{ N \bp }, 
\end{align}
where $c$ is Grothendieck’s constant, verifying $0 < c < 2$. Using~\eqref{eq:bound_cut_Grothendieck} with $X_i = \hZ_{ia}$ and $Y_j = \hZ_{jb}$ leads to 
\begin{align*}
 E_1 \weq O\left( \sqrt{ \frac{ \bp }{ N } } \right) \weq o( \delta_N ),
\end{align*}
because $p = \Theta(\delta_N)$ and $N^{-1} = o(\delta_N)$. 

To handle the second term $E_2$ in the right-hand side of~\eqref{eq:in_proof:concentration_hpab}, we first notice that
\begin{align}
\label{eq:in_proof_E2_rewritten}
 E_2 & \weq \frac{ \left| \sum\limits_{i,j} \hZ_{ia} \hZ_{jb} \left( \E A_{ij} - p_{ab} \right) \right| }{ \sum\limits_{i,j} \hZ_{ia} \hZ_{jb} }. 
\end{align}
Moreover, for all $i,j \in [N]$ we have 
\begin{align*}
 \E A_{ij} = \sum_{a', b' \in \cL_{\cT} } p_{a' b'} Z_{ia'}Z_{jb'},
\end{align*}
because $\E A_{ij} = p_{a' b'}$ if $i \in C_{a'}$ and $j \in C_{b'}$. Thus, 
\begin{align*}
 \E A_{ij} - p_{ab} \weq \sum_{ \substack{ a', b' \in \cL_{\cT} \\ (a', b') \ne (a,b) } } (p_{a' b'} - p_{ab}) Z_{ia'}Z_{jb'}.
\end{align*}
Let $c = \max_{a', b'} \frac{p_{a'b'}}{p_{ab}}$. Under our assumptions, $c = \Theta(1)$. We bound the numerator appearing in Equation~\eqref{eq:in_proof_E2_rewritten} as follows:
\begin{align*}
 \left| \sum\limits_{i<j} \hZ_{ia} \hZ_{jb} \left( \E A_{ij} - p_{ab} \right) \right| 
 & \wle p_{ab} \left( c + 1 \right)  \sum_{ \substack{ a', b' \in \cL_{\cT} \\ (a', b') \ne (a,b) } } \sum_{i < j } \hZ_{ia} \hZ_{jb} Z_{ia'}Z_{jb'} \\
 & \wle 2c p_{ab} \sum_{ \substack{ a', b' \in \cL_{\cT} \\ (a', b') \ne (a,b) } } \left| \hC_{a} \cap C_{a'} \right| \cdot \left| \hC_{b} \cap C_{b'}  \right|, 
\end{align*}
since $c \ge 1$. 
Let us denote $V \backslash C_a =  \cup_{a' \ne a} C_{a'}$ by $C_a^c$. We have
\begin{align*}
 \sum_{ \substack{ a', b' \in \cL_{\cT} \\ (a', b') \ne (a,b) } } \left| \hC_{a} \cap C_{a'} \right| \cdot \left| \hC_{b} \cap C_{b'} \right| 
 & \weq \sum_{a' \ne a } \sum_{b' \in \cL_{\cT} } \left| \hC_{a} \cap C_{a'} \right| \cdot \left| \hC_{b} \cap C_{b'} \right| + \left| \hC_{a} \cap C_{a} \right| \sum_{b' \ne b} \left| \hC_{b} \cap C_{b'} \right| \\
 & \weq \left| \hC_a \cap C^c_a \right| \cdot \left| \hC_b \right| + \left| \hC_a \cap C_a \right| \cdot \left| \hC_b \cap C_b^c \right| \\
 & \wle \left( \left| \hC_b \right| + \left| \hC_a \right| \right) \ace \left( \cC,\hcC \right), 
\end{align*}
where the last line uses $|\hC_a \cap C_a| \le |\hC_a|$ and $ \ace(\cC, \hcC) = \sum_{c =1}^K \left| \hC_c \cap C^c_c \right| + \left| \hC_c^c \cap C_c \right|$. 
Hence, 
\begin{align*}
 E_2 
 \wle 2c p_{ab} \left( \frac{ 1 }{ |\hcC_a| } +  \frac{ 1 }{ |\hcC_b| } \right) \ace(\cC, \hcC) .
\end{align*}
The assumption that $\hcC$ is almost exact ensures that $\ace(\cC, \hcC) = o(N)$ and $|\hC_a|, |\hC_b| = \Theta(N)$. Therefore, $E_2 = o(p_{ab})$. This concludes the proof.  
\end{proof}

\subsection{Proofs for Section~\ref{section:tree_recovery_bounded_degree}}
\label{appendix:proof_bottomup_robustness}

In this section, we prove the Theorem~\ref{prop:bottomup_robustness} and Proposition~\ref{prop:robustness_adversarial}. We first start in Section~\ref{subsec:generalLemma_robustness} by stating a lemma that gives a general condition for the robustness of average-linkage with respect to errors present in an estimator $\tcC$ of $\cC$, where we assume that $\tcC$ is independent of the graph $G$. We then show in Sections~\ref{subsec:proof_thm_robustness} and~\ref{subsec:proof_prop_robustness_adversarial} how to prove Theorem~\ref{prop:bottomup_robustness} and Proposition~\ref{prop:robustness_adversarial} from this Lemma~\ref{lemma:bottomup_robustness_general}. The main ingredients of the proof of Lemma~\ref{lemma:bottomup_robustness_general} are given in Section~\ref{subsec:proof_generalLemma}, whereas the most tedious computations are detailed in Section~\ref{subsubsec:detailed_computation}. 

\subsubsection{A General Lemma}
\label{subsec:generalLemma_robustness}
Let us denote by $B(h)$ the number of bottom communities that have similarity $h$ with a bottom community~$a$. By symmetry of BTSBMs, this value $B(h)$ does not depend on $a$, and we have 
\begin{align}
B(h) & \weq
 \begin{cases}
    2^{d-h-1} & h \in \{0, 1, \cdots, d-1 \}, \\
    1  & h = d. 
 \end{cases}
\label{eq:def_B}
\end{align} 
We notice for $0 \leq h_1 \leq d$, $B(h)$ satisfies 
\begin{align}
    \sum_{h=h_1}^d B(h) \weq B(h_1 -1),
\label{eq:sum_B_to_B}
\end{align}
where by a slight abuse of notation we define $B(-1) = 2^{d}$. The value of $B(h)$ is also equal to the number of bottom communities whose similarity with the bottom community $a$ is no less than $h+1$.
Denote by $\bp_{h}$ the expected edge density inside the super community whose similarity with the bottom community $a$ is no less than $h$. By symmetry of BTSBMs, the value of $\bp_{h}$ does not depend on $a$, and we have  
\begin{align}
  \bp_{h} \weq \frac{\sum_{s=h}^d B(s) p(s)}{B(h-1)}.
\label{eq:p_bar_B}
\end{align}
Let us now state the following lemma. 
\begin{lemma}
\label{lemma:bottomup_robustness_general}
 Let $G$ be a BTSBM whose latent binary tree $\cT$ has depth $d_{\cT} \ge 2$. Suppose that $\pi_a = 1/K$ for all $a \in \cL_{\cT}$ and that $\min\limits_{u \in \cT} p(u) = \omega( N^{-2} )$. 
Let $\tcC$ be a clustering obtained from $\cC$ as described in Section~\ref{section:tree_recovery_bottom}, and $\hcT$ be the hierarchical tree obtained by average-linkage from~$\tcC$. 
Then $\hcT = \cT$ iff for all $ h_{ac}\leq d-2$ the following condition is satisfied 
\begin{align}
\label{eq:robustness_threshold}
& \sum_{h_1=h_{ac}+1}^{d-1} \sum_{h_2 = h_1+1}^{d} 2 B(h_1) B(h_2) \left(\zeta(h_1) - \zeta(h_{ac}) \right) \left(\zeta(h_2) - \zeta(h_{ac}) \right) \left( p_{h_1} - p_{h_{ac}} \right) \notag\\
+ & \ \sum_{h_1=h_{ac}+1}^{d-1} \left(B^2(h_1) \left( \zeta(h_1) - \zeta(h_{ac}) \right)^2 \left( \bp_{h_1+1} - p_{h_{ac}} \right) \right) + \left( p_{d-1} - p_{h_{ac}} \right) \left(\zeta(d) - \zeta(h_{ac})\right)^2 \notag\\
+ & \quad 2 \left( p_d - p_{d-1} \right) \left(\zeta(d-1) - \zeta(h_{ac}) \right) \left( \zeta(d) - \frac{\zeta(d-1) + \zeta(h_{ac})}{2} \right) \notag\\
 > & \ 0. 
\end{align}
\end{lemma}
The proof of Lemma~\ref{lemma:bottomup_robustness_general} is tedious, and we defer it to Section~\ref{subsec:proof_generalLemma}. 
Let us first establish Theorem~\ref{prop:bottomup_robustness} and Proposition~\ref{prop:robustness_adversarial} from Lemma~\ref{lemma:bottomup_robustness_general}.

\subsubsection{Proof of Theorem~\ref{prop:bottomup_robustness}}
\label{subsec:proof_thm_robustness}

\begin{proof}[Proof of Theorem~\ref{prop:bottomup_robustness} as a corollary of Lemma~\ref{lemma:bottomup_robustness_general}]
Because of assortativity, $p_{h_2}-p_{h_1} > 0$ for all $h_2 > h_1$ and $\bp_h > p_{h_{ac}}$ for all $h > h_{ac}$. 
Moreover, we have $\zeta(h_1) \wle \zeta(h_2)$ and $\zeta(h) - \zeta(h_{ac})\geq 0$ for all $h_1 < h_2$ and $h > h_{ac}$. 
Therefore, for all $h_{ac} \leq d-2$, we have 
\begin{align*}
\sum_{h_1=h_{ac}+1}^{d-1} & \sum_{h_2 = h_1+1}^{d} 2 B(h_1) B(h_2) \left(\zeta(h_1) - \zeta(h_{ac}) \right) \left(\zeta(h_2) - \zeta(h_{ac}) \right) \left( p_{h_1} - p_{h_{ac}} \right) \notag\\
+ & \sum_{h_1=h_{ac}+1}^{d-1} B^2(h_1) \left( \zeta(h_1) - \zeta(h_{ac}) \right)^2 \left( \bp_{h_1+1} - p_{h_{ac}} \right) + \left( p_{d-1} - p_{h_{ac}} \right) \left(\zeta(d) - \zeta(h_{ac})\right)^2 
\wge 0,
\end{align*}
because each term in the left-hand side of the inequality is non-negative. 

Moreover, because $\zeta(d) > \frac{\zeta(d-1) + \zeta(h_{ac})}{2}$,
$$
\left( p_d - p_{d-1} \right) \left(\zeta(d-1) - \zeta(h_{ac}) \right) \left( \zeta(d) - \frac{\zeta(d-1) + \zeta(h_{ac})}{2} \right) > 0, 
$$
and thus Condition~\eqref{eq:robustness_threshold} is satisfied.
\end{proof}

\subsubsection{An Adversarial Setting where \texorpdfstring{$\zeta$}{zeta} is not Non-decreasing}
\label{subsec:proof_prop_robustness_adversarial}

In an adversarial setting, $\zeta$ might not be non-decreasing. For instance, consider a scenario where misclustered nodes are deliberately assigned to a cluster chosen uniformly at random from those farthest away from their true cluster. Define $\tcC_{\adversarial} = \tcC(\cC,\zeta_{\adversarial})$ where  
\begin{align}
\label{eq:adversarial_setting}
 \zeta_{\adversarial}(h) \weq 
 \begin{cases}
    \frac{\eta}{2^{d-1}} & \text{ if } h = 0, \\
    0 & \text{ if } h \in \{1,\cdots, d-1\}. 
 \end{cases}
\end{align}
Under additional conditions given in Proposition~\ref{prop:robustness_adversarial}, the average-linkage procedure recovers the tree from $\tcC_{\adversarial}$. 

\begin{proposition}
\label{prop:robustness_adversarial}
 Let $G$ be a BTSBM whose latent binary tree $\cT$ has depth $K = \Theta(1)$ leaves. Suppose that $\pi_a = 1/K$ for all $a \in \cL_{\cT}$ and that $\min\limits_{u \in \cT} p(u) = \omega( N^{-2} )$. 
Suppose that $\eta < 1/2$. 
Then the {average-linkage} procedure correctly recovers $\cT$ (starting from~$\tcC_{\adversarial}$) if one of the following conditions is verified:
\[
 (i) \ \bp_1 < p_{d-1} \qquad \text{ or } \qquad (ii) \ \bp_1 \ge p_{d-1} \text{ and } \eta < \eta_-.
\]
where $\bp_1 = \frac{1}{2^{d-1}} \left( p_d + \sum_{k=1}^{d-1} 2^{k-1} p_{d-k} \right)$ and $\eta_- = \frac{p_{d-1}+\bp_1 -2p_0-\sqrt{(\bp_1 - p_{d-1})(\bp_1 - p_0)}}{p_{d-1}+3\bp_1 -4p_0}$.
\end{proposition}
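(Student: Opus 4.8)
\emph{Reduction to population edge densities.} The plan is to first strip away the randomness. By construction --- and using the sample-splitting device of the footnote --- $\tcC_{\adversarial}$ is a function of the latent labels and of independent external randomness only, so conditionally on $\tcC_{\adversarial}$ the $\binom N2$ potential edges of $G$ remain independent Bernoulli variables. Since $K=\Theta(1)$, the true block sizes are $(1+o(1))N/K$, and each node is misclustered independently, a Chernoff bound gives $|\widetilde C_u|=(1+o(1))|C_u|$ for every union $\widetilde C_u$ of initial clusters, and the number of true-community-$c$ nodes landing in $\widetilde C_u$ is $(1+o(1))$ times its mean. Because $\min_u p(u)=\omega(N^{-2})$, each edge count $w(\widetilde C_u,\widetilde C_v)$ is conditionally a sum of independent Bernoulli variables with mean $\omega(1)$, hence relatively concentrated; a union bound over the $\Theta(1)$ clusters that can ever appear shows that, with probability $1-o(1)$, $\edgedensity(\widetilde C_u,\widetilde C_v)$ equals $(1+o(1))$ times a deterministic population value for all of them at once. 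Since the average-linkage update \eqref{eq:extension_edgedensity_set_of_clusters} is an exact convex combination, it then suffices to run average-linkage on these deterministic values and check that each of its comparisons is won by a constant multiplicative margin.

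\emph{Computing the population densities.} Next I would evaluate those deterministic values. Expanding $w(\widetilde C_a,\widetilde C_b)$ over the true leaf of origin of each endpoint and using that the connection probability depends only on $|\lca(\cdot,\cdot)|$, one obtains that for two initial leaf-clusters the population density equals
\begin{align*}
 \begin{cases}
   f(\ell):=(1-\eta)^2 p_\ell + 2\eta(1-\eta)p_0 + \eta^2\bp_1, & |\lca(a,b)|=\ell\in\{1,\dots,d-1\},\\[3pt]
   f(0):=\big((1-\eta)^2+\eta^2\big)p_0 + 2\eta(1-\eta)\bp_1, & |\lca(a,b)|=0,
 \end{cases}
\end{align*}
where $\bp_1=\tfrac1{2^{d-1}}\sum_{b'}p_{|\lca(b,b')|}$ (the sum over the $2^{d-1}$ leaves $b'$ on one side of the root) is exactly the quantity in the statement; it arises from the ``contamination--contamination'' cross terms, since the misclustered mass inside $\widetilde C_a$ comes entirely from the far side of the root. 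Because the adversarial rule only moves nodes across the root, these identities extend verbatim to every later cluster: for unions of whole sub-trees $\widetilde C_u,\widetilde C_v$ the population density is $f(|\lca(u,v)|)$ when $u,v$ are on the same side of the root and $f(0)$ whenever $u,v$ straddle it, independently of their depths. I would then record the two facts that will drive the rest: $f$ is strictly increasing on $\{1,\dots,d-1\}$, since $f(\ell)-f(\ell')=(1-\eta)^2(p_\ell-p_{\ell'})$; and everything has been reduced to comparing the $d$ numbers $f(0),f(1),\dots,f(d-1)$.

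\emph{Merge order and the scalar inequality.} Running average-linkage on these values, I would argue by induction on the stage that, as long as every same-side merge dictated by $\cT$ has density strictly above $f(0)$, the algorithm first exhausts the sibling merges at depth $d$ (density $f(d-1)$), then those at depth $d-1$ (density $f(d-2)$), and so on, rebuilding $\cT[0]$ and $\cT[1]$ correctly and only at the very last step merging $\widetilde C_0$ with $\widetilde C_1$ (density $f(0)$) --- precisely because the pairs of density $f(0)$ are exactly those $\cT$ asks to merge last. By monotonicity of $f$ this collapses to a single inequality, $f(d-1)>f(0)$. With $u=p_{d-1}-p_0>0$ and $v=\bp_1-p_0>0$,
\begin{align*}
 f(d-1)-f(0)\weq(1-\eta)^2u+\eta(3\eta-2)v\weq u-2\eta(u+v)+\eta^2(u+3v).
\end{align*}
If $v\le u$ --- in particular under hypothesis (i), $\bp_1<p_{d-1}$ --- then $\eta<1/2$ forces $3\eta-2<0$ and hence $f(d-1)-f(0)\ge(1-\eta)^2v+\eta(3\eta-2)v=v(1-2\eta)^2>0$, so $\eta<1/2$ alone suffices. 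If $v>u$ (hypothesis (ii), $\bp_1\ge p_{d-1}$), the right-hand side is a quadratic in $\eta$ with positive leading coefficient, positive at $\eta=0$ and $\le 0$ at $\eta=1/2$, with roots $\tfrac{(u+v)\pm\sqrt{v(v-u)}}{u+3v}$; substituting back $u,v$ identifies the smaller root with $\eta_-$, and the quadratic is positive exactly on $[0,\eta_-)$. Hence under (i), or under (ii) with $\eta<\eta_-$, all the needed strict inequalities hold and average-linkage returns $\cT$, so by the first step Algorithm~\ref{algo:hierarchical_clustering_bottomUp} does too with probability $1-o(1)$. I expect the main obstacle to be the exact density computation --- in particular realising that every straddling pair has the same value $f(0)$ regardless of depth, and that the contamination--contamination term is exactly $\bp_1$ --- together with the bookkeeping that no straddling merge can pre-empt a $\cT$-prescribed one; that bookkeeping is where the sign of $f(d-1)-f(0)$, i.e.\ hypotheses (i)/(ii), becomes essential, while the concentration step is routine given $K=\Theta(1)$ and $\min_u p(u)=\omega(N^{-2})$.
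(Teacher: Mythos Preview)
Your approach --- computing the population edge densities $f(\ell)$ directly and analysing the merge order --- is more transparent than the paper's route through the general Lemma~\ref{lemma:bottomup_robustness_general}, and your formulas for $f(\ell)$ ($\ell\ge 1$) and $f(0)$ are correct. The gap is in the reduction step. Having correctly argued that average-linkage rebuilds each half of $\cT$ before crossing the root provided $f(\ell)>f(0)$ for every $\ell\in\{1,\dots,d-1\}$, you write: ``By monotonicity of $f$ this collapses to a single inequality, $f(d-1)>f(0)$.'' But you just established that $f$ is \emph{increasing} on $\{1,\dots,d-1\}$, so the binding constraint is the \emph{smallest} of these values, namely $f(1)>f(0)$, not the largest. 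The inequality $f(d-1)>f(0)$ is only what is needed for the \emph{first} round of merges (sibling leaves); the last non-trivial round --- merging $\widetilde C_{00}$ with $\widetilde C_{01}$ and $\widetilde C_{10}$ with $\widetilde C_{11}$ --- compares $f(1)$ against $f(0)$ and can fail even when $f(d-1)>f(0)$ holds. A concrete instance: $d=3$, $(p_0,p_1,p_2,p_3)=(0.01,\,0.011,\,0.1,\,0.5)$, $\eta=0.2$. Then $\bp_1=0.1555$, one checks $\eta<\eta_-\approx 0.277$ so hypothesis~(ii) is satisfied, yet $f(1)\approx 0.0165<f(0)\approx 0.0566$, and average-linkage on the population densities merges $\widetilde C_{00}$ with a cross-root partner at the four-cluster stage rather than with $\widetilde C_{01}$.

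The paper does not compute $f$ explicitly; it specialises the criterion of Lemma~\ref{lemma:bottomup_robustness_general} (Condition~\eqref{eq:robustness_threshold}) to the adversarial $\zeta$, and the resulting algebra lands on exactly your inequality $f(d-1)>f(0)$. But the proof of that lemma likewise verifies only the first round of leaf-merges in detail and then invokes ``repeat these steps for these super-communities''; carrying out that repetition in the adversarial setting regenerates precisely the extra constraints $f(d-2)>f(0),\dots,f(1)>f(0)$ that both your argument and the paper's omit. So your cleaner computation actually exposes a gap shared with the paper: the stated hypotheses encode only $f(d-1)>f(0)$, whereas the argument you outlined really needs $f(1)>f(0)$, i.e.\ the condition obtained by replacing $p_{d-1}$ with $p_1$ throughout.
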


\begin{proof}[Proof of Proposition~\ref{prop:robustness_adversarial} as a corollary of Lemma~\ref{lemma:bottomup_robustness_general}]
When $h_{ac} > 0$, by substituting $\zeta(h)$ with Equation~\eqref{eq:adversarial_setting}, the left hand side of Condition~\eqref{eq:robustness_threshold} can be written as 
\begin{align*}
\left( p_{d-1} - p_{h_{ac}} \right) \left(1 - \eta\right)^2 > 0,
\end{align*}
which therefore satisfies Condition~\eqref{eq:robustness_threshold}.

When $h_{ac} = 0$, we can replace $\zeta(h)$ with Equation~\eqref{eq:adversarial_setting} and denote $\frac{\eta}{2^{d-1}}$ by $\zeta$ for convenience. This allows us to express the left-hand side of Condition~\eqref{eq:robustness_threshold} as follows:
\begin{align}
\label{eq:proposition1_case2_before}
    & \underbrace{\sum_{h_1=1}^{d-1} \sum_{h_2 = h_1+1}^{d-1} 2 B(h_1) B(h_2) \zeta^2 \left( p_{h_1} - p_0 \right)}_{A_1} - \underbrace{\sum_{h_1=1}^{d-1} 2 B(h_1) \zeta \left(1-\eta - \zeta \right) \left( p_{h_1} - p_0 \right)}_{A_2} \notag \\
    + & \ \underbrace{\sum_{h_1=1}^{d-1} B^2(h_1) \zeta^2 \left( \bp_{h_1+1} - p_0 \right)}_{A_3} 
    + \left( p_{d-1} - p_0 \right) \left(1 - \eta - \zeta \right)^2 - 2 \left( p_d - p_{d-1} \right) \zeta \left( 1 - \eta - \frac{ \zeta}{2} \right). 
\end{align}
We notice that
\begin{align}
    A_1 &\weq \sum_{h_1=1}^{d-1} \sum_{h_2 = h_1+1}^{d} 2 B(h_1) B(h_2) \zeta^2 \left( p_{h_1} - p_0 \right) - \sum_{h_1=1}^{d-1} 2 B(h_1) B(d) \zeta^2 \left( p_{h_1} - p_0 \right), \notag\\
    &=^{(a)} \sum_{h_1=1}^{d-1} 2 B^2(h_1) \zeta^2 \left( p_{h_1} - p_0 \right) - \sum_{h_1=1}^{d-1} 2 B(h_1) B(d) \zeta^2 \left( p_{h_1} - p_0 \right),
\label{eq:A_1}
\end{align}
where in (a), we use Equation~\eqref{eq:sum_B_to_B}.
We also have 
\begin{align}
    A_3 &= \sum_{h_1=1}^{d-1} \sum_{h_2 = h_1+1}^d B(h_1) B(h_2) \zeta^2 \left( p_{h_2} - p_0 \right), \notag\\
    &=^{(a)} \sum_{h_2=1}^{d} \sum_{h_1 = 1}^{h_2-1} B(h_1) B(h_2) \left( p_{h_2} - p_0 \right) \zeta^2, \notag\\
    &= \sum_{h_2=1}^{d} \left(\sum_{h_1 = 1}^{d} B(h_1) - \sum_{h_1 = h_2}^{d}B(h_1) \right) B(h_2) \left( p_{h_2} - p_0 \right) \zeta^2, \notag\\
    &=^{(b)} \sum_{h_2=1}^{d} \left(B(0) - B(h_2-1)\right) B(h_2)\left( p_{h_2} - p_0 \right) \zeta^2, \notag\\
    &=^{(c)} \sum_{h_1=1}^{d} B(0)B(h_1)\zeta^2\left( p_{h_1} - p_0 \right) - \left( \sum_{h_1=1}^{d-1} 2B^2(h_1)\zeta^2 \left( p_{h_1} - p_0 \right) + \zeta^2 \left( p_d - p_0 \right) \right)
\label{eq:A_3}
\end{align}
where in (a), we swap sums from $\sum_{h_1=1}^{d-1} \sum_{h_2 = h_1+1}^d$ to $\sum_{h_2=1}^{d} \sum_{h_1 = 1}^{h_2-1}$, in (b), we use Equation~\eqref{eq:sum_B_to_B}. Finally, to obtain the last equality (c), we re-index $h_2$ to $h_1$ and use 
\begin{align}
    B(h-1) \weq 
    \begin{cases}
        2 B(h) & h \in \{0, 1, \cdots, d-1 \},\\
        B(d) & h = d,
    \end{cases}
\label{eq:B_2B}
\end{align}
(Recall also that $B(d) = 1$.) By combining Equations~\eqref{eq:A_1} and \eqref{eq:A_3}, we~obtain
\begin{align*}
   A_1+A_2+A_3 
    & \weq \sum_{h_1=1}^{d-1} 2 B^2(h_1) \zeta^2 \left( p_{h_1} - p_0 \right) - \sum_{h_1=1}^{d-1} 2 B(h_1) \zeta \left(1-\eta \right) \left( p_{h_1} - p_0 \right) \\
    & \qquad +  \sum_{h_1=1}^{d} B(0)B(h_1)\zeta^2\left( p_{h_1} - p_0 \right) - \left( \sum_{h_1=1}^{d-1} 2B^2(h_1)\zeta^2 \left( p_{h_1} - p_0 \right) + \zeta^2 \left( p_d - p_0 \right) \right), \\
    &= - \sum_{h_1=1}^{d-1} 2 B(h_1) \zeta \left(1-\eta \right) \left( p_{h_1} - p_0 \right) + \sum_{h_1=1}^{d} B(0)B(h_1)\zeta^2\left( p_{h_1} - p_0 \right) - \zeta^2 \left( p_d - p_0 \right), \\
    &= - \sum_{h_1=1}^{d} 2 B(h_1) \zeta \left(1-\eta \right) \left( p_{h_1} - p_0 \right) + 2 \zeta \left(1-\eta \right) (p_d - p_0) \\ 
    & \qquad + \sum_{h_1=1}^{d} B(0)B(h_1)\zeta^2\left( p_{h_1} - p_0 \right) - \zeta^2 \left( p_d - p_0 \right), \\
    &=^{(a)} - 2 B(0)\zeta \left(1-\eta \right)\left(\bp_1 - p_0\right) + 2 \zeta \left(1-\eta \right) (p_d - p_0) + B^2(0)\zeta^2\left( \bp_1 - p_0 \right) - \zeta^2 \left( p_d - p_0 \right), \\
    &= - \eta \left(1-\eta \right)\left(\bp_1 - p_0\right) + 2 \zeta \left(1-\eta \right) (p_d - p_0) + \eta^2\left( \bp_1 - p_0 \right) - \zeta^2 \left( p_d - p_0 \right),
\end{align*}
where in (a) we use Equations~\eqref{eq:p_bar_B} and~\eqref{eq:B_2B}, and in the last equality we use $B(0)\zeta = \eta$.

By substituting $A_1+A_2+A_3$, we can rewrite Equation~\eqref{eq:proposition1_case2_before} as
\begin{align*}
&\left( \eta^2 - 2 \eta \left(1-\eta \right) \right) \left(\bp_1 - p_0\right) + 2 \zeta \left(1-\eta \right) (p_d - p_0) - \zeta^2 \left( p_d - p_0 \right) \\
& \qquad + \left( p_{d-1} - p_0 \right) \left(\left(1 - \eta \right)^2 + \zeta^2 - 2 \zeta\left(1 - \eta \right) \right)-\left( p_d - p_{d-1} \right) \zeta \left( 2(1 - \eta) - \zeta \right), \\
& \weq \left( \eta^2 - 2 \eta \left(1-\eta \right) \right) \left(\bp_1 - p_0\right) + \left( p_{d-1} - p_0 \right) \left(1 - \eta \right)^2.
\end{align*}
Therefore, Condition~\eqref{eq:robustness_threshold} is equivalent to
\begin{align}
\left( \eta^2 - 2 \eta \left(1-\eta \right) \right) \left(\bp_1 - p_0\right) + \left( p_{d-1} - p_0 \right) \left(1 - \eta \right)^2 > 0. 
\label{eq:in_proof_scenario3}
\end{align}

If $ p_{d-1} > \bp_1$ we have
\begin{align*}
 \left( \eta^2 - 2 \eta \left(1-\eta \right) \right) \left(\bp_1 - p_0\right) + \left( p_{d-1} - p_0 \right) \left(1 - \eta \right)^2
 & \weq (\bp_1-p_0) \left((1-\eta)^2 + \eta^2-2\eta(1-\eta) \right), \\
 & \weq (\bp_1-p_0) \left( 1 - 2\eta \right)^2,
\end{align*}
and this quantity is positive since $\bp_1 > p_0$. Therefore, $ p_{d-1} > \bp_1$ is a sufficient condition to recover the tree, and this proves point~(1) of Proposition~\ref{prop:robustness_adversarial}.

If $p_{d-1} \leq \bp_1$ we can rewrite the left-hand side of Equation~\eqref{eq:in_proof_scenario3} (by ``completing the square'') as
\begin{align*}
& (p_{d-1} + 3\bp_1 - 4p_0)\eta^2 -2(p_{d-1}+\bp_1-2p_0)\eta + (p_{d-1}-p_0), \\
& \weq (p_{d-1} + 3\bp_1 - 4p_0) \left(\eta -\frac{p_{d-1}+\bp_1-2p_0}{p_{d-1} + 3\bp_1 - 4p_0} \right)^2  + (p_{d-1}-p_0) - \frac{(p_{d-1}+\bp_1-2p_0)^2}{p_{d-1} + 3\bp_1 - 4p_0}, \\
& \weq (p_{d-1} + 3\bp_1 - 4p_0) \left(\eta -\frac{p_{d-1}+\bp_1-2p_0}{p_{d-1} + 3\bp_1 - 4p_0} \right)^2  - \frac{(\bp_1 - p_{d-1})(\bp_1-p_0)}{p_{d-1} + 3\bp_1 - 4p_0}.
\end{align*}
Therefore, Equation~\eqref{eq:in_proof_scenario3} is equivalent to
\begin{align*}
(p_{d-1} + 3\bp_1 - 4p_0) \left(\eta -\frac{p_{d-1}+\bp_1-2p_0}{p_{d-1} + 3\bp_1 - 4p_0} \right)^2  - \frac{(\bp_1 - p_{d-1})(\bp_1-p_0)}{p_{d-1} + 3\bp_1 - 4p_0} > 0.
\end{align*}
Because $(p_{d-1} + 3\bp_1 - 4p_0) > 0$, this inequality is satisfied if $\eta < \eta_-$ or if $\eta > \eta_+$, where 
\begin{align*}
 \eta_- & \weq \frac{p_{d-1}+\bp_1 -2p_0-\sqrt{(\bp_1 - p_{d-1})(\bp_1 - p_0)}}{p_{d-1}+3\bp_1 -4p_0}, \\
 \eta_+ & \weq \frac{p_{d-1}+\bp_1 -2p_0+\sqrt{(\bp_1 - p_{d-1})(\bp_1 - p_0)}}{p_{d-1}+3\bp_1 -4p_0}.
\end{align*}
 Notice that 
$\eta_+ = \frac{\sqrt{\bp_1 - p_{d-1}}(2\sqrt{\bp_1 - p_0}-\sqrt{\bp_1 - p_{d-1}})}{2(p_{d-1}+3\bp_1 -4p_0)} + \frac{1}{2}> \frac{1}{2}$, and hence the condition $\eta > \eta_+$ cannot be verified (recall that $0 \le \eta < 1/2$).
In contrast, we have $\frac{p_{d-1}+\bp_1 -2p_0}{p_{d-1}+3\bp_1 -4p_0} - \frac{1}{2} = \frac{p_{d-1} - \bp_1}{2(p_{d-1}+3\bp_1 -4p_0)} \leq 0$ and hence $\eta_- \le \frac12$. 
Moreover, using
\begin{align*}
(p_{d-1}+\bp_1 -2p_0)^2 - (\bp_1 - p_{d-1})(\bp_1 - p_0) 
& \weq p_{d-1}^2 +4p_0^2 +3\bp_1 p_{d-1} - 5p_{d-1} p_0 -3p_0 \bp_1 \\
& \weq 3(p_{d-1}-p_0)(\bp_1 - p_0) +(p_{d-1}-p_0)^2,
\end{align*}
$p_{d-1}+\bp_1 -2p_0 > 0 $, and $(\bp_1 - p_{d-1})(\bp_1 - p_0) > 0$, it follows that $\eta_- > 0$. 
Therefore, when $ p_{d-1} \le \bp_1$, condition~\eqref{eq:robustness_threshold} is satisfied if $0 \leq \eta < \eta_-$. 
\end{proof}

\subsubsection{Proof of Lemma~\ref{lemma:bottomup_robustness_general}}
\label{subsec:proof_generalLemma}

\begin{proof}[Proof of Lemma~\ref{lemma:bottomup_robustness_general}]
We proceed by showing that the edge density $\edgedensity \left( \tC_a, \tC_b \right)$ is concentrated around its mean. We denote by $O_{ab} = C_a \cap \tC_b$ the nodes in cluster $C_a$ but assigned to cluster~$\tC_b$. In particular, the set of all misclassified nodes $O$ is given by $O = \cup_{ \substack{ a, b \in \cL_{\cT} \\ b \ne a } } O_{ab}$ and is independent of the edges. Therefore, $w\left( O_{ab}, O_{cd} \right) \sim \Bin\left( |O_{ab}| \cdot |O_{cd}|, p_{ac} \right)$. 
We note, from the definition of the edge density~\eqref{eq:def_edgedensity} and the fact that $\tC_a = \cup_{k \in \cL_{\cT} } O_{ka}$, that 
\begin{align}
\label{eq:in_proof_edgesensity_errorFramework}
 \edgedensity \left( \tC_a, \tC_b \right) 
  \weq \frac{ w \left( \tC_a, \tC_b \right) }{ \left| \tC_a \right| \cdot \left| \tC_b \right| } \quad \text{ with } \quad 
 w \left( \tC_a, \tC_b \right) \weq \sum_{k,\ell \in \cL_{\cT} } w \left( O_{ka}, O_{\ell b} \right),
\end{align}
We will express $\E w \left( O_{ka}, O_{\ell b} \right)$ for each scenario. The concentration $\edgedensity \left( \tC_a, \tC_b \right)$ around its mean then follows from Chernoff's bound.

In the remainder of the proof, we let $a, b, c \in \cL_{\cT}$ be three different leaves such that  $b$ is the closest leaf from~$a$, \textit{i.e.,} 
$b = \argmax\limits_{ k \in \cL_{\cT} } | \lca ( a, k ) |$ and $c \not \in \{a,b\}$. 
To conclude that the linkage procedure outputs the correct tree, we have to verify that 
\begin{align*}
\forall c \not \in \{ a,b \} \colon 
 \edgedensity \left( \tC_a, \tC_b \right) > \edgedensity \left( \tC_a, \tC_c \right).
\end{align*}
Since $ \edgedensity \left( \tC_a, \tC_c \right) = (1+o(1)) \edgedensity \left( \tC_b, \tC_c \right)$, Equation~\eqref{eq:extension_edgedensity_set_of_clusters} ensures $ \edgedensity \left( \tC_{a \cup b}, \tC_c \right) = (1+o(1)) \edgedensity \left( \tC_a, \tC_c \right)$ after the merge, which in turn guarantees that the new super community $a \cup b$ does not merge with any another community before all the bottom communities are merged. After all the bottom communities are merged, we again obtain balanced (super) communities and repeat these steps for these super communities as we did for the bottom communities.

The random variables $w \left( O_{ka}, O_{\ell b} \right)$ follow a Binomial distribution such that $\E w \left( O_{ka}, O_{\ell b} \right) \weq \left( \frac{N}{K} \right)^2 \zeta(|\lca(k, a)|) \zeta(|\lca(\ell, b)|) p_{|\lca(k,\ell)|}$, where $\zeta(h)$ is the fraction of nodes mislabeled from a bottom community with similarity $h$ to bottom community $a$.
We define $h_1 = |\lca(k,a)|$, $h_2 = |\lca(\ell, b)|$, and $h_{ac} = |\lca(a,c)|$. 

When $h_1 = h_{ac} < h_2 \leq d$, each node of the fraction $\zeta(h_1) = \zeta(h_{ac})$ of the bottom cluster $c$ is mislabeled as a node inside a uniformly randomly chosen bottom cluster whose similarity with the bottom cluster $a$ is no less than $h_1+1 = h_{ac}+1$. Furthermore, each node of the fraction $\zeta(h_2)$ of the bottom cluster $a$ is mislabeled as a node inside a uniformly randomly chosen bottom cluster whose similarity with the bottom cluster $a$ is $h_2~(< h_1 = h_{ac})$. Therefore, in this case, the similarity between a node from $a$ and a node from $c$ is exactly the same as when the two nodes are randomly picked from super community $C_u:~C_u \weq \bigcup_{ b \in \cL_{\cT} \colon b_{1:(h_{ac}+1)} = a_{1:(h_{ac}+1)}} C_{b}$. Hence, the expected edge probability between these two nodes is $\bp_{h_{1}+1}=\bp_{h_{ac}+1}$. 
Similarly, when $h_1 = h_2 < h_{ac}$, each node of the fraction $\zeta(h_1) = \zeta(h_{h_2})$ of both bottom clusters $a$ and $c$ is mislabeled as a node inside a uniformly randomly chosen bottom cluster whose similarity with the bottom cluster $a$ (and also with $c$) is $h_1=h_2$. Therefore, in this case, the expected edge probability between a node from $a$ and a node from $c$ is $\bp_{h_{1}+1}=\bp_{h_{2}+1}$. In the other cases, the similarity between bottom clusters $a$ and $c$ is $\min(p_{h_1}, p_{h_2}, p_{h_{ac}}) = \min(p_{h_1}, p_{h_{ac}})$.

Now we obtain whp that for any $c \neq a$,
\begin{align*}
\E w \left( O_{ka}, O_{\ell c} \right) \weq \left( \frac{N}{K} \right)^2 \times 
\begin{cases}
\zeta(h_1) \zeta(h_2) p_{h_1} & \text{ if } 0 \leq h_1 < h_{ac} \text{ and } h_1 < h_2 \leq d, \\
\zeta(h_{ac})\zeta(h_2) \bp_{h_{ac}+1}  & \text{ if } h_1 = h_{ac} \text{ and } h_{ac} < h_2 \leq d , \\
\zeta(h_1) \zeta(h_2) p_{h_{ac}} & \text{ if } h_{ac} < h_1 < h_2 \leq d, \\
\zeta^2(h_1) \bp_{h_1+1} & \text{ if } 0 \leq h_1 = h_2 < h_{ac} , \\
\zeta^2(h_1) p_{h_{ac}} & \text{ if } h_{ac} \leq h_1 = h_2 \leq d, \\  
 \end{cases}
\end{align*}

The special case $h_{ac} = d-1$ occurs when $c=b$. In that case, we have 
\begin{align*}
 \E w \left( O_{ka}, O_{\ell b} \right) \weq \left( \frac{N}{K} \right)^2 \times 
 \begin{cases}
    \zeta(h_1) \zeta(h_2) p_{h_1} & \text{ if } 0 \leq h_1 < d-1 \text{ and } h_1 < h_2 \leq d , \\
    \zeta(d-1) \zeta(d) p_{d}  & \text{ if } (h_1, h_2)=(d-1, d), \\
    \zeta^2(h_1) \bp_{h_1+1} & \text{ if } h_1 = h_2 < d-1, \\ 
    \zeta^2(h_1) p_{d-1} & \text{ if } d-1 \leq h_1 = h_2 \leq d, \\ 
 \end{cases}
\end{align*}

Because $c \notin \{a, b \}$, we have $h_{ac}<d-1$. Thus, $\E w \left( O_{ka}, O_{\ell b} \right) - \E w \left( O_{ka}, O_{\ell c} \right)$ is equal to  
\begin{align*}
\left( \frac{N}{K} \right)^2 \times 
\begin{cases}
0 & \text{ if } 0 \leq h_1 < h_{ac} \text{ and } h_1 < h_2 \leq d, \\
\zeta(h_{ac})\zeta(h_2) \left(p_{h_{ac}} - \bp_{h_{ac}+1} \right)  & \text{ if } h_1 = h_{ac} < h_2 \leq d , \\
\zeta(h_1) \zeta(h_2) \left( p_{h_1} - p_{h_{ac}} \right) & \text{ if } h_{ac} < h_1 < h_2 \leq d \text{ and } h_1 < d-1, \\
\zeta(d-1) \zeta(d) \left( p_{d} - p_{h_{ac}} \right) & \text{ if } (h_1, h_2) = (d-1,d), \\
0 & \text{ if } 0 \leq h_1 = h_2 < h_{ac}  , \\
\zeta^2(h_1) \left( \bp_{h_1+1} - p_{h_{ac}} \right) & \text{ if } h_{ac} \leq h_1 = h_2 < d-1, \\    
\zeta^2(h_1) \left( p_{d-1} - p_{h_{ac}} \right) & \text{ if } d-1 \leq h_1 = h_2 \leq d. \\  
 \end{cases}
\end{align*}
Because $B(h)$ is the number of bottom communities having a similarity $h$ with the bottom community $a$, we obtain 
\begin{align}
\label{eqa:robustness_edge_dense_raw}
 & \edgedensity \left( \tC_a, \tC_b \right) - \edgedensity \left( \tC_a, \tC_c \right) \notag\\
 \weq & (1+o(1)) \Bigg(
 - \sum_{h_2=h_{ac}+1}^d 2 B(h_{ac}) B(h_2) \zeta(h_{ac})\zeta(h_2) \left(\bp_{h_{ac}+1} - p_{h_{ac}} \right) \notag \\
 & \qquad \qquad \quad  + \sum_{h_1 = h_{ac}+1}^{d-2} \sum_{h_2 = h_1+1}^d 2 B(h_1) B(h_2) \zeta(h_1) \zeta(h_2) \left( p_{h_1} - p_{h_{ac}} \right)
+ 2 \zeta(d-1) \zeta(d) \left( p_{d} - p_{h_{ac}} \right) \notag \\
 & \qquad \qquad \quad +  \sum_{h_1 = h_{ac}}^{d-1} B^2(h_1) \zeta^2(h_1) \left( \bp_{h_1+1} - p_{h_{ac}} \right)
+ \zeta^2(d) \left( p_{d-1} - p_{h_{ac}} \right)
 \Bigg).
\end{align}
We show in Section~\ref{subsubsec:detailed_computation} that we can further express $\edgedensity \left( \tC_a, \tC_b \right) - \edgedensity \left( \tC_a, \tC_c \right) $ as 
\begin{align}
\label{eqa:robustness_edge_dense}
 & (1+o(1)) \Bigg( \sum_{h_1=h_{ac}+1}^{d-1} \sum_{h_2 = h_1+1}^{d} 2 B(h_1) B(h_2) \left(\zeta(h_1) - \zeta(h_{ac}) \right) \left(\zeta(h_2) - \zeta(h_{ac}) \right) \left( p_{h_1} - p_{h_{ac}} \right)  \notag\\
 & \qquad \qquad \qquad \qquad+  \sum_{h_1=h_{ac}+1}^{d-1} B^2(h_1) \left( \zeta(h_1) - \zeta(h_{ac}) \right)^2 \left( \bp_{h_1+1} - p_{h_{ac}} \right)  \notag\\
 & \qquad \qquad \qquad \qquad + 2 \left(\zeta(d-1) - \zeta(h_{ac}) \right) \left( \zeta(d) - \frac{\zeta(d-1) + \zeta(h_{ac})}{2} \right) \left( p_{d} - p_{d-1} \right) \notag \\
 & \qquad \qquad \qquad \qquad + \left(\zeta(d) - \zeta(h_{ac})\right)^2 \left( p_{d-1} - p_{h_{ac}} \right) \Bigg) .
\end{align}
Therefore, the condition $\edgedensity \left( \tC_a, \tC_b \right) - \edgedensity \left( \tC_a, \tC_c \right) > 0$ is equivalent to condition~\eqref{eq:robustness_threshold}.
\end{proof}

\subsubsection{Additional Computation for the Proof of Lemma~\ref{lemma:bottomup_robustness_general}}
\label{subsubsec:detailed_computation}
In this subsection, we detail the tedious computations that allow us to transform Equation~\eqref{eqa:robustness_edge_dense_raw} into Equation~\eqref{eqa:robustness_edge_dense}. for the proof of Lemma~\ref{lemma:bottomup_robustness_general}. In order to highlight the differences between the lines, we sometimes use bold characters.
Let $\Delta \edgedensity = \edgedensity \left( \tC_a, \tC_b \right) - \edgedensity \left( \tC_a, \tC_c \right) $. From Equation~\eqref{eqa:robustness_edge_dense_raw}, we have 
\begin{align*}
    \Delta \edgedensity
    \weq & (1+o(1)) \Bigg(
    - \sum_{h_2=h_{ac}+1}^d 2 B(h_{ac}) B(h_2) \zeta(h_{ac})\zeta(h_2) \left(\bp_{h_{ac}+1} - p_{h_{ac}} \right) \notag\\
    & \qquad \qquad \quad + \ \sum_{h_1 = h_{ac}+1}^{d-2} \sum_{h_2 = h_1+1}^d 2 B(h_1) B(h_2) \zeta(h_1) \zeta(h_2) \left( p_{h_1} - p_{h_{ac}} \right) \notag\\
    & \qquad \qquad \quad + 2 B(d-1)B(d)\zeta(d-1) \zeta(d) \left( p_{d} - p_{h_{ac}} \right) + \sum_{h_1 = h_{ac}}^{d-2} B^2(h_1) \zeta^2(h_1) \left( \bp_{h_1+1} - p_{h_{ac}} \right) \notag\\
    & \qquad \qquad \quad + \sum_{h_1 = d-1}^d B^2(h_1) \zeta^2(h_1) \left( p_{d-1} - p_{h_{ac}} \right)
    \Bigg) \\ 
    \weq & (1+o(1)) \Bigg(
    - \sum_{h_1=h_{ac}+1}^d \boldsymbol{\sum_{h_2=h_{ac}+1}^d} 2 B(h_1)B(h_2)\zeta(h_{ac})\zeta(h_2) \left( \boldsymbol{p_{h_1}} - p_{h_{ac}} \right) \notag\\
    & \qquad \qquad \quad + \sum_{h_1 = h_{ac}+1}^{d-2} \sum_{h_2 = h_1+1}^d 2 B(h_1) B(h_2) \zeta(h_1) \zeta(h_2) \left( p_{h_1} - p_{h_{ac}} \right) \notag\\
    & \qquad \qquad \quad +  2 \zeta(d-1) \zeta(d) \left( p_{d} - p_{h_{ac}} \right) + \sum_{h_1 = h_{ac}}^{d-2} \sum_{h=h_1+1}^d B(h_1)\boldsymbol{B(h)} \zeta^2(h_1) \left( \boldsymbol{p_{h}} - p_{h_{ac}} \right) \notag\\
    & \qquad \qquad \quad + \left(\zeta^2(d-1) + \zeta^2(d) \right)\left( p_{d-1} - p_{h_{ac}} \right)
    \Bigg). 
\end{align*}
Thus, 
\begin{align}
    \Delta \edgedensity
    \weq & (1+o(1)) \Bigg(
    - \sum_{h_1=h_{ac}+1}^d \sum_{h_2=h_{ac}+1}^d 2 B(h_1)B(h_2)\zeta(h_{ac})\zeta(h_2) \left( p_{h_1} - p_{h_{ac}} \right) \notag\\
    & \qquad \qquad \quad  + \sum_{h_1 = h_{ac}+1}^{\boldsymbol{d-1}} \sum_{h_2 = h_1+1}^d 2 B(h_1) B(h_2) \zeta(h_1) \zeta(h_2) \left( p_{h_1} - p_{h_{ac}} \right) \notag\\
    & \qquad \qquad \quad \boldsymbol{- \ 2 \zeta(d-1) \zeta(d) \left( p_{d-1} - p_{h_{ac}} \right)} + 2 \zeta(d-1) \zeta(d) \left( p_{d} - p_{h_{ac}} \right) \notag\\
    & \qquad \qquad \quad + \sum_{h_1 = h_{ac}}^{\boldsymbol{d-1}} \sum_{h=h_1+1}^d B(h_1)B(h) \zeta^2(h_1) \left( p_{h} - p_{h_{ac}} \right) \boldsymbol{- \zeta^2(d-1) \left( p_{d} - p_{h_{ac}}\right)} \notag\\ 
    & \qquad \qquad \quad + \left(\zeta^2(d-1) + \zeta^2(d) \right)\left( p_{d-1} - p_{h_{ac}} \right)
    \Bigg), \notag \\
   \weq & (1+o(1)) \Bigg(
    - \underbrace{\sum_{h_1=h_{ac}+1}^d \sum_{h_2=h_{ac}+1}^d 2 B(h_1)B(h_2)\zeta(h_{ac})\zeta(h_2) \left( p_{h_1} - p_{h_{ac}} \right)}_{T_1} \notag\\
    & \qquad \qquad \quad + \underbrace{\sum_{h_1 = h_{ac}+1}^{d-1} \sum_{h_2 = h_1+1}^d 2 B(h_1) B(h_2) \zeta(h_1) \zeta(h_2) \left( p_{h_1} - p_{h_{ac}} \right)}_{T_2} \notag\\
    & \qquad \qquad \quad + \underbrace{\sum_{h_1 = h_{ac}}^{d-1} \sum_{h=h_1+1}^d B(h_1)B(h) \zeta^2(h_1) \left( p_{h} - p_{h_{ac}} \right)}_{T_3} + \boldsymbol{\zeta^2(d)\left( p_{d-1} - p_{h_{ac}} \right)} \notag\\ 
    & \qquad \qquad \quad + \boldsymbol{\left(2 \zeta(d-1) \zeta(d) - \zeta^2(d-1) \right) \left( p_{d} - p_{d-1} \right)}
    \Bigg). \label{eq:in_proof_HorribleExpressions}
    & 
    & 
    & 
\end{align}
We compute $T_1$ and $T_3$ separately. We have
\begin{align*}
T_1
& \weq \sum_{h_1=h_{ac}+1}^{d-1} \sum_{h_2=h_1+1}^d 2 B(h_1)B(h_2)\zeta(h_{ac})\zeta(h_2) \left( p_{h_1} - p_{h_{ac}} \right) + \sum_{h_1=h_{ac}+1}^d 2 B^2(h_1)\zeta(h_{ac})\zeta(h_1) \left( p_{h_1} - p_{h_{ac}} \right) \\
& \quad +\sum_{h_1=h_{ac}+1}^d \sum_{h_2=h_{ac}+1}^{h_1 -1} 2 B(h_1)B(h_2)\zeta(h_{ac})\zeta(h_2) \left( p_{h_1} - p_{h_{ac}} \right), \\
& \weq \sum_{h_1=h_{ac}+1}^{d-1} \sum_{h_2=h_1+1}^d 2 B(h_1)B(h_2)\zeta(h_{ac})\zeta(h_2) \left( p_{h_1} - p_{h_{ac}} \right) + \sum_{h_1=h_{ac}+1}^d 2 B^2(h_1)\zeta(h_{ac})\zeta(h_1) \left( p_{h_1} - p_{h_{ac}} \right) \\
& \quad + \boldsymbol{\sum_{h_1=h_{ac}+1}^{d-1} \sum_{h_2=h_1+1}^{d}} 2 B(h_1)\zeta(h_{ac})\zeta(h_1) \left( \boldsymbol{B(h_2)}p_{h_2} - \boldsymbol{B(h_2)}p_{h_{ac}} \right), \\
& \weq \sum_{h_1=h_{ac}+1}^{d-1} \sum_{h_2=h_1+1}^d 2 B(h_1)B(h_2)\zeta(h_{ac})\zeta(h_2) \left( p_{h_1} - p_{h_{ac}} \right) + \sum_{h_1=h_{ac}+1}^d 2 B^2(h_1)\zeta(h_{ac})\zeta(h_1) \left( p_{h_1} - p_{h_{ac}} \right) \\
& \quad + \sum_{h_1=h_{ac}+1}^{d-1} 2 \boldsymbol{B^2(h_1)}\zeta(h_{ac})\zeta(h_1) \left( \boldsymbol{\bp_{h_1+1}} - p_{h_{ac}} \right),
\end{align*}
and
\begin{align*}
T_3
& \weq \sum_{h_1=h_{ac}+1}^d \sum_{h_2 = h_{ac}}^{h_1-1} B(h_2)B(h_1) \zeta^2(h_2) \left( p_{h_1} - p_{h_{ac}} \right), \\
& \weq \sum_{h_1=h_{ac}+1}^d \sum_{h_2 = \boldsymbol{h_{ac}+1}}^{h_1-1} B(h_1) B(h_2) \zeta^2(h_2) \left( p_{h_1} - p_{h_{ac}} \right)
\boldsymbol{+ \sum_{h_1=h_{ac}+1}^d B(h_{ac})B(h_1) \zeta^2(h_{ac}) \left( p_{h_1} - p_{h_{ac}} \right)}, \\
& \weq \sum_{h_1=h_{ac}+1}^d \sum_{h_2 = h_{ac}+1}^{h_1-1} B(h_1) B(h_2) \zeta^2(h_2) \left( p_{h_1} - p_{h_{ac}} \right) +  \sum_{h_1=h_{ac}+1}^d \boldsymbol{\sum_{h_2 = h_{ac}+1}^{d}} B(h_1) \boldsymbol{B(h_2)} \zeta^2(h_{ac}) \left( p_{h_1} - p_{h_{ac}} \right), \\
& \weq \sum_{h_1=h_{ac}+1}^d \sum_{h_2 = h_{ac}+1}^{h_1-1} B(h_1) B(h_2) \zeta^2(h_2) \left( p_{h_1} - p_{h_{ac}} \right) +  \sum_{h_1=h_{ac}+1}^d \sum_{h_2 = h_{ac}+1}^{\boldsymbol{h_1 - 1}} B(h_1) B(h_2) \zeta^2(h_{ac}) \left( p_{h_1} - p_{h_{ac}} \right) \\
& \quad+  \sum_{h_1=h_{ac}+1}^d \boldsymbol{B^2(h_1)} \zeta^2(h_{ac}) \left( p_{h_1} - p_{h_{ac}} \right)
+  \sum_{h_1=h_{ac}+1}^{d-1} \sum_{\boldsymbol{h_2 = h_1+1}}^{d} B(h_1) B(h_2) \zeta^2(h_{ac}) \left( p_{h_1} - p_{h_{ac}} \right). 
\end{align*}
We can further manipulate $T_3$ to obtain 
\begin{align*}
T_3 & \weq \sum_{h_1=h_{ac}+1}^d \sum_{h_2 = h_{ac}+1}^{h_1-1} B(h_1) B(h_2) \left(\zeta^2(h_2) + \zeta^2(h_{ac})\right) \left( p_{h_1} - p_{h_{ac}} \right) \\
& \quad+  \sum_{h_1=h_{ac}+1}^d B^2(h_1) \zeta^2(h_{ac}) \left( p_{h_1} - p_{h_{ac}} \right)
+  \sum_{h_1=h_{ac}+1}^{d-1} \sum_{h_2 = h_1+1}^{d} B(h_1) B(h_2) \zeta^2(h_{ac}) \left( p_{h_1} - p_{h_{ac}} \right), \\
& \weq \boldsymbol{\sum_{h_1=h_{ac}+1}^{d-1} \sum_{h_2 = h_1+1}^{d}} B(h_1) \left(\zeta^2(h_1) + \zeta^2(h_{ac})\right) \left( \boldsymbol{B(h_2)}p_{h_2} - \boldsymbol{B(h_2)}p_{h_{ac}} \right) \\
& \quad + \sum_{h_1=h_{ac}+1}^d B^2(h_1) \zeta^2(h_{ac}) \left( p_{h_1} - p_{h_{ac}} \right)
+  \sum_{h_1=h_{ac}+1}^{d-1} \sum_{h_2 = h_1+1}^{d} B(h_1) B(h_2) \zeta^2(h_{ac}) \left( p_{h_1} - p_{h_{ac}} \right), \\
& \weq \sum_{h_1=h_{ac}+1}^{d-1}  \boldsymbol{B^2(h_1)} \left(\zeta^2(h_1) + \zeta^2(h_{ac})\right) \left( \boldsymbol{\bp_{h_1+1}} - p_{h_{ac}} \right) +  \sum_{h_1=h_{ac}+1}^d B^2(h_1) \zeta^2(h_{ac}) \left( p_{h_1} - p_{h_{ac}} \right)\\
& \quad + \sum_{h_1=h_{ac}+1}^{d-1} \sum_{h_2 = h_1+1}^{d} B(h_1) B(h_2) \zeta^2(h_{ac}) \left( p_{h_1} - p_{h_{ac}} \right).
\end{align*}
Combining the expressions for $T_1$ and $T_3$, we obtain   
\begin{align*}
& - T_1 + T_3 \\
& \weq - \sum_{h_1=h_{ac}+1}^{d-1} \sum_{h_2=h_1+1}^d 2 B(h_1)B(h_2)\zeta(h_{ac})\zeta(h_2) \left( p_{h_1} - p_{h_{ac}} \right) - \sum_{h_1=h_{ac}+1}^d 2 B^2(h_1)\zeta(h_{ac})\zeta(h_1) \left( p_{h_1} - p_{h_{ac}} \right) \\
& \quad - \sum_{h_1=h_{ac}+1}^{d-1} 2 B^2(h_1)\zeta(h_{ac})\zeta(h_1) \left( \bp_{h_1+1} - p_{h_{ac}} \right) + \sum_{h_1=h_{ac}+1}^{d-1}  B^2(h_1) \left(\zeta^2(h_1) + \zeta^2(h_{ac})\right) \left( \bp_{h_1+1} - p_{h_{ac}} \right)  \\
&\quad+  \sum_{h_1=h_{ac}+1}^d B^2(h_1) \zeta^2(h_{ac}) \left( p_{h_1} - p_{h_{ac}} \right)
+  \sum_{h_1=h_{ac}+1}^{d-1} \sum_{h_2 = h_1+1}^{d} B(h_1) B(h_2) \zeta^2(h_{ac}) \left( p_{h_1} - p_{h_{ac}} \right) \\
& \weq \sum_{h_1=h_{ac}+1}^{d-1} \sum_{h_2 = h_1+1}^{d} B(h_1) B(h_2) \left( \zeta^2(h_{ac}) -  2 \zeta(h_{ac})\zeta(h_2) \right) \left( p_{h_1} - p_{h_{ac}} \right)  \\
&\quad + \sum_{h_1=h_{ac}+1}^d B^2(h_1) \left(\zeta^2(h_{ac}) - 2\zeta(h_{ac})\zeta(h_1) \right) \left( p_{h_1} - p_{h_{ac}} \right) \\ 
& \quad + \sum_{h_1=h_{ac}+1}^{d-1} B^2(h_1) \left( \zeta(h_1) - \zeta(h_{ac}) \right)^2 \left( \bp_{h_1+1} - p_{h_{ac}} \right). 
\end{align*}
Furthermore, 
\begin{align*}
& - T_1 + T_3 \\
& \weq \sum_{h_1=h_{ac}+1}^{d-1} \sum_{h_2 = h_1+1}^{d} B(h_1) B(h_2) \left( \zeta^2(h_{ac}) -  2 \zeta(h_{ac})\zeta(h_2) \right) \left( p_{h_1} - p_{h_{ac}} \right)  \\
& \quad+ \sum_{h_1=h_{ac}+1}^{\boldsymbol{d-1}} \boldsymbol{\sum_{h_2 = h_1+1}^d} B(h_1) \boldsymbol{B(h_2)} \left(\zeta^2(h_{ac}) - 2\zeta(h_{ac})\zeta(h_1) \right) \left( p_{h_1} - p_{h_{ac}} \right)  \\
& \quad + \boldsymbol{B^2(d) \left(\zeta^2(h_{ac}) - 2\zeta(h_{ac})\zeta(d) \right) \left( p_{d} - p_{h_{ac}} \right) } + \sum_{h_1=h_{ac}+1}^{d-1} B^2(h_1) \left( \zeta(h_1) - \zeta(h_{ac}) \right)^2 \left( \bp_{h_1+1} - p_{h_{ac}} \right), \\
& \weq \sum_{h_1=h_{ac}+1}^{d-1} \sum_{h_2 = h_1+1}^{d} B(h_1) B(h_2) \left( 2 \zeta^2(h_{ac}) -  2 \zeta(h_{ac})\zeta(h_2) - 2\zeta(h_{ac})\zeta(h_1)\right) \left( p_{h_1} - p_{h_{ac}} \right) \\
& \quad + \left(\zeta^2(h_{ac}) - 2\zeta(h_{ac})\zeta(d) \right) \left( p_{d} - p_{h_{ac}} \right) + \sum_{h_1=h_{ac}+1}^{d-1} B^2(h_1) \left( \zeta(h_1) - \zeta(h_{ac}) \right)^2 \left( \bp_{h_1+1} - p_{h_{ac}} \right),
\end{align*}
and 
\begin{align*}
& -T_1+T_3 - \left( p_{d} - p_{h_{ac}} \right) \left(\zeta^2(h_{ac}) - 2\zeta(h_{ac})\zeta(d) \right) + T_2 \\
& \weq \sum_{h_1=h_{ac}+1}^{d-1} \sum_{h_2 = h_1+1}^{d} B(h_1) B(h_2) \left( 2 \zeta^2(h_{ac}) -  2 \zeta(h_{ac})\zeta(h_2) - 2\zeta(h_{ac})\zeta(h_1)\right) \left( p_{h_1} - p_{h_{ac}} \right) \\
& \quad+ \sum_{h_1=h_{ac}+1}^{d-1} B^2(h_1) \left( \zeta(h_1) - \zeta(h_{ac}) \right)^2 \left( \bp_{h_1+1} - p_{h_{ac}} \right) \\
& \quad + \sum_{h_1 = h_{ac}+1}^{d-1} \sum_{h_2 = h_1+1}^d 2 B(h_1) B(h_2) \zeta(h_1) \zeta(h_2) \left( p_{h_1} - p_{h_{ac}} \right), \\
& \weq \sum_{h_1=h_{ac}+1}^{d-1} \sum_{h_2 = h_1+1}^{d} 2 B(h_1) B(h_2)\left( \zeta(h_1) \zeta(h_2) + \zeta^2(h_{ac}) - \zeta(h_{ac})\zeta(h_2) -\zeta(h_{ac})\zeta(h_1)\right) \left( p_{h_1} - p_{h_{ac}} \right)  \\
& \quad + \sum_{h_1=h_{ac}+1}^{d-1} B^2(h_1) \left( \zeta(h_1) - \zeta(h_{ac}) \right)^2 \left( \bp_{h_1+1} - p_{h_{ac}} \right), \\
& \weq \sum_{h_1=h_{ac}+1}^{d-1} \sum_{h_2 = h_1+1}^{d} 2 B(h_1) B(h_2) \boldsymbol{\left( \zeta(h_1) - \zeta(h_{ac}) \right) \left(\zeta(h_2) -\zeta(h_{ac}) \right)} \left( p_{h_1} - p_{h_{ac}} \right) \\
& \quad + \sum_{h_1=h_{ac}+1}^{d-1} B^2(h_1) \left( \zeta(h_1) - \zeta(h_{ac}) \right)^2 \left( \bp_{h_1+1} - p_{h_{ac}} \right).
\end{align*}
Finally we come back to Equation~\eqref{eq:in_proof_HorribleExpressions} and obtain
\begin{align*}
    \Delta \edgedensity
    & \weq (1+o(1)) \Big(- T_1 + T_2 + T_3 + \zeta^2(d)\left( p_{d-1} - p_{h_{ac}} \right) \\
    & \qquad \qquad \qquad \quad +\left(2 \zeta(d-1) \zeta(d) - \zeta^2(d-1) \right) \left( p_{d} - p_{d-1} \right) 
    \Big), \\ 
    & \weq (1+o(1)) \Big(\sum_{h_1=h_{ac}+1}^{d-1} \sum_{h_2 = h_1+1}^{d} 2 B(h_1) B(h_2) \left( \zeta(h_1) - \zeta(h_{ac}) \right) \left(\zeta(h_2) -\zeta(h_{ac}) \right) \left( p_{h_1} - p_{h_{ac}} \right) \\
    & \qquad \qquad \qquad \quad + \sum_{h_1=h_{ac}+1}^{d-1} B^2(h_1) \left( \zeta(h_1) - \zeta(h_{ac}) \right)^2 \left( \bp_{h_1+1} - p_{h_{ac}} \right) 
    + \zeta^2(d)\left( p_{d-1} - p_{h_{ac}} \right) \\
    & \qquad \qquad \qquad \quad + \left(\zeta^2(h_{ac}) - 2\zeta(h_{ac})\zeta(d) \right)\left( p_{d} - p_{h_{ac}} \right) + \left(2 \zeta(d-1) \zeta(d) - \zeta^2(d-1) \right) \left( p_{d} - p_{d-1} \right) 
    \Big), \\ 
    & \weq (1+o(1)) \Big(\sum_{h_1=h_{ac}+1}^{d-1} \sum_{h_2 = h_1+1}^{d} 2 B(h_1) B(h_2) \left( \zeta(h_1) - \zeta(h_{ac}) \right) \left(\zeta(h_2) -\zeta(h_{ac}) \right) \left( p_{h_1} - p_{h_{ac}} \right) \\
    & \qquad \qquad \qquad \quad+ \sum_{h_1=h_{ac}+1}^{d-1} B^2(h_1) \left( \zeta(h_1) - \zeta(h_{ac}) \right)^2 \left( \bp_{h_1+1} - p_{h_{ac}} \right) 
    + \zeta^2(d)\left( p_{d-1} - p_{h_{ac}} \right) \\
    & \qquad \qquad \qquad \quad+ \left(\zeta^2(h_{ac}) - 2\zeta(h_{ac})\zeta(d) \right) \left( \left( p_{d} \boldsymbol{- p_{d-1}} \right) + \left( \boldsymbol{p_{d-1}} - p_{h_{ac}} \right) \right) \\
    & \qquad \qquad \qquad \quad + \left(2 \zeta(d-1) \zeta(d) - \zeta^2(d-1) \right) \left( p_{d} - p_{d-1} \right) 
    \Big), \\ 
    &= (1+o(1)) \Bigg( \sum_{h_1=h_{ac}+1}^{d-1} \sum_{h_2 = h_1+1}^{d} 2 B(h_1) B(h_2) \left(\zeta(h_1) - \zeta(h_{ac}) \right) \left(\zeta(h_2) - \zeta(h_{ac}) \right) \left( p_{h_1} - p_{h_{ac}} \right)  \notag\\
    & \qquad \qquad \qquad \quad +  \sum_{h_1=h_{ac}+1}^{d-1} B^2(h_1) \left( \zeta(h_1) - \zeta(h_{ac}) \right)^2 \left( \bp_{h_1+1} - p_{h_{ac}} \right)  \notag\\
    & \qquad \qquad \qquad \quad +  2 \left(\zeta(d-1) - \zeta(h_{ac}) \right) \left( \zeta(d) - \frac{\zeta(d-1) + \zeta(h_{ac})}{2} \right) \left( p_{d} - p_{d-1} \right) \notag \\
    & \qquad \qquad \qquad \quad + \left(\zeta(d) - \zeta(h_{ac})\right)^2  \left( p_{d-1} - p_{h_{ac}} \right) \Bigg), 
\end{align*}
and this last expression establishes Equation~\eqref{eqa:robustness_edge_dense}.

\section{Proofs of Sections~\ref{sec:intermediate_exact_recovery} and~\ref{sec:discussion}}

\subsection{Proof of Theorem~\ref{thm:exact_recovery_intermediate_levels}}
\label{appendix:proof_thm:exact_recovery_intermediate_levels}

 Let us first recall some notations and results for the general SBM. Let $\pi \in (0,1)^K$ be a probability vector and $p \in (0,1)^{K \times K}$ a symmetric matrix. We denote $G \sim \SBM(N,\pi, p)$~if
\begin{itemize}
    \item each node $i \in [N]$ of $G$ is assigned to a unique community $C_k$, with $k \sim \Multi(1, [K], \pi)$;
    \item two nodes $i \in C_k$ and $j \in C_{\ell}$ are connected with probability $p_{k\ell}$.
\end{itemize}
In the following, we denote by $\cA = \{A_1, \cdots, A_t\}$ a collection of $t$ non-empty and non-overlapping subsets of $[K]$ such that $\cup_{r=1}^t A_r = [K]$. An algorithm exactly recovers $\cA$ if it assigns each node $i$ in $G$ to an element of $\{ A_1, \cdots, A_t \}$ that contains its true community (up to a relabelling of the $A_r$'s) with probability $1-o(1)$. 
For two non-overlapping subsets $A, B \subset [K]$, we denote by $ \CH \left( A, B \right) = \CH(A, B, \pi, p)$ the quantity
\begin{align*}
 \CH \left( A, B \right) \weq \min_{ \substack{ a \in A \\ b \in B } } \, \sup_{ t \in (0,1) } (1-t) \sum_{ c = 1}^K \pi_c \dren_t \left( \Ber(p_{ab}) \| \Ber(p_{bc}) \right),
\end{align*}
and by $I(\cA) = I(\cA, \pi, p)$ the quantity
\begin{align*}
 I(\cA) \weq \min_{A_r \ne A_t \in \cA} \CH \left( A_r, A_t \right).
\end{align*}
We define the finest partition of $[K]$ with threshold $\tau$ the partition $\cA^*_\tau$ of $[K]$ such that 
\begin{align*}
    \cA^*_\tau \weq \argmax_{\cA} \left\{ \left| \cA \right| \colon I(\cA) > \tau  \right\}.
\end{align*}
It is the partition of $[K]$ in the largest number of subsets, among all partitions that verify $I(\cA) > \tau$. The following theorem holds.
\begin{theorem}
\label{thm:exact_recovery_general_SBM}
 Let $G \sim \SBM(N,\pi,p)$ and $\cA$ a partition of $[K]$ in $t$ non-empty and non-overlapping elements. Suppose that no two rows of $p$ are equal. Note that if $\tau$ is too large, such a partition might not exist. The following holds.
 \begin{enumerate}[(i)]
  \item if $I(\cA) < (1-\Omega(1)) \, \, N^{-1} \log N$, then no algorithm can exactly recover $\cA$;
  \item the \textit{agnostic-degree-profiling} algorithm of~\cite{abbe2015recovering} exactly recovers the finest partition with threshold $\tau = (1+\Omega(1)) \, \, N^{-1} \log N$. 
 \end{enumerate}
\end{theorem}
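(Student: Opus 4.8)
The plan is to recognise this as the exact-recovery theorem of \cite{abbe2015recovering} (see also \cite{abbe2015community}), re-stated so that the object to be recovered is a prescribed partition $\cA$ of $[K]$ rather than the community labels, and to assemble its two classical ingredients: a genie-aided converse for~(i) and the ``almost-exact clustering followed by local degree profiling'' algorithm for~(ii). The operative regime throughout is $p_{ab}=\Theta(N^{-1}\log N)$, which is the only range in which the hypothesis of~(i) or the threshold in~(ii) is non-vacuous and in which exact recovery is meaningful.

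For the impossibility part~(i), I would reduce to a single binary hypothesis test. Choose $a^*\in A_r$ and $b^*\in A_s$ with $r\ne s$ attaining $I(\cA)=\CH(a^*,b^*)$. Revealing (through a genie) the true community labels of every node except one node $v$, which is known to lie in $C_{a^*}$ or $C_{b^*}$, any estimator that exactly recovers $\cA$ must in particular decide whether the part of $v$ is $A_r$ or $A_s$; hence it suffices to exhibit, with probability $1-o(1)$, at least one node on which the optimal test of ``$v\in C_{a^*}$'' versus ``$v\in C_{b^*}$'' errs. The degree profile of $v$, namely the vector $(\sum_{j\in C_c}A_{vj})_{c\in[K]}$ of edge counts of $v$ into each block, is a product of independent Bernoulli samples, and a Chernoff computation — tight to first exponential order in this regime after the standard exponential-tilting refinement — shows that its likelihood under $b^*$ beats that under $a^*$ with probability $\exp(-(1+o(1))\sup_{t\in(0,1)}\sum_c |C_c|(1-t)\dren_t(\Ber(p_{a^*c})\|\Ber(p_{b^*c})))=\exp(-(1+o(1))N\,I(\cA))$, using $|C_c|=\pi_c N(1+o(1))$. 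Under $I(\cA)<(1-\Omega(1))N^{-1}\log N$ this probability is at least $N^{-1+\Omega(1)}$, so the expected number of such ``confusable'' nodes in $C_{a^*}$ diverges like $N^{\Omega(1)}$; since for distinct nodes of $C_{a^*}$ these events are conditionally independent given the labels (their incident edge sets overlap in a single edge, of negligible effect), a Chebyshev/second-moment argument gives that at least one confusable node exists with probability $1-o(1)$, so the MAP rule, and hence every algorithm, fails to recover $\cA$. The non-degeneracy assumption that no two rows of $p$ coincide is used to ensure that $A_r\ne A_s$ genuinely, so that misassigning $v$ is a real error. The delicate step here is the matching \emph{lower} bound on the single-node confusion probability (Chernoff's inequality supplies only the useless upper bound in this direction), together with the correlation bookkeeping in the second-moment estimate.

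For the achievability part~(ii), I would run the agnostic-degree-profiling algorithm of \cite{abbe2015recovering}: (a) estimate $K$, $\pi$ and $p$ consistently from $G$ — feasible at density $\Theta(N^{-1}\log N)$ — and form $\cA^*_\tau$ from the estimates; because each $\CH$-divergence between distinct labels is bounded away from $\tau$, the estimated $\cA^*_\tau$ equals the true one with probability $1-o(1)$; (b) split the edge set into two independent subgraphs and run spectral clustering on the first to obtain a preliminary labelling correct on all but a vanishing fraction of nodes; (c) using the second subgraph and the preliminary labels of the other nodes, reassign each node $v$ to the part of $\cA^*_\tau$ maximising the plug-in log-likelihood of its degree profile. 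Conditioned on the preliminary labelling being accurate, the plug-in likelihood differs from the true one by a factor $1+o(1)$, so the same large-deviation estimate as in~(i), now used as an \emph{upper} bound, shows the refined label of $v$ lies in the correct part of $\cA^*_\tau$ except with probability $\exp(-(1+o(1))N\,I(\cA^*_\tau))\le N^{-(1+\Omega(1))}$ once $I(\cA^*_\tau)>\tau=(1+\Omega(1))N^{-1}\log N$; a union bound over the $N$ nodes then yields exact recovery of $\cA^*_\tau$. If one wrote this out from scratch, the main obstacle would be step~(b) — producing a uniformly accurate preliminary clustering without knowing $K,\pi,p$ and keeping it probabilistically independent of the edges used in~(c) — but this is precisely what \cite{abbe2015recovering} establishes, so the efficient route is to invoke their theorem after checking that our SBM (bounded $K$, $\pi$ bounded away from $0$, the $\Theta(N^{-1}\log N)$ scaling, no two rows of $p$ equal) lies within its scope, and that $\cA^*_\tau$ is well defined (the family $\{\cA: I(\cA)>\tau\}$ is closed under common refinement, since a pair of labels in different parts of the common refinement must already lie in different parts of one of the two partitions).
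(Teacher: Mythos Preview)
Your proposal is correct and takes essentially the same approach as the paper: the paper's proof consists solely of citing \cite[Theorem~1]{abbe2015community} for~(i) and \cite[Theorem~4]{abbe2015recovering} for~(ii), and your writeup is precisely a sketch of those two theorems' proofs together with the observation that the present statement lies within their scope. Your additional detail (the genie-aided converse, the second-moment argument, the edge-splitting and plug-in refinement) is accurate and goes beyond what the paper provides, but it is not a different route.
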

\begin{proof}The proof of point~(i) can be found in \cite[Theorem~1]{abbe2015community} while point~(ii) corresponds to \cite[Theorem~4]{abbe2015recovering}.
\end{proof}

Finally, we have the following lemma for the estimation of the link probabilities $p_{ab}$.

\begin{lemma}
\label{lemma:convergence_empiricalDistance_betweenClusters_exactEstimatorCase}
 Let $\hcC$ be an exact estimator of $\cC$ and suppose that $\min_{u \in \cT} p(u) = \omega( N^{-2} )$. Then with high probability $\edgedensity(\hC_a, \hC_b) = (1+o(1)) p_{ab}$. 
\end{lemma}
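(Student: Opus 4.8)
The plan is to exploit the exactness of $\hcC$ to collapse the statement to an elementary binomial concentration bound, thereby avoiding entirely the Grothendieck-inequality machinery needed in the proof of Lemma~\ref{lemma:convergence_empiricalDistance_betweenClusters}. Since $\hcC$ is an exact estimator of $\cC$, the event $E = \{\hcC = \cC\}$ (with community labels aligned by the minimizing permutation in~\eqref{eq:def_ace}) satisfies $\pr(E) = 1-o(1)$. On $E$ we have $\hC_a = C_a$ and $\hC_b = C_b$, hence $\edgedensity(\hC_a,\hC_b) = \edgedensity(C_a,C_b)$. It therefore suffices to show that $\edgedensity(C_a,C_b) = (1+o(1))p_{ab}$ with probability $1-o(1)$; combining this with a union bound over $E^c$ and over the $O(K^2)$ pairs of leaves $a,b$ then gives the lemma. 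Crucially, this reduction uses nothing about the (possible) correlation between $\hcC$ and $A$: on $E$ the identity $\hC_a = C_a$ holds regardless of how $\hcC$ was produced, so after conditioning on $E$ the estimator drops out of the picture.

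First I would recall that the true community sizes $(|C_c|)_{c\in\cL_\cT} \sim \Multi(N,\cL_\cT,\pi)$ are independent of the adjacency matrix, and that by standard multinomial concentration (as already used in Lemma~\ref{lemma:convergence_empiricalDistance_betweenClusters}) $|C_c| = (1+o(1))\pi_c N$ for all $c$ with probability $1-o(1)$; in particular $|C_a|\cdot|C_b| = \Theta(N^2)$. Conditioning on the assignment $Z$, for $a \ne b$ the quantity $w(C_a,C_b) = \sum_{i\in C_a, j\in C_b} A_{ij}$ is a sum of $|C_a|\cdot|C_b|$ i.i.d.\ $\Ber(p_{ab})$ variables, so $w(C_a,C_b)\sim\Bin(|C_a||C_b|,\,p_{ab})$ with mean $m_{ab} = |C_a||C_b|p_{ab}$ (the diagonal case $a=b$ is identical up to the factor $w(C_a,C_a) = 2\sum_{i<j\in C_a}A_{ij}$, since $A$ has zero diagonal). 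Because $p_{ab}\ge\min_{u\in\cT}p(u) = \omega(N^{-2})$ while $|C_a||C_b| = \Theta(N^2)$, the mean satisfies $m_{ab}\to\infty$, so a multiplicative Chernoff bound yields, for any fixed $\epsilon\in(0,1)$,
\[
 \pr\big(\, |w(C_a,C_b) - m_{ab}| \ge \epsilon m_{ab} \,\big|\, Z \,\big) \wle 2\exp\!\left( -\tfrac{\epsilon^2 m_{ab}}{3} \right) \weq o(1)
\]
on the high-probability event where the sizes concentrate. Hence $w(C_a,C_b) = (1+o(1))m_{ab}$, and therefore $\edgedensity(C_a,C_b) = w(C_a,C_b)/(|C_a||C_b|) = (1+o(1))p_{ab}$ with probability $1-o(1)$, as required.

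In contrast with Lemma~\ref{lemma:convergence_empiricalDistance_betweenClusters}, there is essentially no hard part here: exactness removes both the bias/variance cross terms and the dependence between $\hcC$ and the edges, since it lets us replace $\hC_a$ by $C_a$ on an event of probability $1-o(1)$ \emph{before} performing any probabilistic estimate. The only points requiring (minor) care are that $\edgedensity(\hC_a,\hC_b)$ be well-defined, which holds on $E$ because $C_a, C_b$ are nonempty with high probability, and that the Chernoff tail vanish, which it does precisely because $m_{ab} = \omega(1)$ — the exact content of the hypothesis $\min_{u\in\cT}p(u) = \omega(N^{-2})$.
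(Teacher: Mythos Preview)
Your proposal is correct and follows essentially the same argument as the paper: use exactness to replace $\hC_a$ by $C_a$ on a high-probability event, then apply binomial concentration with mean $|C_a||C_b|p_{ab}=\omega(1)$. Your write-up is somewhat more detailed (explicit Chernoff bound, union bound over pairs, remark on the diagonal case), but the route and the key steps are the same.
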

\begin{proof}
    We assume that the permutation $\tau$ in the definition of the loss function (Equation~\eqref{eq:def_ace}) is the identity. Furthermore, we shorten $\edgedensity(\hC_a, \hC_b)$ by $\hp_{ab}$. Since $\hcC$ is an exact estimator of $\cC$, we have for $N$ large enough that $C_a = \hC_a$ for all $a \in \cL_{\cT}$. Hence,
    \begin{align*}
     \hp_{ab} \weq \frac{ \sum_{ i \in C_a, j \in C_b } A_{ij} }{ |C_a | \cdot |C_b| }.
    \end{align*}
    Since $\sum_{ i \in C_a, j \in C_b } A_{ij} \sim \Bin( |C_a| \cdot |C_b|, p_{ab})$ and $p_{ab} = \omega( N^{-2} )$ as well as $|C_a|, |C_b| = \Theta(N)$, we conclude that $\hp_{ab} = (1+o(1)) p_{ab}$ using the concentration of binomial distribution.
\end{proof}

We can now prove Theorem~\ref{thm:exact_recovery_intermediate_levels}. 
\begin{proof}[Proof of Theorem~\ref{thm:exact_recovery_intermediate_levels}]
Since HSBM is a special instance of the general SBM, in which the communities are indexed by elements of $\cL_{\cT}$ instead of elements of $[K]$, we can directly apply Theorem~\ref{thm:exact_recovery_general_SBM}. 
The set $\cS_q$ of nodes at depth $q$ naturally forms a partition $\cA = (A_t)_{t \in \cS_q}$ of $\cL_{\cT}$ as follows: $\ell \in A_t$ iff $\lca(t,\ell) = t$. Exactly recovering this partition is equivalent to recovering exactly $\mc(q,\cC,\cT)$, and therefore point~(i) of Theorem~\ref{thm:exact_recovery_intermediate_levels} follows from point~(i) of Theorem~\ref{thm:exact_recovery_general_SBM}.

Conversely, point~(ii) of Theorem~\ref{thm:exact_recovery_general_SBM} implies that we can recover the finest partition $q^*$. In the case of an HSBM, this corresponds to the largest $q$ such that $I_q > (1+\Omega(1)) \, N^{-1} \log N$. 
Since $q \mapsto I_q$ is non-decreasing we have $q \ge q^*$. 
Moreover, using Lemma~\ref{lemma:convergence_empiricalDistance_betweenClusters_exactEstimatorCase}, we proceed as in the proof of Theorem~\ref{thm:performance_bottomUp} to show that we recover the tree $\cT[\cS_{\le q^*}]$. In particular, we can recover all the super-communities at any higher depth $q \ge q^*$, which are exactly the depths verifying $I_q > (1+\Omega(1)) \, N^{-1} \log N$. 
\end{proof}

\subsection{Proof of Lemma~\ref{lemma:expression_Iq_BTSBM}}
\label{appendix:proof_expression_Iq_BTSBM}

\begin{proof}[Proof of Lemma~\ref{lemma:expression_Iq_BTSBM}]
 Let $a, b$ be two leaves of $\cT$ such that their least common ancestor is at a depth $s$ strictly less than $q$, \textit{i.e.,} $|\lca(a,b)| = s < q$. For any leaf $c \in \cL_T$ we have $p_{ac} = p_{bc}$ if $c \not \in \cL_{\cT[u]}$. Therefore, the sum in~\eqref{eq:CH_definition} can be limited to $c \in \cL_{\cT[u]}$ so that
 \begin{align*}
  \CH \left( a,b \right) 
   & \weq \frac1K \sup_{ t \in (0,1) } (1-t) \sum_{ c \in \cL_{ \cT[u] } } \dren_t \left( \Ber( p_{a c} ) \| \Ber( p_{ bc } ) \right).
 \end{align*}
 In the following, we let $u = \lca(a,b)$ and $s = |u|$. For any two nodes $v,w \in \cT$, we denote by $\similarity(v,w)$ the depth of the least common ancestor to $v$ and $w$, that is $\similarity(v,w) = |\lca(v,w)|$. 
 Finally, we denote by $\Gamma_{a,b} (v,w)$ the set of leaves of $\cT[u]$ for which the common ancestor with $a$ is $v$ and the common ancestor with $b$ is $w$, \textit{i.e.,} 
 \begin{align*}
  \Gamma_{a,b} (v,w) \weq \left\{ c \in \cL_{\cT[u]} \colon \lca(a,c) = v \text{ and } \lca(b,c) = w \right\}.
 \end{align*}
 We have 
 \begin{align*}
   \left| \Gamma_{a,b}(v,w) \right| \weq 
   \begin{cases}
     1 & \text{ if } \similarity( a, v) = d \text{ and } \similarity(b,w) = s \text{ (this is equivalent to $v=a$ and $w=u$)}, \\
     1 & \text{ if } \similarity( a, v) = s \text{ and } \similarity(b,w) = d \text{ (this is equivalent to $v=u$ and $w=b$)}, \\
     2^{k-1} & \text{ if } \similarity( a, v) = d-k \text{ and } \similarity(b,w) = s \text{ for some } k \in [d-s-1], \\
     2^{k-1} & \text{ if } \similarity( a, v) = s \text{ and } \similarity(b,w) = d-k \text{ for some } k \in [d-s-1], \\
     0 & \text{ otherwise.} 
   \end{cases}
 \end{align*}
  Moreover, for any $c\in \Gamma_{a,b} (v,w)$ we have $p_{ac} = p(\lca(a,c)) = p_{\similarity(a,v)}$ and similarly $p_{bc} = p_{\similarity(b,w)}$. Thus,
  \begin{align}
  \label{eq:in_proof}
    \CH(a,b) \weq \frac1K \sup_{t \in (0,1) } (1-t) \left\{ \dren_t\left( p_d \| p_s \right) + \dren_t\left( p_s \| p_d \right) + \sum_{k=1}^{d-s-1} 2^{k-1} \left[ \dren_t\left( p_{d-k} \| p_s \right) + \dren_t\left( p_{s} \| p_{d-k} \right) \right] \right\},
  \end{align}
  where we shortened $\dren_t(\Ber(p)\|\Ber(q))$ by $\dren_t(p,q)$. 
  
  Finally, let $P, Q$ be two probability distributions. By the concavity of $t \mapsto (1-t) \dren_t( P \| Q )$ and the relation $(1-t)\dren_t( P \| Q ) = t \dren_{1-t} ( Q \| P )$ (see for example~\cite{vanErven2014renyi}), the function inside the $\sup$ of Equation~\ref{eq:in_proof} is concave and symmetric around $t = 1/2$. Therefore the $\sup$ of Equation~\ref{eq:in_proof} is achieved at $t=1/2$. Hence, 
  \begin{align*}
      \CH(a,b) \weq \frac1K \left[ \dren_{1/2}\left( p_s, p_d \right) + \sum_{k=1}^{d-s-1} 2^{k-1} \dren_{1/2} \left( p_s, p_{d-k} \right) \right] \weq H_s.
  \end{align*}
  This last quantity $H_s$ depends on $a,b$ only via their similarity $\similarity(a,b) = s$. Moreover, since the network is assortative we have $p_{s} < p_{s+1}$ and thus $\dren_{1/2} \left( p_s, p_{\ell} \right) > \dren_{1/2} \left( p_{s+1}, p_{\ell} \right)$ for $\ell \ge s+1$. Therefore $s\mapsto H_s$ is decreasing, and consequently
  \begin{align*}
    I_q \weq \min_{ \substack{ a \ne b \in \cL_\cT \\ \similarity(a,b) < q } } \CH(a,b) \weq \min_{s<q} H_s \weq H_{q-1}. 
  \end{align*}
\end{proof}

\subsection{Comparing Top-down versus Bottom-up Conditions}
\label{subsection:proof_bottom_up_vs_top_down}

\begin{lemma}
\label{lemma:bottom_up_vs_top_down}
 For any $q \in \{1, \cdots, d-1 \}$ we have $J_q^{ \bua } > J_q^{ \tda }$. When $q=d$, we have $J_d^{ \bua } = J_d^{ \tda }$.
\end{lemma}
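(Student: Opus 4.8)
The plan is to rewrite both $J_q^{\bua}$ and $J_q^{\tda}$ in a common weighted form, after which the inequality collapses to a single application of the Cauchy--Schwarz inequality. First I would dispose of the boundary case $q=d$: here the sum $\sum_{k=1}^{d-q}$ is empty, so~\eqref{eq:def_Jqtd} and~\eqref{eq:def_Jqbua} both reduce to $\frac{1}{2^d}\bigl(\sqrt{a_d}-\sqrt{a_{d-1}}\bigr)^2$, giving $J_d^{\bua}=J_d^{\tda}$ immediately.

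For $q\in\{1,\dots,d-1\}$, so that $d-q\ge 1$, I would introduce the following bookkeeping. Index the terms by $k\in\{0,1,\dots,d-q\}$ with weights $w_0=1$ and $w_k=2^{k-1}$ for $k\ge 1$; the total weight is $W:=\sum_k w_k = 1+(2^{d-q}-1)=2^{d-q}$. Set $x_0:=\sqrt{a_d}$, $x_k:=\sqrt{a_{d-k}}$ for $k\ge 1$, and $y:=\sqrt{a_{q-1}}$, and write $S_1:=\sum_k w_k x_k$, $S_2:=\sum_k w_k x_k^2$. With this notation, and using $\sqrt{2^{d-q}a_{q-1}}=\sqrt{W}\,y$, the definitions~\eqref{eq:def_Jqtd}--\eqref{eq:def_Jqbua} become $2^d J_q^{\tda}=\bigl(\sqrt{S_2}-\sqrt{W}\,y\bigr)^2=S_2-2y\sqrt{W S_2}+Wy^2$ and $2^d J_q^{\bua}=\sum_k w_k(x_k-y)^2=S_2-2yS_1+Wy^2$. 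Subtracting these yields the clean identity $2^d\bigl(J_q^{\bua}-J_q^{\tda}\bigr)=2y\bigl(\sqrt{W S_2}-S_1\bigr)$.

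To finish, I would invoke Cauchy--Schwarz in the form $S_1=\sum_k(\sqrt{w_k})(\sqrt{w_k}\,x_k)\le\sqrt{\sum_k w_k}\,\sqrt{\sum_k w_k x_k^2}=\sqrt{W S_2}$, with equality only when all the $x_k$ coincide. Since $d-q\ge 1$, both $x_0=\sqrt{a_d}$ and $x_1=\sqrt{a_{d-1}}$ occur, and assortativity ($p_0<\dots<p_d$, hence $a_0<\dots<a_d$) forces $x_0\ne x_1$, so the inequality is strict; combined with $y=\sqrt{a_{q-1}}>0$ this gives $J_q^{\bua}>J_q^{\tda}$. There is no real obstacle here: the only point to get right is that the geometric weights $1,1,2,4,\dots,2^{d-q-1}$ sum to exactly $2^{d-q}=W$, so that the two expressions share the same $S_2$ and $Wy^2$ terms and differ only in the cross term — which is precisely where Cauchy--Schwarz bites.
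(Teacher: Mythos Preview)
Your proof is correct. Both you and the paper reduce the difference $J_q^{\bua}-J_q^{\tda}$ to the same expression: in your notation $2y\bigl(\sqrt{WS_2}-S_1\bigr)$, which is the paper's $2\sqrt{a_{q-1}}(D-E)$ with $D=\sqrt{WS_2}$ and $E=S_1$. The only difference is how the remaining inequality $\sqrt{WS_2}\ge S_1$ is established: the paper expands $D^2-E^2$ by hand into an explicit double sum of squares $\sum_{k}2^{k-1}(\sqrt{a_d}-\sqrt{a_{d-k}})^2+\sum_{k<l}2^{k+l-2}(\sqrt{a_{d-k}}-\sqrt{a_{d-l}})^2$, whereas you recognize it as a direct instance of Cauchy--Schwarz with the equality case identified. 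Your route is shorter and conceptually cleaner; the paper's expansion is essentially the standard proof of Cauchy--Schwarz written out in coordinates, which has the minor advantage of making the strict positivity visible termwise without appealing to the equality condition.
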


\begin{proof}[Proof of Lemma~\ref{lemma:bottom_up_vs_top_down}]
 We have
 \begin{align*}
  2^d J_q^{ \bua } & \weq a_q + 2^{d-q} a_{q-1} + \sum_{k=1}^{d-q} 2^{k-1} a_{d-k} - 2 \sqrt{ a_{q-1} } \left( \sqrt{a_d} + \sum_{k=1}^{d-q} 2^{k-1} \sqrt{ a_{d-k} } \right), \\
  2^d J_q^{ \tda } & \weq a_q + 2^{d-q} a_{q-1} + \sum_{k=1}^{d-q} 2^{k-1} a_{d-k} - 2 \sqrt{ 2^{d-q} a_{q-1} } \sqrt{ a_d + \sum_{k=1}^{d-q} 2^{k-1} a_{d-k} }.
 \end{align*}
 Hence,
 \begin{align*}
   2^d \left( J_q^{ \bua } - J_q^{ \tda } \right) 
   & \weq 2 \sqrt{a_{q-1}} \left( 2^{ \frac{d-q}{2}} \sqrt{ a_d + \sum_{k=1}^{d-q} 2^{k-1} a_{d-k} } -  \left( \sqrt{a_d} + \sum_{k=1}^{d-q} 2^{k-1} \sqrt{ a_{d-k} } \right) \right) \\
   & \weq  2 \sqrt{a_{q-1}} (D-E) \\
   & \weq  \frac{2 \sqrt{a_{q-1}}}{D+E} (D^2-E^2), 
 \end{align*}
where 
\begin{align*}
 D \weq 2^\frac{d-q}{2} \sqrt{\left( a_d + \sum_{k=1}^{d-q} 2^{k-1} a_{d-k} \right)} 
 \quad \text{ and } \quad 
 E \weq \sqrt{a_d} + \sum_{k=1}^{d-q} 2^{k-1} \sqrt{ a_{d-k} }.
\end{align*}
Since $\frac{2 \sqrt{a_{q-1}}}{D+E} > 0$, we focus on showing $D\geq E$.

First, we notice that when $d=q$, $D=E=\sqrt{a_d}$ and hence $J_d^{\tda} = J_d^{\bua}$.

Next, when $d-q \geq 1$, we have
\begin{align*}
    D^2 - E^2 &\weq 2^{d-q} \left( a_d + \sum_{k=1}^{d-q} 2^{k-1} a_{d-k} \right) - \left(\sqrt{a_d} + \sum_{k=1}^{d-q} 2^{k-1} \sqrt{ a_{d-k} } \right)^2\\
    & \weq (2^{d-q}-1) a_d + \sum_{k=1}^{d-q} 2^{k-1}\left(2^{d-q} -2^{k-1}\right)a_{d-k} -\sum_{k=1}^{d-q} 2^{k} \sqrt{a_d} \sqrt{a_{d-k}} -\sum_{k=1}^{d-q-1} \sum_{l>k}^{d-q} 2^{k+l-1} \sqrt{a_{d-k}} \sqrt{a_{d-l}}.
\end{align*}
Noticing that $\sum_{k=1}^{d-q} 2^{k-1} = 2^{d-q} - 1$ and that $\sum_{l=1, l \neq k}^{d-q} 2^{k+l-2} = 2^{k-1} \left(\sum_{l=1}^{d-q} 2^{l-1} - 2^{k-1} \right) = 2^{k-1} (2^{d-q} - 2^{k-1} -1)$ leads to
\begin{align*}
 D^2 - E^2 
 \weq & \left(\sum_{k=1}^{d-q} 2^{k-1} a_d + \sum_{k=1}^{d-q}2^{k-1} a_{d-k} - 2 \sum_{k=1}^{d-q} 2^{k-1} \sqrt{a_d} \sqrt{a_{d-k}} \right) \\ 
 & \qquad + \left( \sum_{k=1}^{d-q} \sum_{ \substack{ l=1 \\ l \neq k }}^{d-q} 2^{k+l-2} a_{d-k} - 2 \sum_{k=1}^{d-q-1} \sum_{l>k}^{d-q} 2^{k+l-2} \sqrt{a_{d-k}} \sqrt{a_{d-l}} \right) \\
 \weq & \sum_{k=1}^{d-q} 2^{k-1} \left( \sqrt{a_d} - \sqrt{a_{d-k}} \right)^2 + \sum_{k=1}^{d-q-1} \sum_{l>k}^{d-q} 2^{k+l-2} \left( \sqrt{a_{d-k}} - \sqrt{a_{d-l}} \right)^2.
\end{align*}
Using the network's assortativity, we conclude that this last quantity is strictly greater than zero, and hence $J_q^{\bua} > J_q^{\tda}$ for all $q \le d-1$.
\end{proof}

\section{Additional Numerical Experiments}

\subsection{Synthetic Data Sets}

\subsubsection{Ternary Tree SBMs}
As the hierarchical community structure cannot always be represented by a binary tree, we also perform experiments on ternary-tree stochastic block models with depth 3 (the ternary tree is drawn in Figure~\ref{fig:ternary_tree}), 100 nodes in each bottom cluster, 
and the probability of an edge between two nodes whose lowest common ancestor has depth~$k$ is $p_k = a_k \log N / N$. 

We first show in Figure~\ref{fig:ternary_dendrograms} the dendrograms and trees obtained by {top-down} and {bottom-up} algorithms. Although both algorithms generate binary trees, ternary structures appear in Figures~\ref{fig:ternary_dendrograms_bottomup} and~\ref{fig:ternary_dendrograms_topdown}. Nevertheless, we observe in Figure~\ref{fig:ternary_dendrograms_topdown} that the dendrograms obtained by the {top-down} algorithm show some inversions.

Next, we proceed as in Section~\ref{subsection:experiments_BTSBMs}, by fixing $a_0$ to 40 and $a_3$ to 100, and by varying the values of $a_1$ and $a_2$ from $a_0$ to $a_3$ (with the condition $a_1 < a_2$). We observe in Figure~\ref{fig:performance_on_ternary_tree} differences in the performances of {top-down} and {bottom-up}. Indeed, in this setting, the bottom-up algorithm recovers communities exactly up to the theoretical thresholds. Moreover, the accuracy obtained by {top-down} is lower than the one obtained by {bottom-up}. 

\begin{figure}[!ht]
 \centering
  \begin{subfigure}[b]{0.32\textwidth}
    \includegraphics[width=\textwidth]{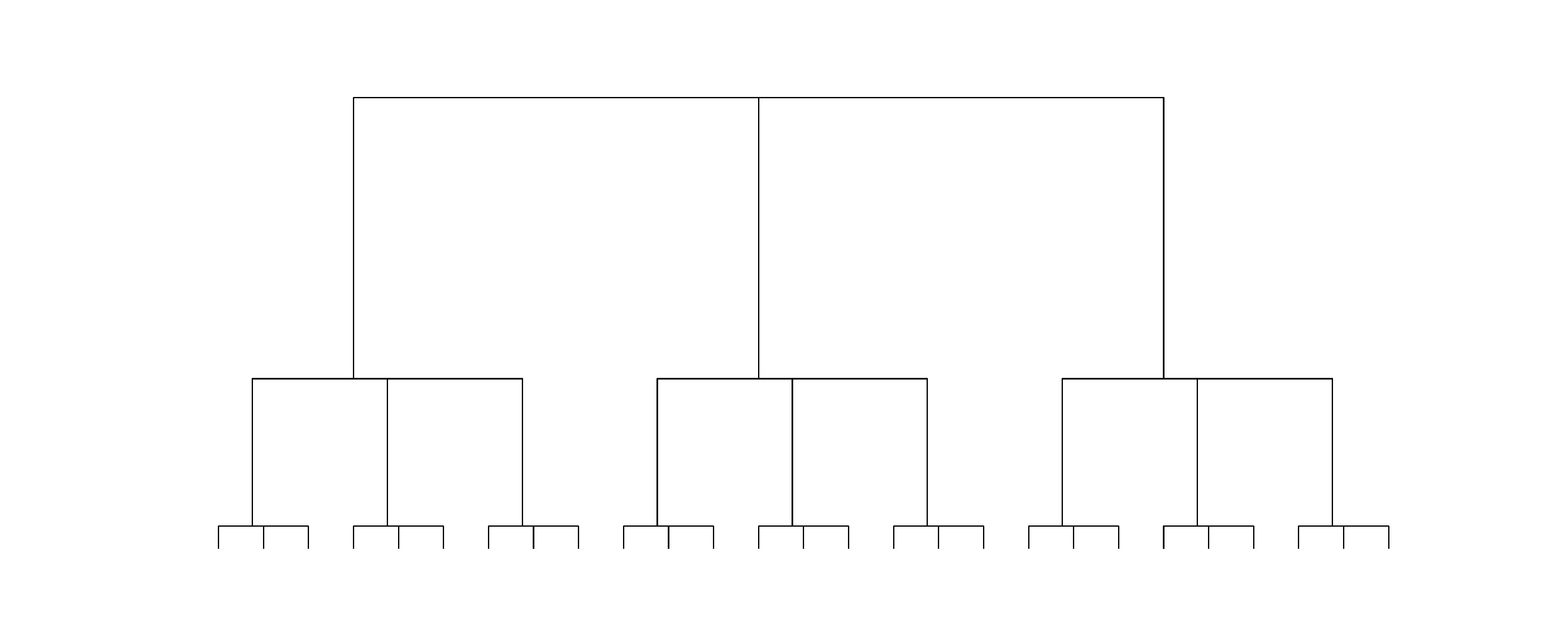}
    \caption{Ground truth tree}
    \label{fig:ternary_tree}
 \end{subfigure}
\hfill 
 \begin{subfigure}[b]{0.32\textwidth}
     \includegraphics[width=\textwidth]{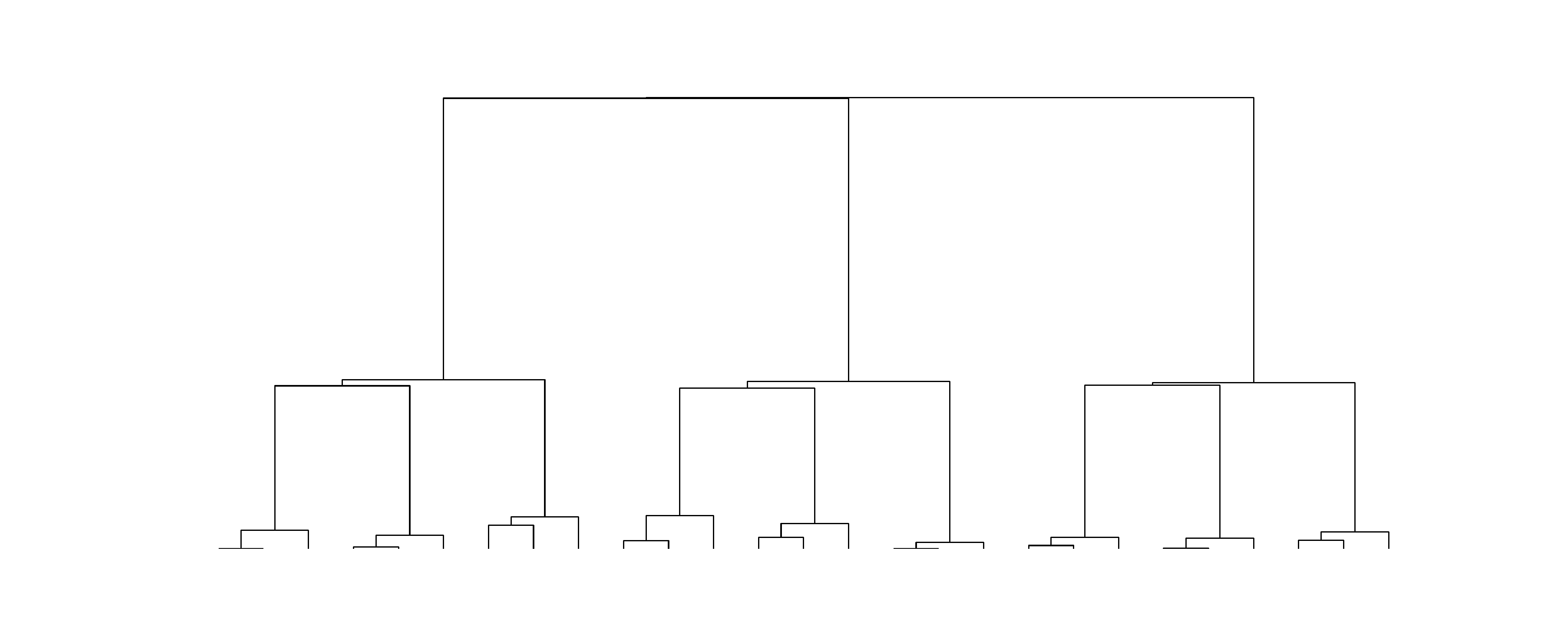}
     \caption{\textit{bottom-up}}
     \label{fig:ternary_dendrograms_bottomup}
 \end{subfigure}
 \hfill
  \begin{subfigure}[b]{0.32\textwidth}
     \includegraphics[width=\textwidth]{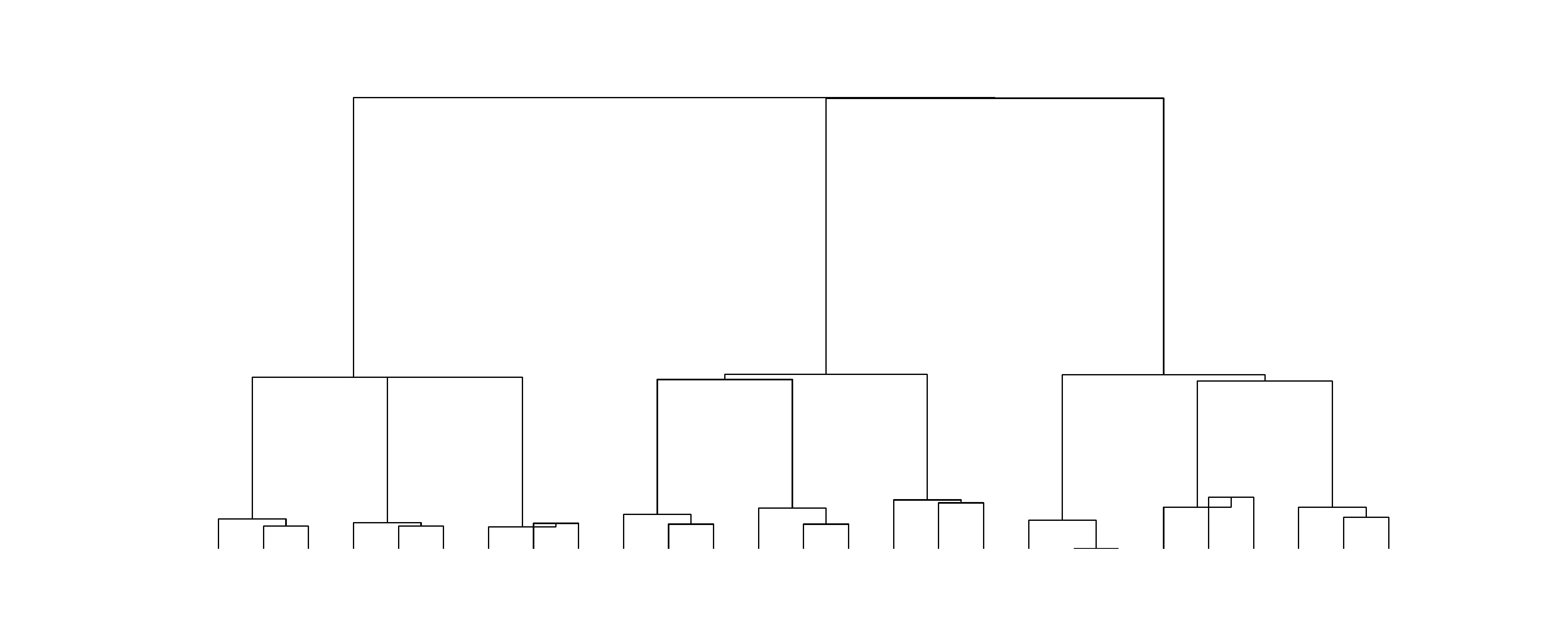}
     \caption{\textit{top-down}}
     \label{fig:ternary_dendrograms_topdown}
 \end{subfigure}
 \caption{(a) A ternary tree of depth 3 used as ground truth. (b)-(c) Dendrograms obtained by {bottom-up} and {top-down} algorithms on a Ternary Tree SBM of depth 3, $N = 2700$, and interaction probabilities $p_k = a_k \log N / N$ with $(a_0, a_1, a_2, a_3) = (10, 30, 40, 130)$. 
 }
 \label{fig:ternary_dendrograms}
\end{figure}

\begin{figure}[!ht]
 \centering
 \includegraphics[width=0.9\textwidth]{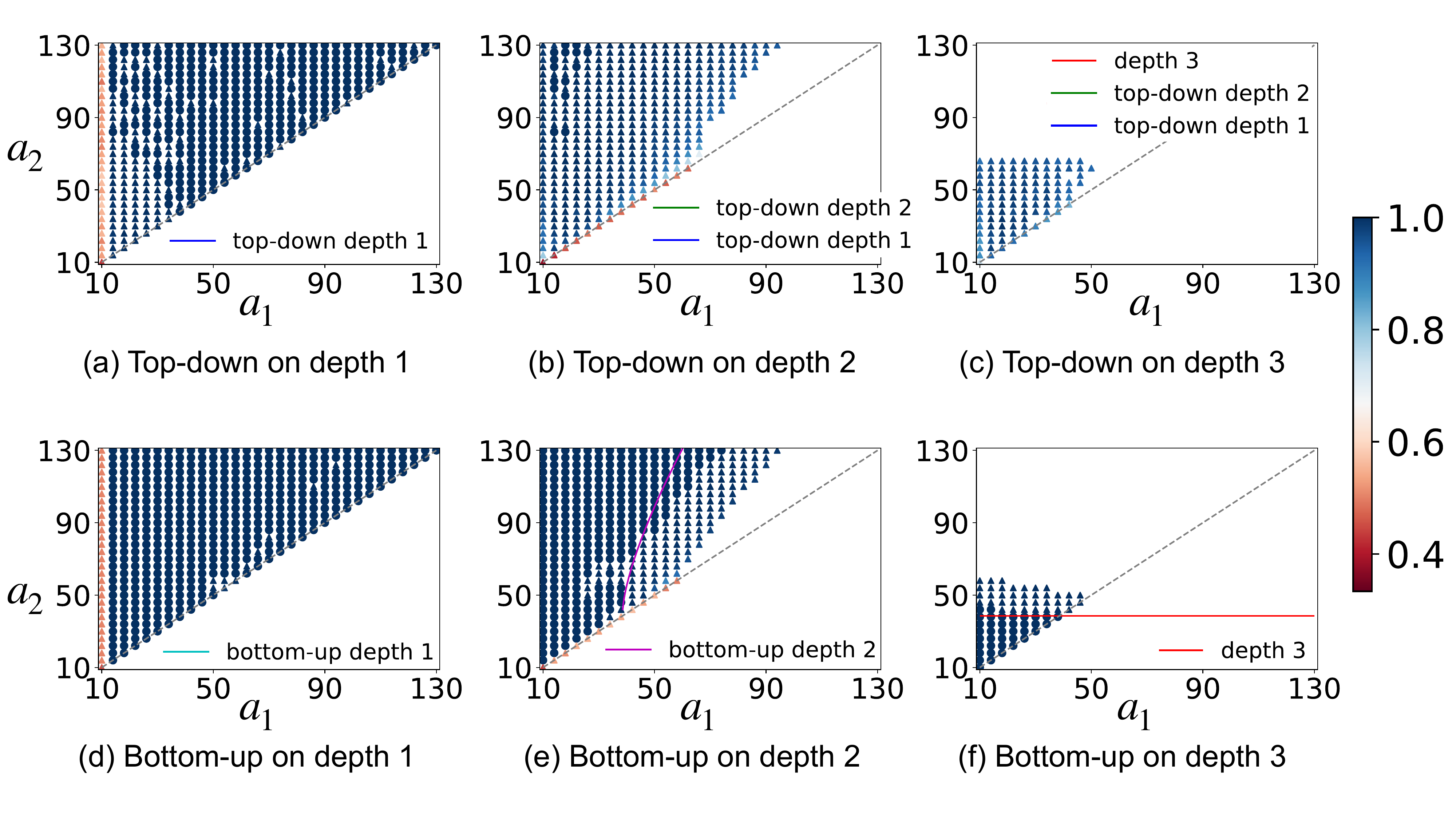}
 
 \caption{Performance of {bottom-up} and {top-down} algorithms on Ternary Tree SBMs of depth 3, $ N = 2700$ nodes, and interaction probabilities $p_k = a_k \log N / N$ with $a_0=10$ and $a_3 = 130$, as a function of~$a_1$ and~$a_2$. We vary $a_1 \le a_2$ from~$a_0$ to~$a_3$. 
 The performance of the algorithms is measured by the accuracy at each depth (averaged over 10 realizations), and the exact recovery threshold at different depths is shown in coloured solid lines. Exact recovery is shown with large circles and non-exact recovery with small crosses.
 }
 \label{fig:performance_on_ternary_tree}
\end{figure}

\subsubsection{Unbalanced HSBM}
\label{appendix:unbalanced_SBM}

We evaluate the performance of HCD algorithms on HSBM whose binary tree is not necessarily full and balanced. Similar to the BTSBM, we assume that the depth of the tree~$\cT$ determines the link probabilities, \textit{i.e.,} $p(u) = p_{|u|}$ for all $u \in \cT$. Because the tree is unbalanced, the bottom communities no longer have the same depth. In our experiments, the size of a bottom community having depth $k$ is $100 \cdot 2^{5-k}$, so that the total number of nodes is $N=3200$. 

To assess the accuracy of tree recovery, we define the similarity matrix $S(\cT,\cC)$ the $N$-by-$N$ matrix such that for $i \in C_a$ and $j \in C_b$ (with $a, b \in \cL_{\cT}$) we have 
\begin{align*}
 \Big( S\left( \cT, \cC \right) \Big)_{i j} \weq | \lca_{\cT}(a, b) |,
\end{align*}
where $ | \lca_{\cT}(a, b) |$ is the depth of the lowest common ancestor of $C_a$ and $C_b$ in the tree $\cT$. The tree recovery error is then defined as $$ \frac{ \| S ( \hcT, \hcC ) - S(\cT,\cC) \|^2_F} { \| S(\cT,\cC) \|^2_F }.$$  This metric quantifies the discrepancy between the estimated and true similarity matrices. 

Figures~\ref{fig:performance_algos_unbalanced_ex1} and~\ref{fig:performance_algos_unbalanced_ex2} compare different HCD algorithms on HSBMs whose corresponding unbalanced trees are shown in Figures~\ref{fig:unbalanced_tree_1} and~\ref{fig:unbalanced_tree_2}. We observe that bottom-up and top-down approaches perform similarly well, though one may outperform the other depending on $\beta$ and the chosen evaluation metric.

\begin{figure}[!ht]
 \centering
  \begin{subfigure}[b]{0.24\textwidth}
   \includegraphics[width=\textwidth]{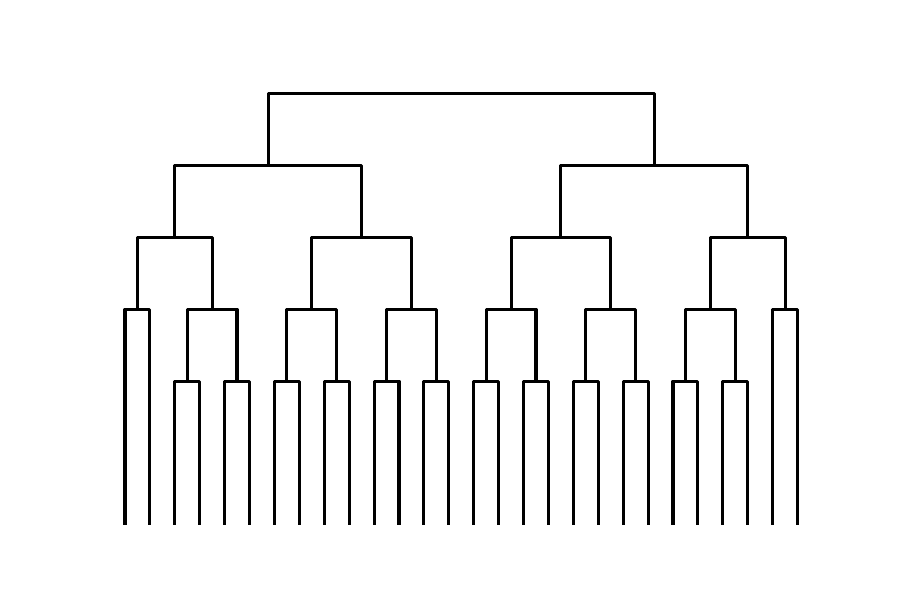}
   \caption{Unbalanced tree}
   \label{fig:unbalanced_tree_1}
  \end{subfigure}
  \hfill 
 \begin{subfigure}[b]{0.24\textwidth}
     \includegraphics[width=1.1\textwidth]{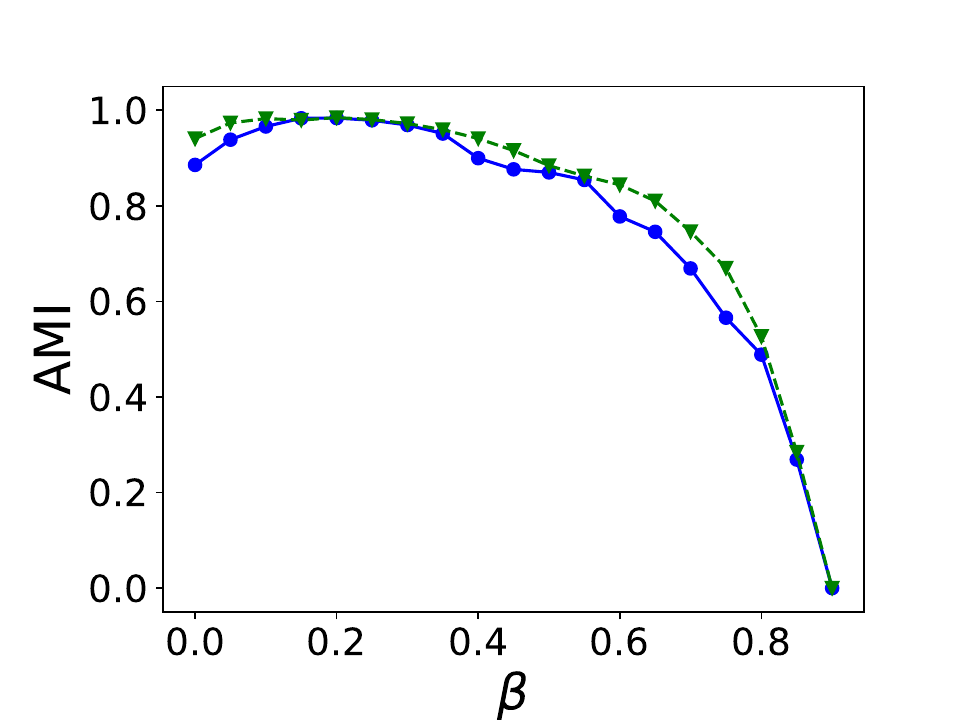}
     \caption{AMI}
 \end{subfigure}
 \hfill
  \begin{subfigure}[b]{0.24\textwidth}
     \includegraphics[width=\textwidth]{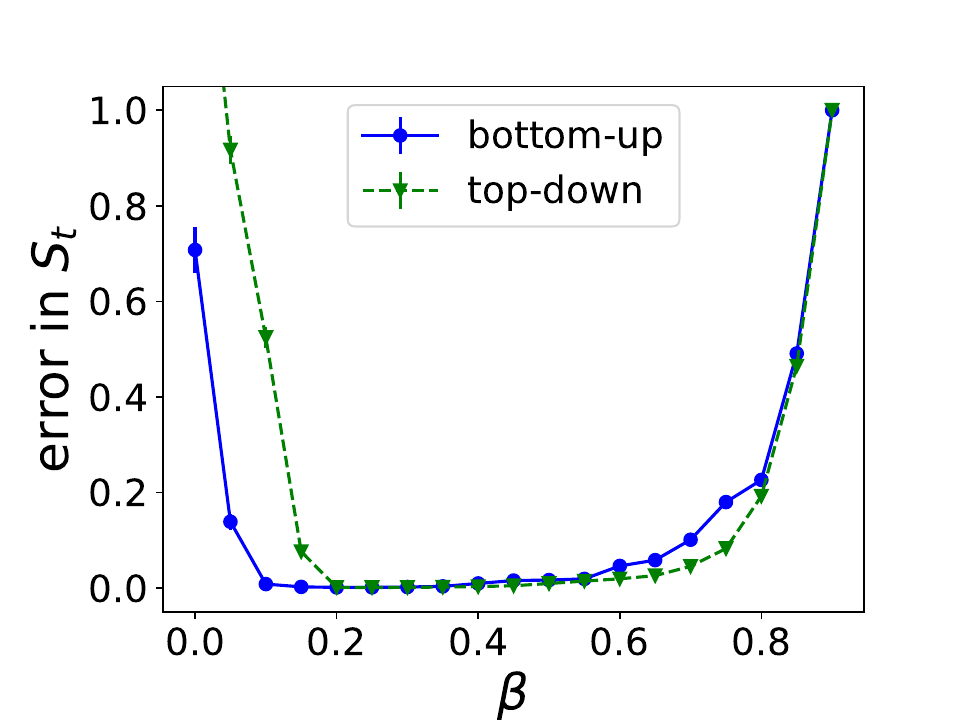}
     \caption{Tree similarity}
 \end{subfigure}
 \hfill
  \begin{subfigure}[b]{0.24\textwidth}
     \includegraphics[width=\textwidth]{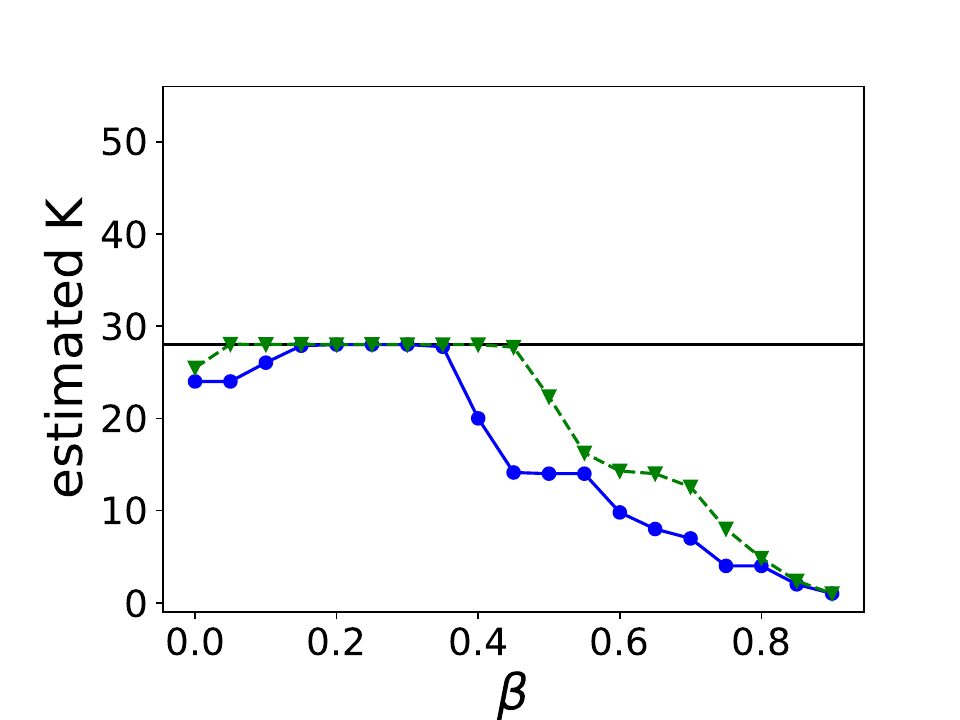}
     \caption{Estimated $K$}
 \end{subfigure}
 \caption{Performance of top-down and bottom-up algorithms on HSBMs, where the tree is given in Figure~\ref{fig:unbalanced_tree_1}, with $N = 3200$ nodes and $p_k = 64 \, \beta^{5-k} \, \frac{\log{N}}{N}$. The results are averaged over 100 realizations, and error bars show the standard error but are typically smaller than the symbols.}
  \label{fig:performance_algos_unbalanced_ex1}
\end{figure}

\begin{figure}[!ht]
    \centering
    \begin{subfigure}[b]{0.24\textwidth}
     \includegraphics[width=\textwidth]{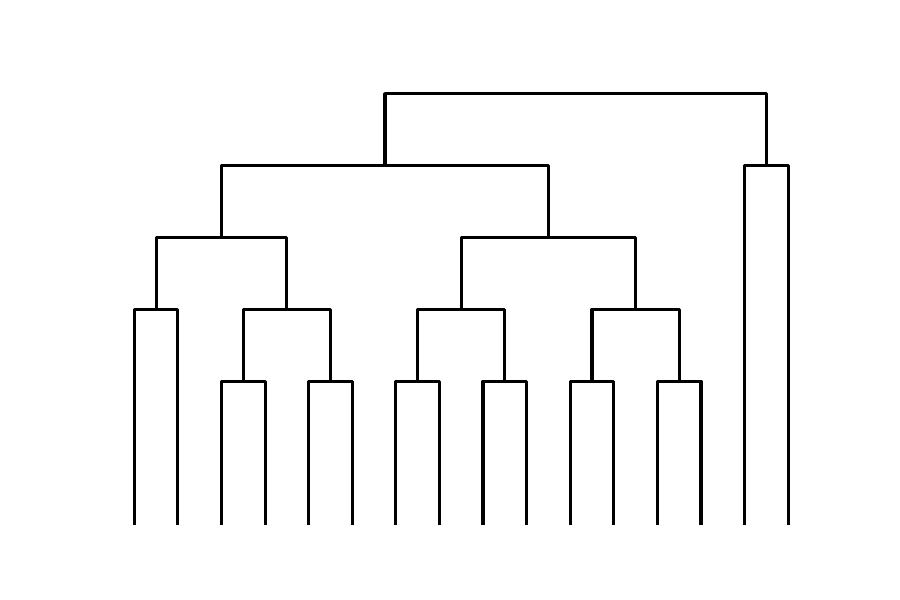}
     \caption{Unbalanced tree}
     \label{fig:unbalanced_tree_2}
     \end{subfigure}
     \hfill
    \begin{subfigure}[b]{0.24\textwidth}
        \includegraphics[width=\textwidth]{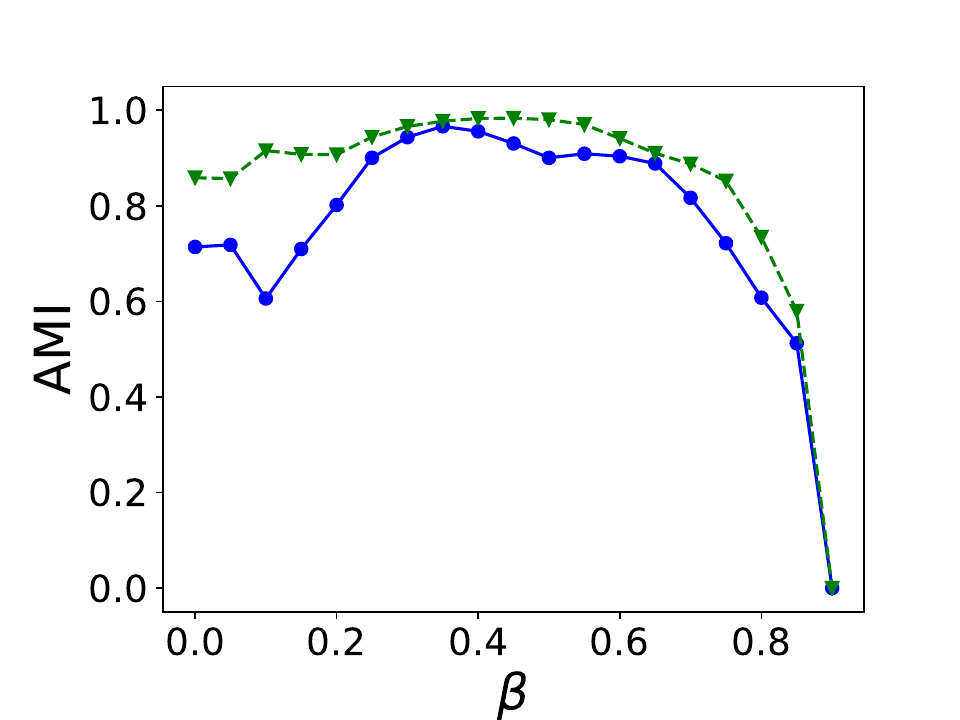}
        \caption{AMI}
    \end{subfigure}
    \begin{subfigure}[b]{0.24\textwidth}
        \includegraphics[width=\textwidth]{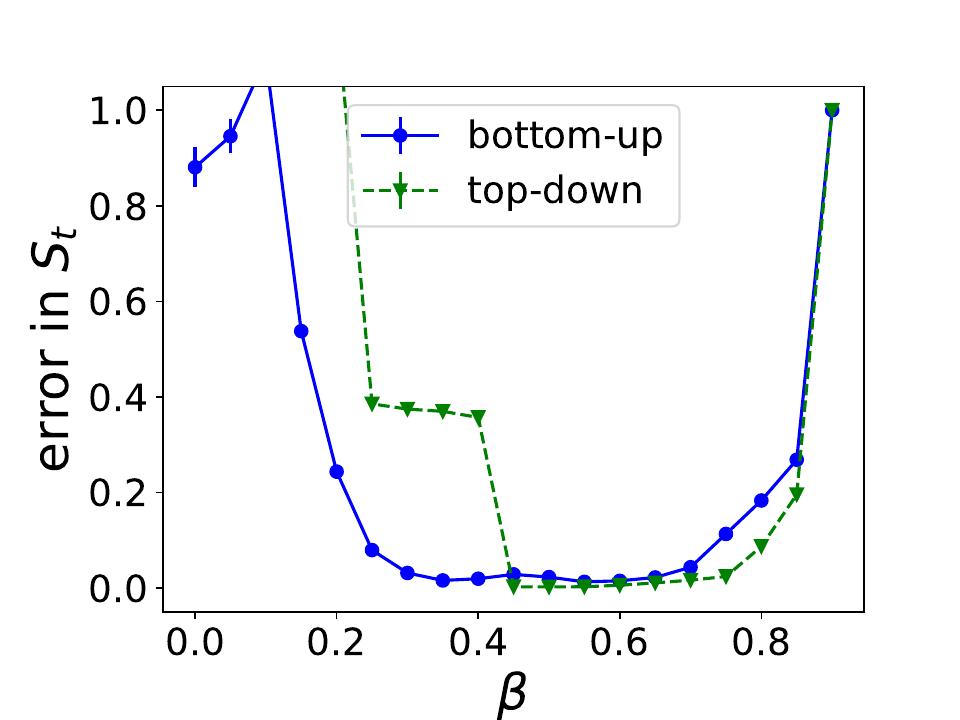}
        \caption{Tree similarity}
    \end{subfigure}
     \begin{subfigure}[b]{0.24\textwidth}
        \includegraphics[width=\textwidth]{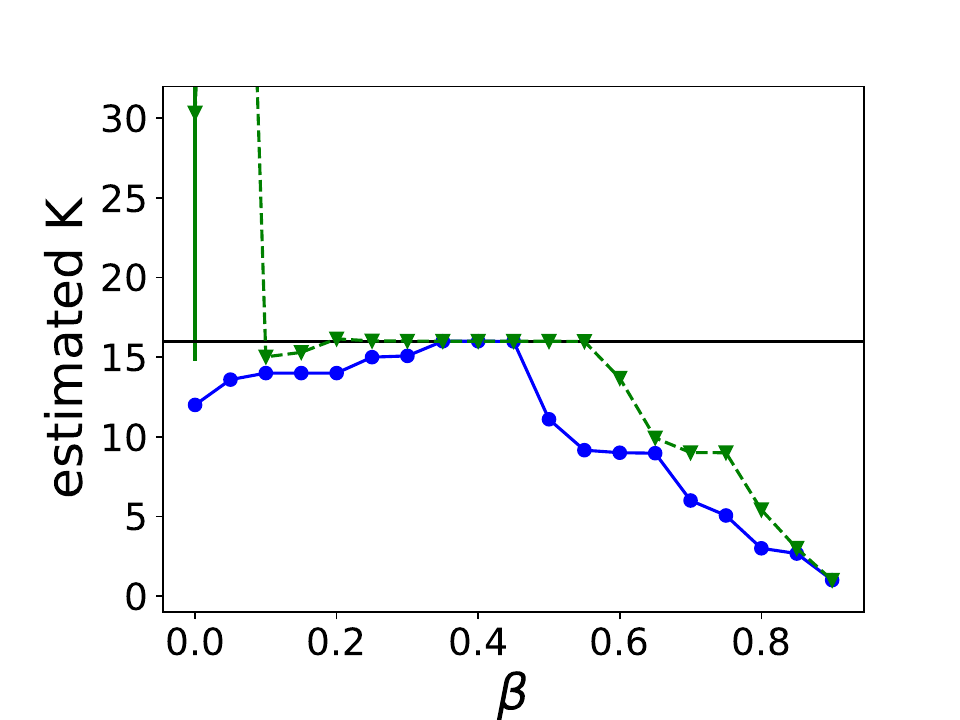}
        \caption{Estimated $K$}
    \end{subfigure}
    \caption{Performance of top-down and bottom-up algorithms on HSBMs, where the tree is given in Figure~\ref{fig:unbalanced_tree_2} and with $N=3200$ nodes and $p_k = 144 \, \beta^{5-k} \, \frac{\log{N}}{N}$. The results are averaged over 100 realizations, and error bars show the standard error but are typically smaller than the symbols.}
\label{fig:performance_algos_unbalanced_ex2}
\end{figure}

\subsubsection{Increased Number of Bottom Communities}
\label{appendix:deeper_SBM}
We evaluate the accuracy of HCD algorithms on a BTSBM of depth 6, resulting in $K = 2^6 = 64$ bottom communities. Figures~\ref{fig:performance_algos_K64} compare the performance of different HCD algorithms on BTSBM. We observe that bottom-up and top-down approaches perform similarly well, though one may outperform the other depending on $\beta$ and the chosen evaluation metric. 

\begin{figure}[!ht]
    \centering
    \begin{subfigure}[b]{0.3\textwidth}
        \includegraphics[width=\textwidth]{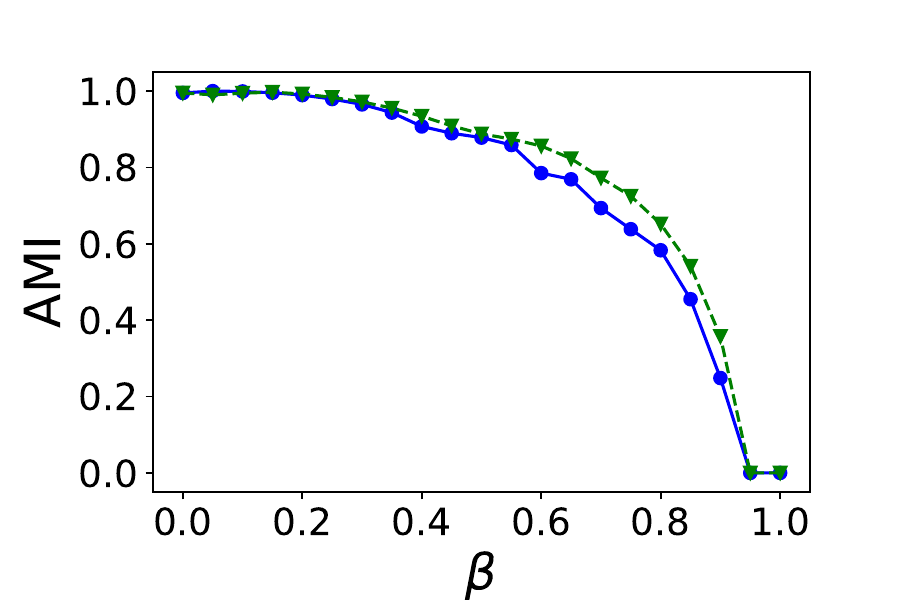}
        \caption{AMI}
    \end{subfigure}
    \hfill
    \begin{subfigure}[b]{0.3\textwidth}
        \includegraphics[width=\textwidth]{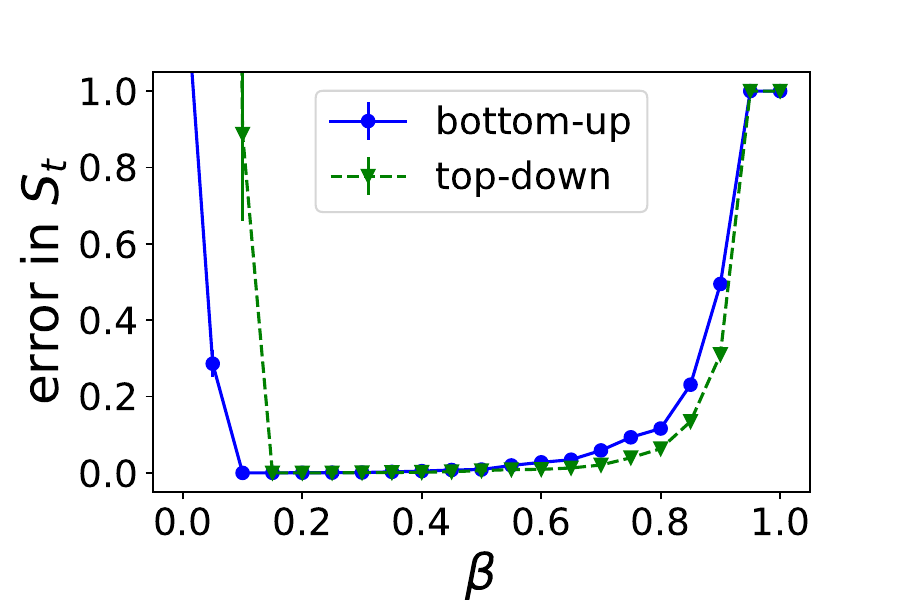}
        \caption{Tree similarity}
    \end{subfigure}
    \hfill
     \begin{subfigure}[b]{0.3\textwidth}
        \includegraphics[width=\textwidth]{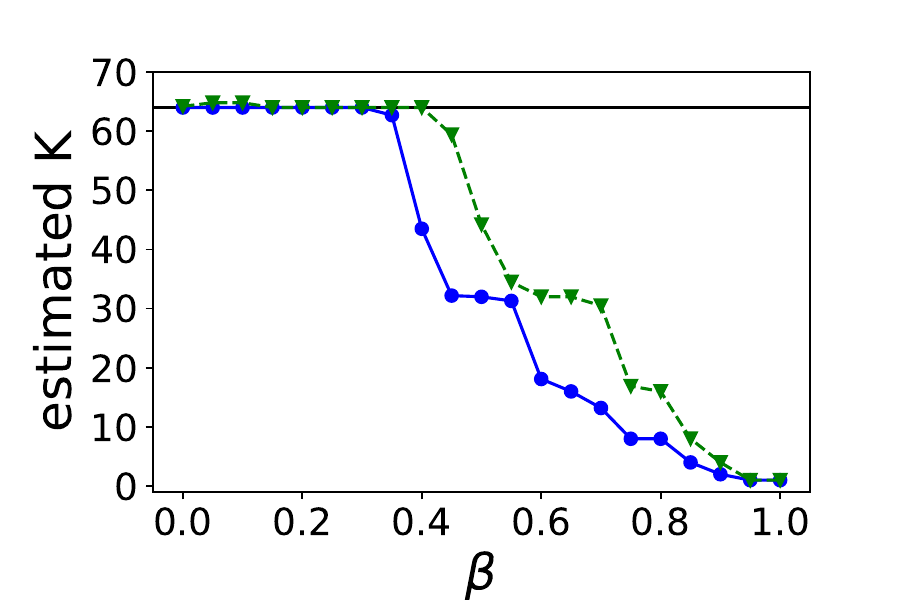}
        \caption{Estimated $K$}
    \end{subfigure}
    \caption{Performance of top-down and bottom-up algorithms on BTSBMs of depth 6, with $N=6400$ nodes ($K=64$ and 100 nodes per community) and $p_k = 81 \, \beta^{6-k} \, \frac{\log{N}}{N}$. The results are averaged over 10 realizations, and error bars show the standard error but are typically smaller than the symbols.}
\label{fig:performance_algos_K64}
\end{figure}

\subsubsection{Another Bottom-up Approach}
In this subsection, we consider another type of hierarchical community detection algorithm that first identifies bottom communities then aggregates them. The method called synthesis~\cite{fang2023t}, differs from linkage-based approaches by employing the (sparse) neighbor-joining algorithm for aggregation. As a result, synthesis can handle more general tree structures than the binary rooted trees produced by bottom-up and top-down methods, and it outputs unrooted, potentially non-binary trees.

Figure~\ref{fig:vs_synthesis} reports the performance of our focal bottom-up algorithm with synthesis. We evaluate performance using three metrics: bottom community accuracy, Robinson-Foulds distance from the ground-truth tree, and the recovery rate of the true hierarchical structures. The Robinson-Foulds distance is a metric to measure the distance between two potentially unrooted trees and is normalized to range from 0 to 1. The tree recovery rate is defined as the proportion of outputs with a Robinson-Foulds distance of zero. Although Fang and Rohe~\cite{fang2023t} use sparse neighbor-joining to accommodate non-binary trees, we apply the standard neighbor-joining algorithm for a fair comparison since our ground-truth tree structures are binary.
\begin{figure}[!ht]
    \centering
    \begin{subfigure}[b]{0.3\textwidth}
        \includegraphics[width=\textwidth]{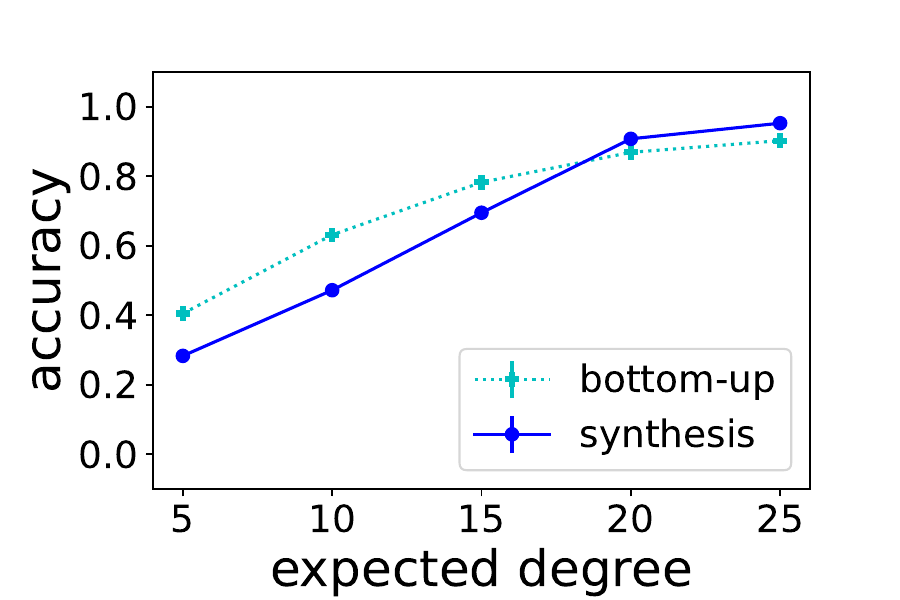}
        \caption{Bottom accuracy}
    \end{subfigure}
    \begin{subfigure}[b]{0.3\textwidth}
        \includegraphics[width=\textwidth]{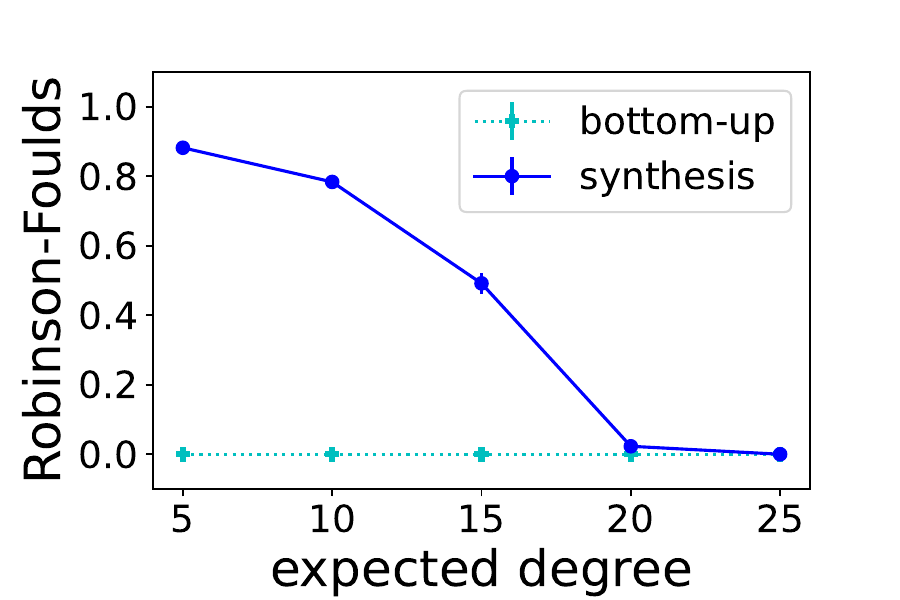}
        \caption{Robinson-Foulds distance}
    \end{subfigure}
     \begin{subfigure}[b]{0.3\textwidth}
        \includegraphics[width=\textwidth]{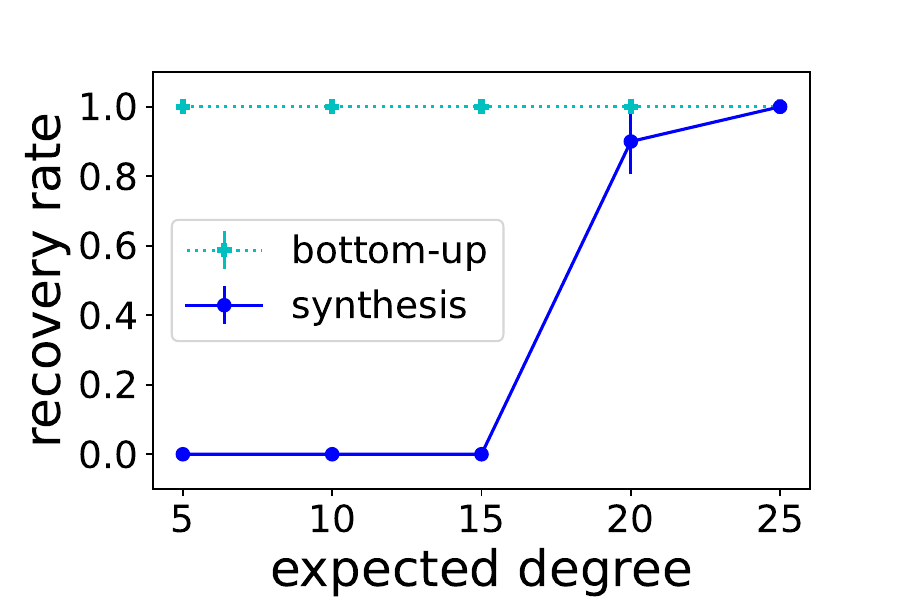}
        \caption{Recovery rate}
    \end{subfigure}
    \caption{Performance of synthesis and bottom-up algorithms on BTSBMs of depth 5, with $N=6400$ nodes ($K=32$ and 200 nodes per community) and $p_{k-1}/p_{k}=0.36$. The results are averaged over 10 realizations, and error bars show the standard error but are typically smaller than the symbols.}
\label{fig:vs_synthesis}
\end{figure}

\subsection{Real Data Sets}
\subsubsection{Military Inter-alliance}

We next consider the network of military alliances between countries. The data is provided by the Alliance Treaty Obligations and Provisions (ATOP) project~\cite{leeds2002alliance}. We select the year 2018 (as this is the most recent year available). We define two countries as allied if they share a defensive alliance (we do not consider non-aggression pacts, as those are more numerous and historically not necessarily well respected). This leads to a network of 133 countries and 1391 alliances. Some important countries such as India or Switzerland are missing as they do not share any defensive alliances with anybody. Moreover, the graph is not connected, as a small component made of three countries (China, North Korea and Cuba) is disconnected from the rest of the world.

Figures~\ref{fig:bottom_up_military_alliances} and~\ref{fig:top_dwon_military_alliances} show the output of \textit{bottom-up} and \textit{top-down} algorithms. \textit{Bottom-up} predicts 7 bottom communities, which represent geopolitical alliances based on political affiliation and geography (European countries, Eurasian countries, Arabic countries, Western African countries and Central/Southern African countries). The top level splits the graph's largest connected component into 3 clusters: Western countries, Eurasian countries, and African and Middle-East countries. 
While some of these clusters are also recovered by \textit{top-down} HCD, the separation of African countries by \textit{top-down} algorithm appears worse.

\begin{figure}[!ht]
\centering
\begin{subfigure}{0.49\textwidth}
    \includegraphics[width=\textwidth]{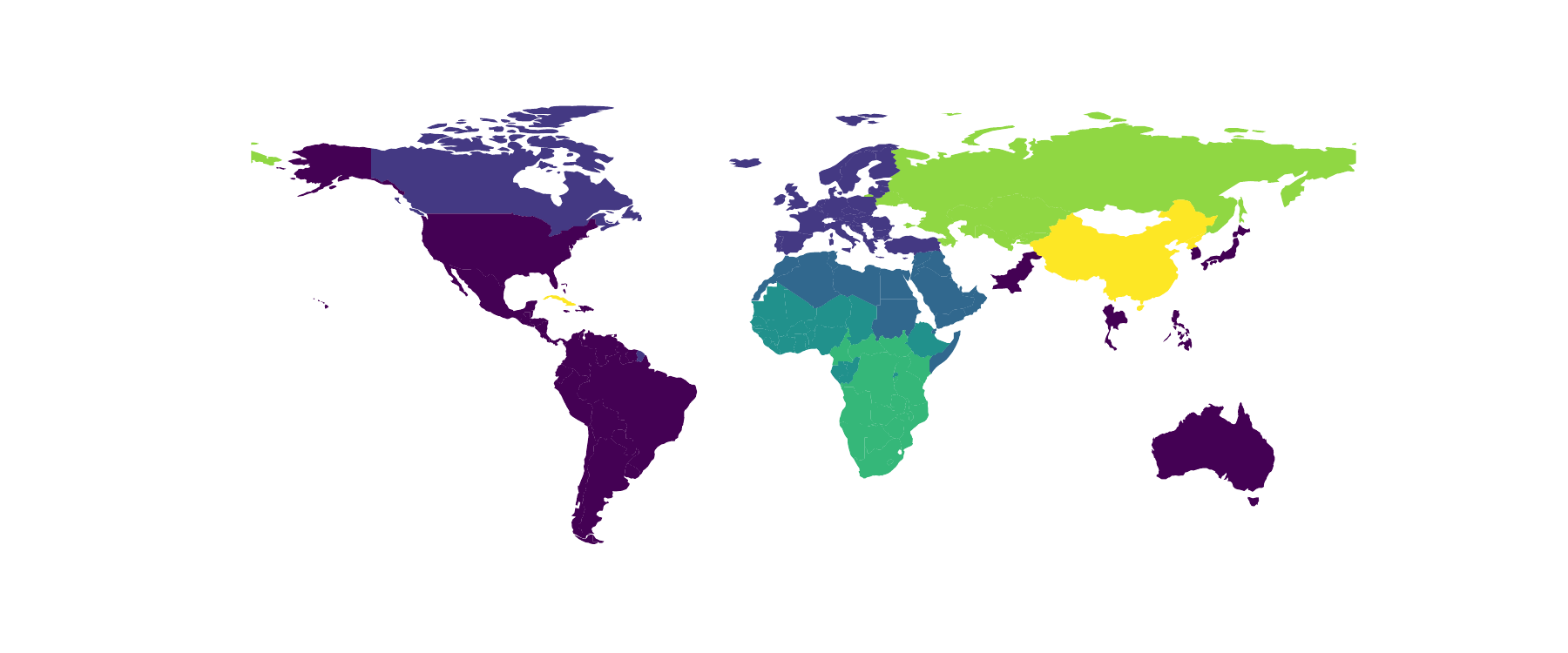}
    \caption{Highest depth.}
\end{subfigure}    
\hfill
\begin{subfigure}{0.49\textwidth}
    \includegraphics[width=\textwidth]{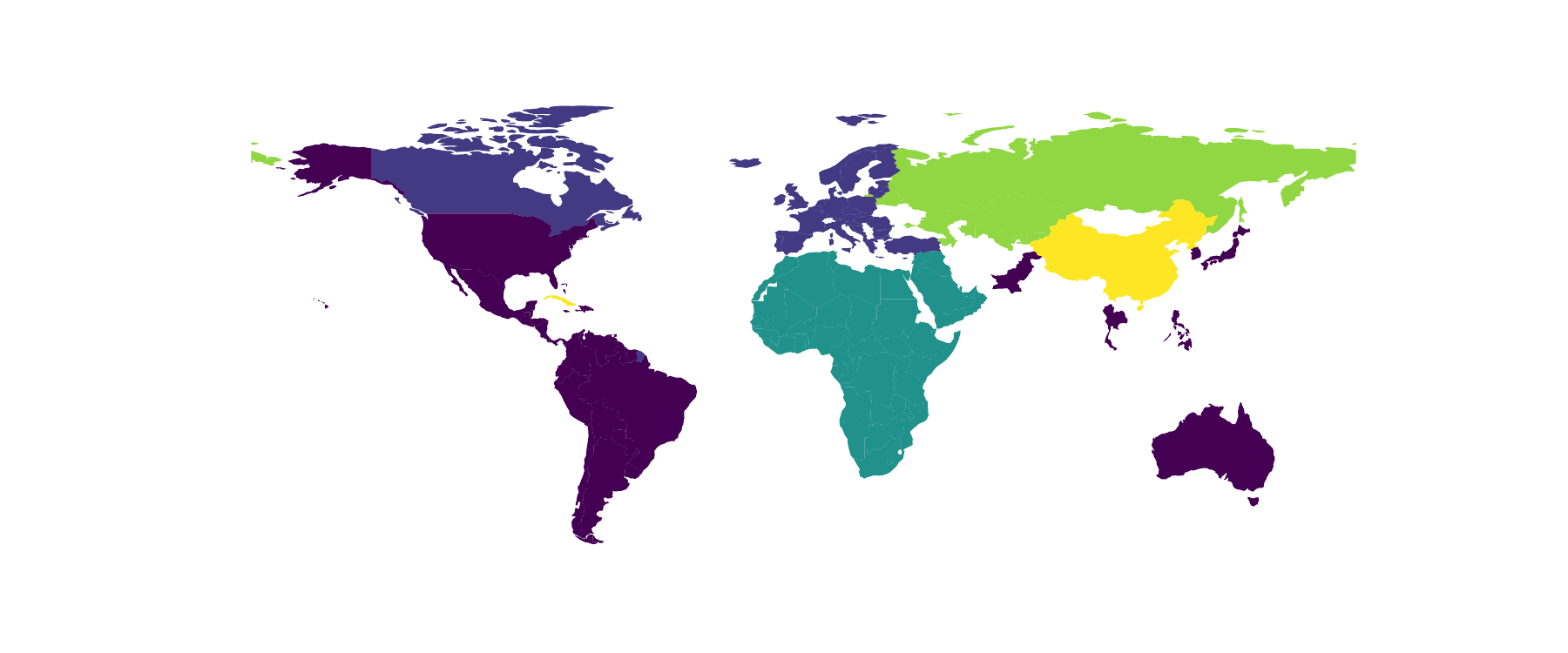}
    \caption{Middle depth (after 2 merges).}
\end{subfigure}
\hfill
\begin{subfigure}{0.49\textwidth}
    \includegraphics[width=\textwidth]{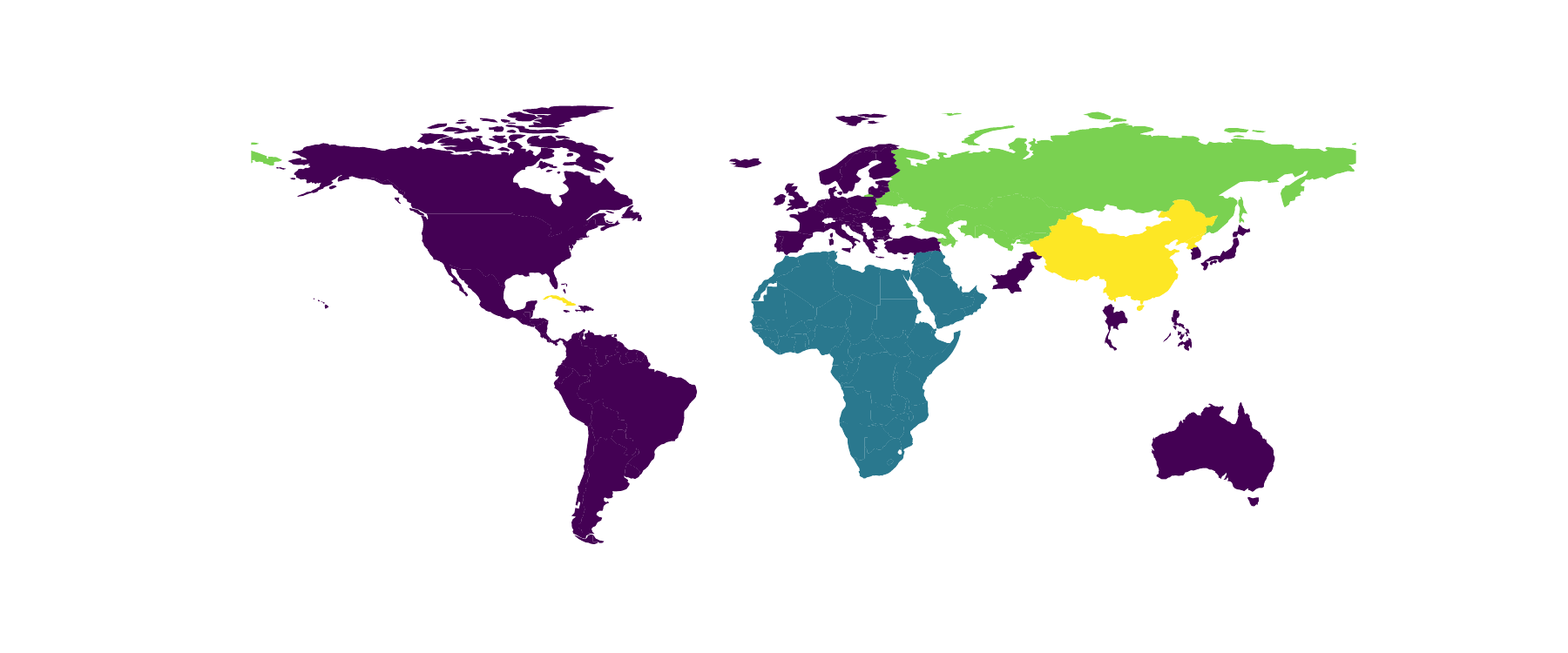}
    \caption{Lowest depth (after 3 merges).}
\end{subfigure}
\hfill
\begin{subfigure}{0.49\textwidth}
    \includegraphics[width=\textwidth]{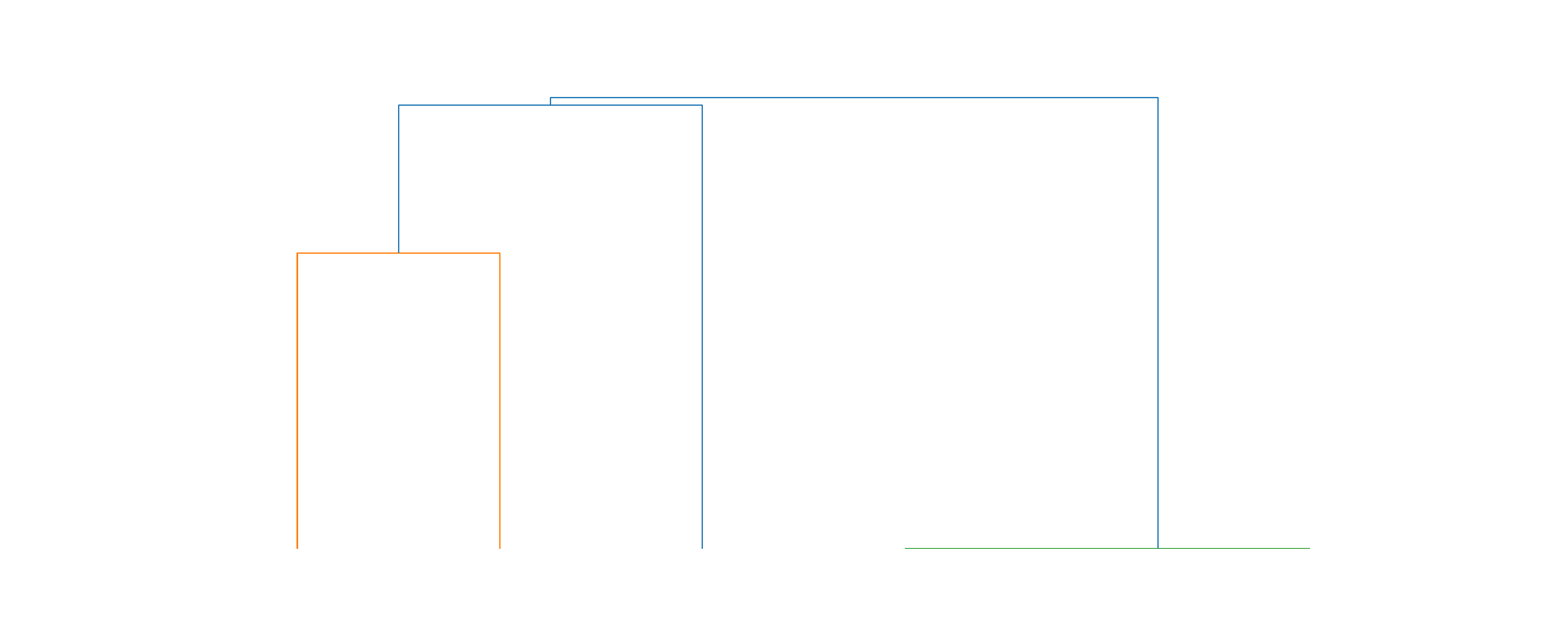}
    \caption{Dendrogram.}
\end{subfigure}    
\caption{Output of \textit{bottom-up} algorithm on the military alliance network. The dendrogram does not show the disconnected component (China, Cuba, North Korea).}
\label{fig:bottom_up_military_alliances}
\end{figure}

\begin{figure}[!ht]
\centering
\begin{subfigure}{0.49\textwidth}
    \includegraphics[width=\textwidth]{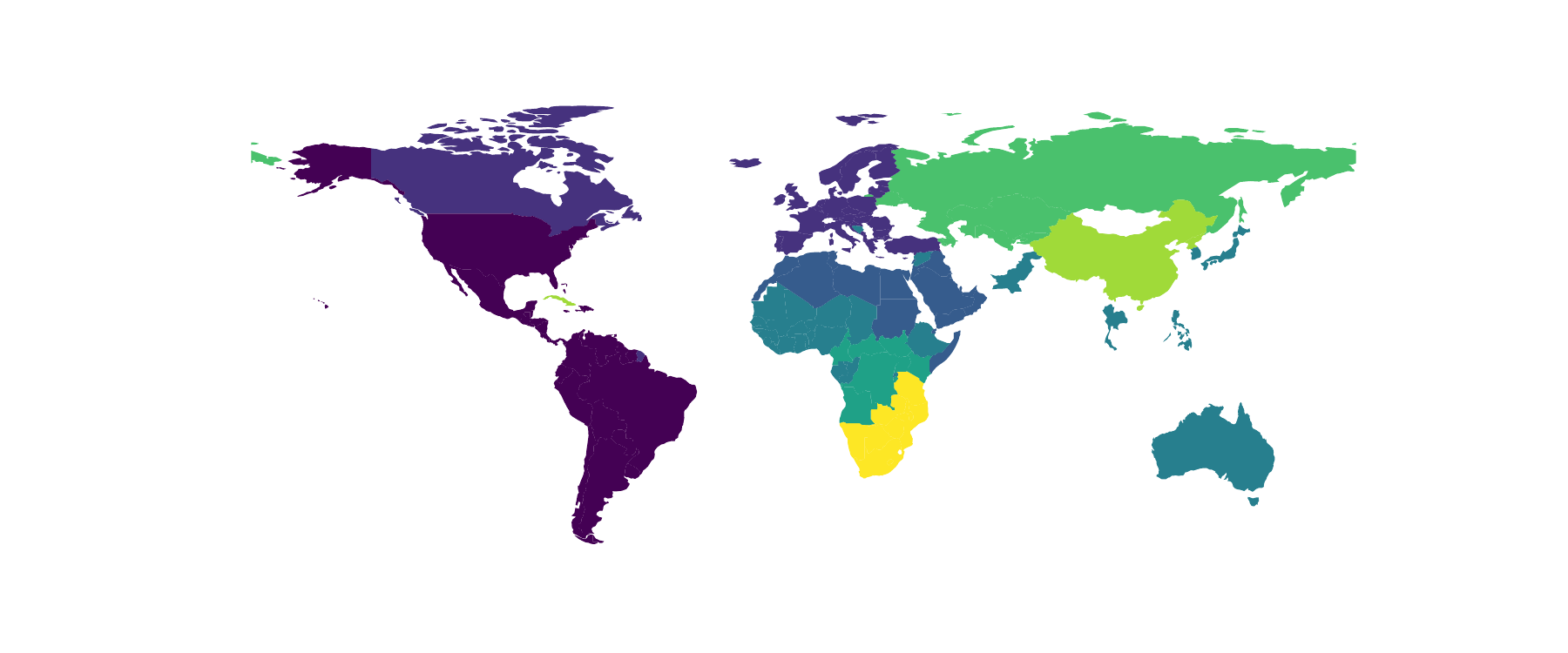}
    \caption{Highest depth (8 clusters).}
\end{subfigure}    
\hfill
\begin{subfigure}{0.49\textwidth}
    \includegraphics[width=\textwidth]{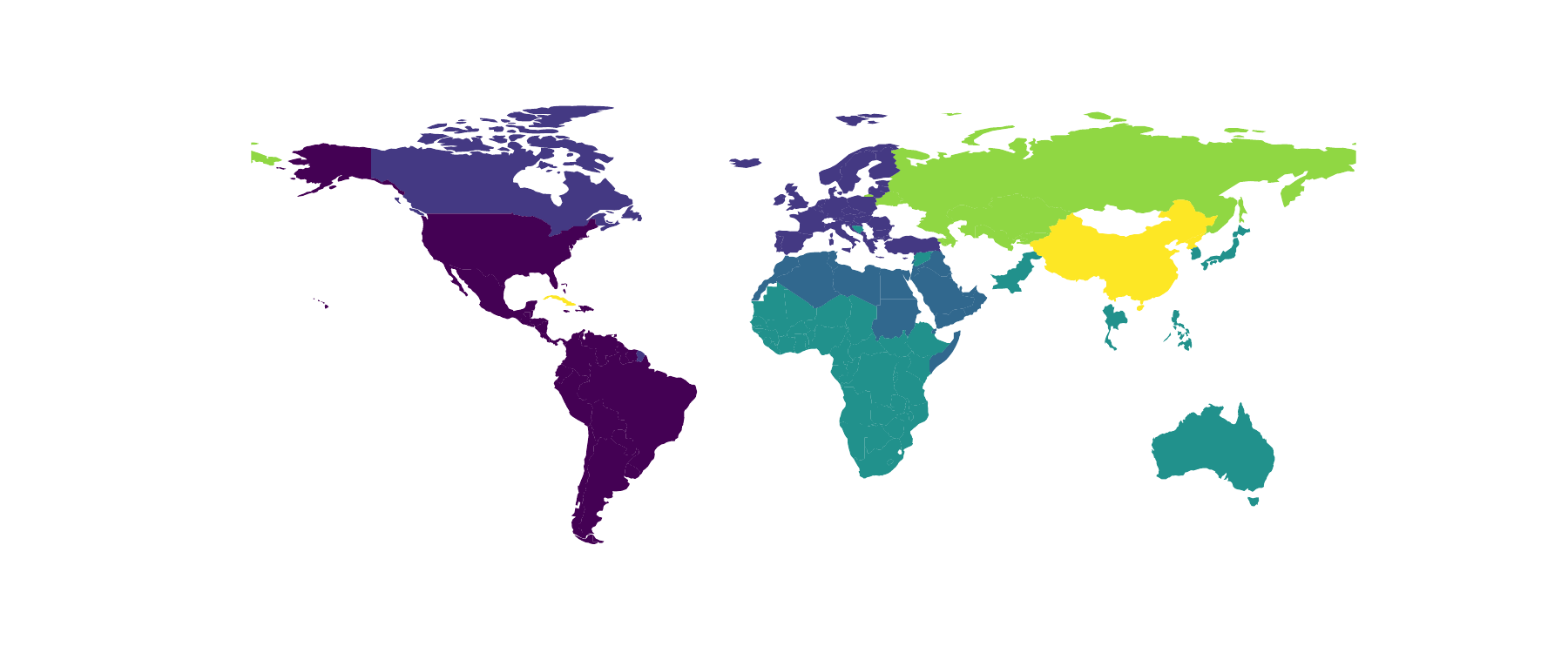}
    \caption{Middle depth (6 clusters).}
\end{subfigure}
\hfill
\begin{subfigure}{0.49\textwidth}
    \includegraphics[width=\textwidth]{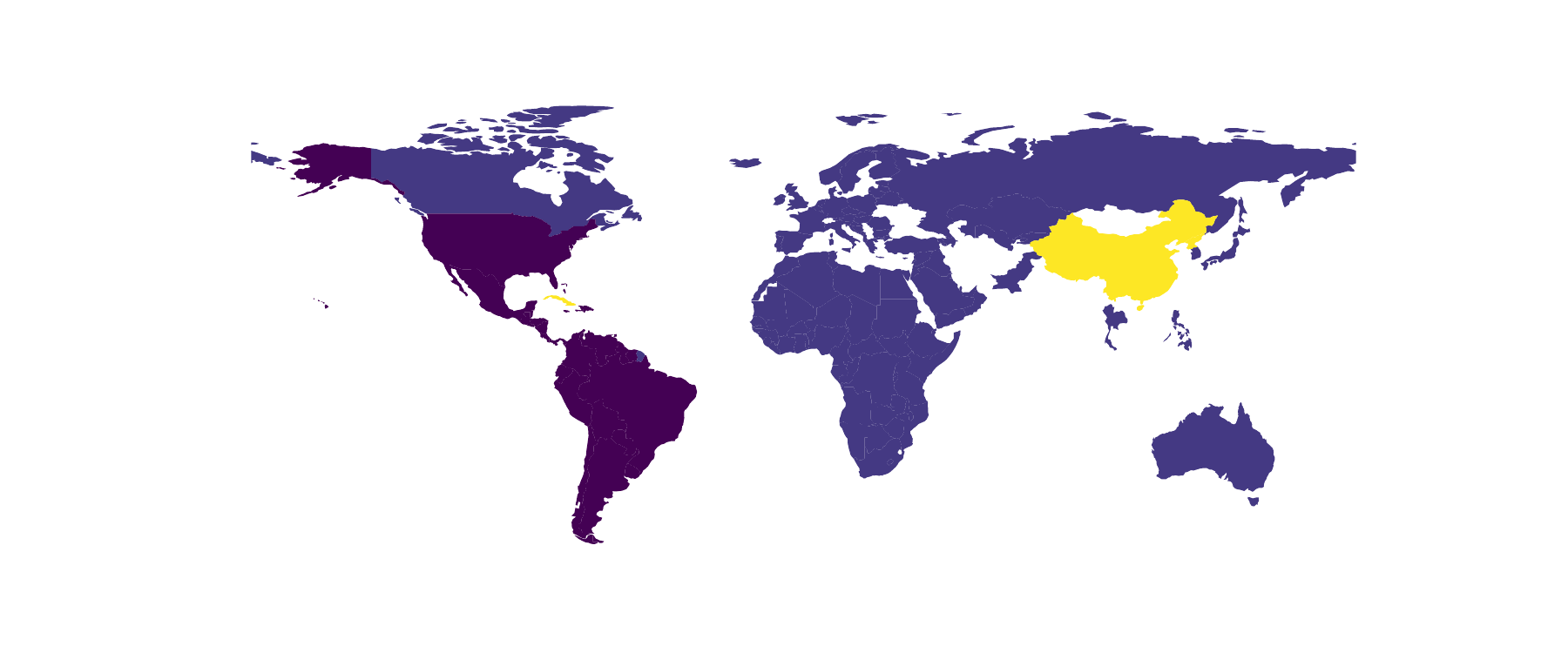}
    \caption{Lowest depth (3 clusters).}
\end{subfigure}
\hfill
\begin{subfigure}{0.49\textwidth}
    \includegraphics[width=\textwidth]{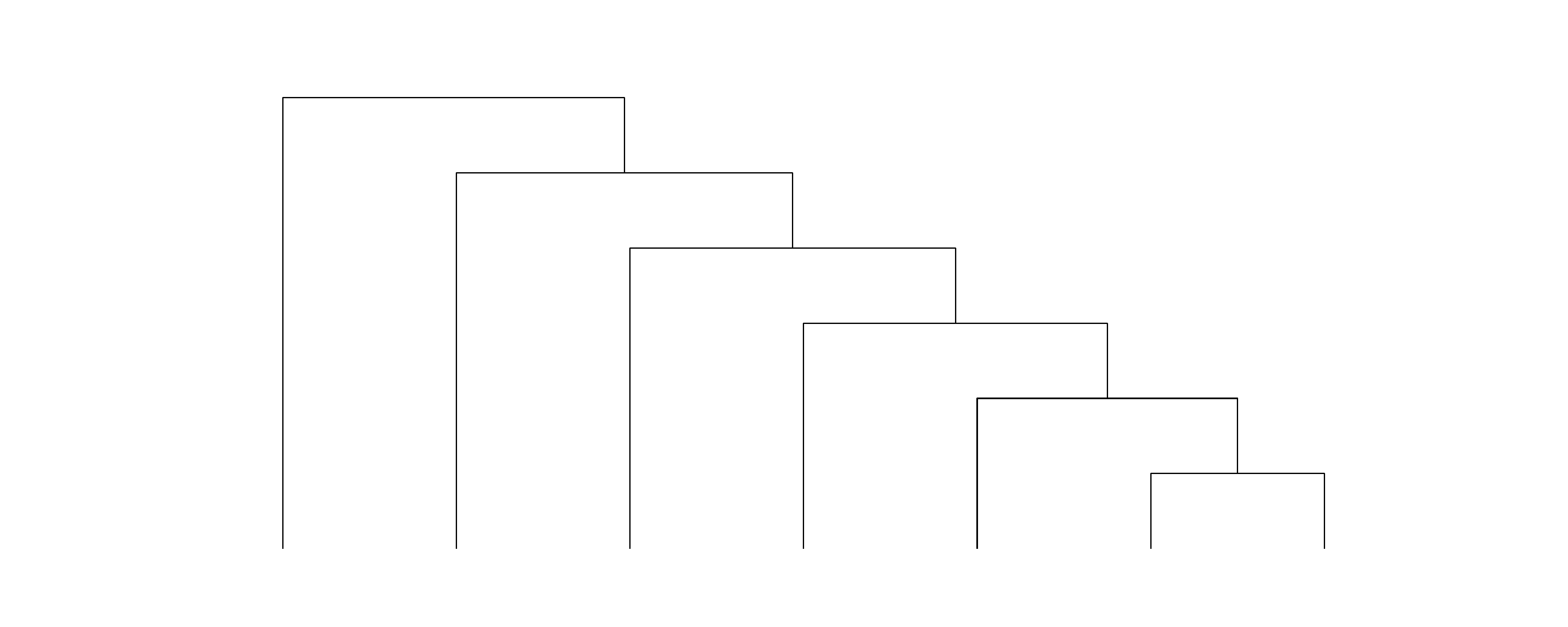}
    \caption{Dendrogram.}
\end{subfigure}
\caption{Output of \textit{top-down} algorithm on the military alliance network. 
The dendrogram does not show the disconnected component (China, Cuba, North Korea).
}
\label{fig:top_dwon_military_alliances}
\end{figure}

\subsubsection{Football Data Set}
We also test HCD algorithms on the United States college (American) football dataset~\cite{girvan2002community}. This network represents the schedule of Division I games for the 2000 season. Each node in the network corresponds to a college football team, and edges represent the regular-season games between the teams. The teams are categorized into 11 conferences, in which games are more frequent between the members. Each conference has 8 to 12 teams. We exclude the "independent" teams which do not belong to any conferences.
Since the original community labels appear to be based on the 2001 season, while the edges represent the games played during the 2000 season, we proceed to the same correction as in~\cite{evans2010clique}. 

The results obtained by the different HCD algorithms are given in Figure \ref{fig:results_football}. First, we observe that all the algorithms perform well (AMI for \textit{bottom-up}, \textit{top-down}, \textit{Paris}, and \textit{Bayesian} are respectively 0.962, $0.892$, 0.965, and  0.976). However, \textit{top-down} has more errors than the other algorithms. 
Interestingly, \textit{bottom-up}, \textit{top-down}, and \textit{Bayesian} algorithms predict 10 clusters (more precisely, \textit{Bayesian} detects 10.1 communities averaged over 100 runs), as they tend to infer \textit{Big West} and \textit{Mountain West} conferences as a single cluster. 
Finally, we can restore some geographical proximity among conferences from the hierarchy inferred by \textit{bottom-up}. For example, Conference USA is composed of teams located in the Southern United States, while the Southeastern Conference's member institutions are located primarily in the South Central and Southeastern United States. Another example is that teams belonging to Pacific Ten, Big West, and Mountain West are all located in the West, and these conferences are also close in the \textit{bottom-up} dendrogram.

\begin{figure}[!ht]
 \centering
 \begin{subfigure}[b]{0.49\textwidth}
     \includegraphics[width=\textwidth]{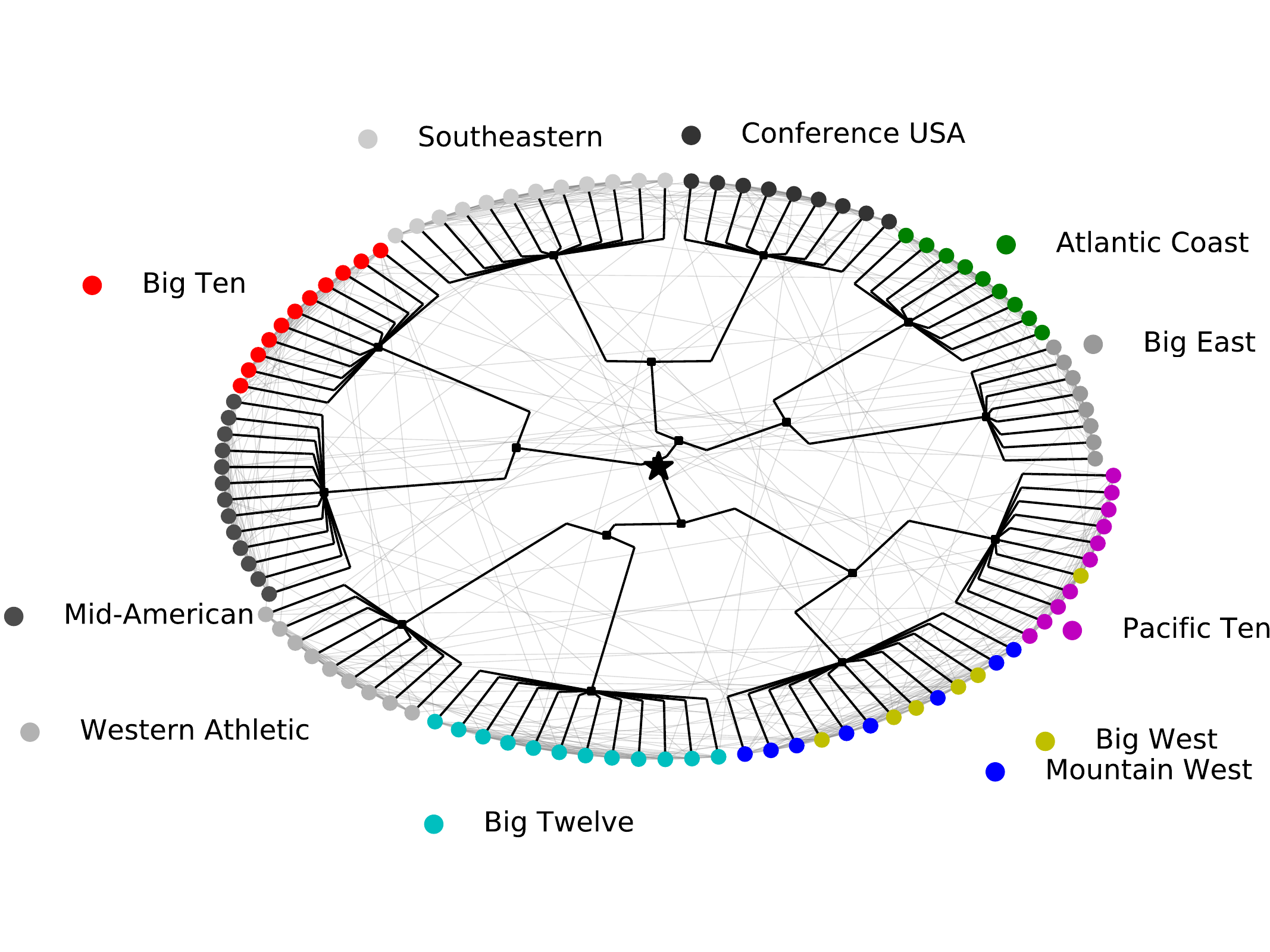}
     \caption{\textit{bottom-up}}
 \end{subfigure}
 \hfill
   \begin{subfigure}[b]{0.49\textwidth}
     \includegraphics[width=\textwidth]{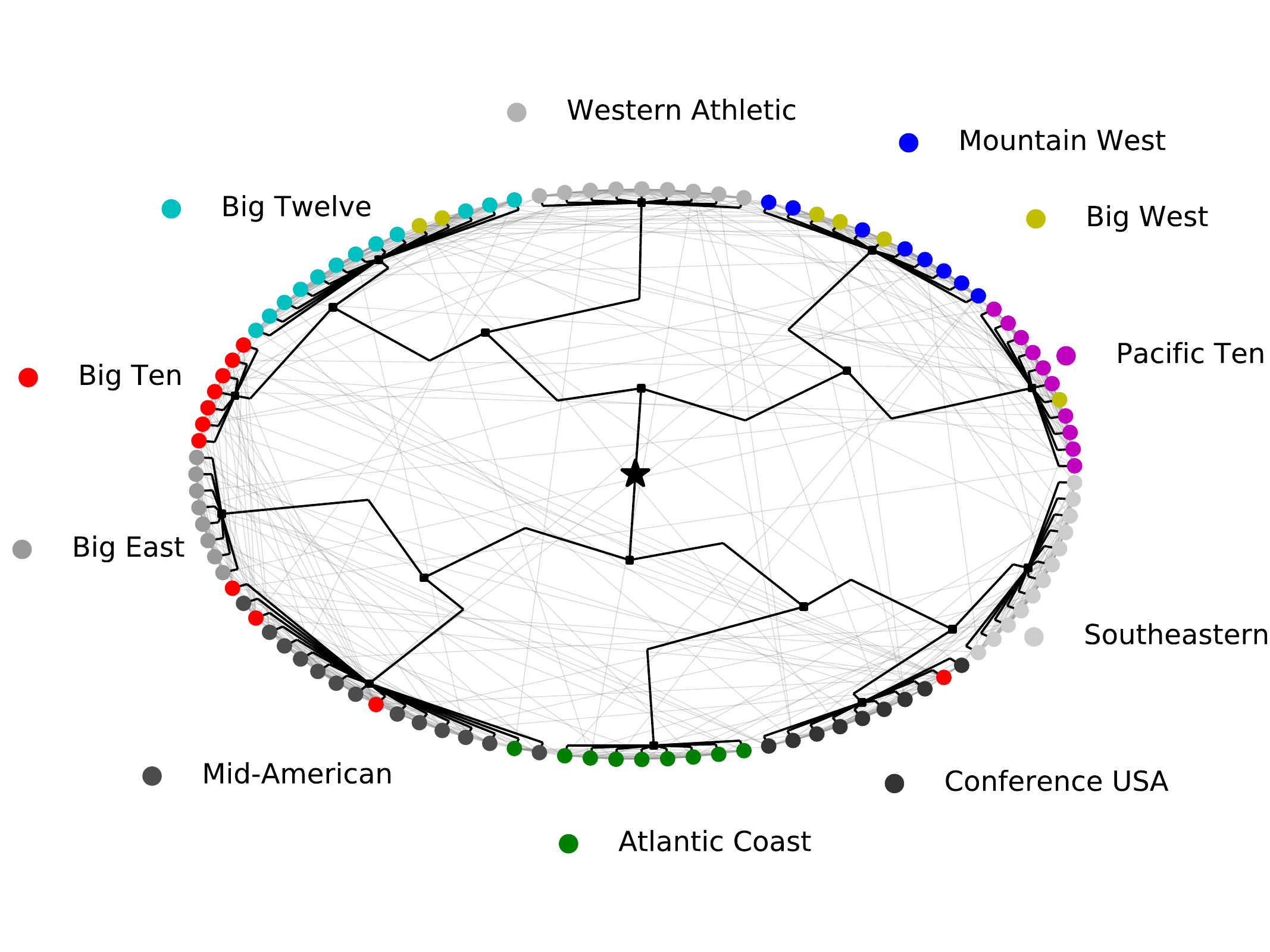}
     \caption{\textit{top-down}}
 \end{subfigure}
 \caption{ Output of \textit{bottom-up} and \textit{top-down}  algorithms on the \textit{football} data set. The colours correspond to conferences, and grey edges indicate having regular-season games between the two teams. The hierarchical tree is drawn in black, and its root is marked by a star symbol. }
 \label{fig:results_football}
\end{figure}

\end{document}